\documentclass[a4paper,UKenglish,cleveref, autoref, thm-restate]{tgdk-v2021}
\usepackage{graphicx} % Required for inserting images
\usepackage[final]{todonotes}
\usepackage[table,x11names]{xcolor}

\usepackage[inline]{enumitem}
\usepackage[commandnameprefix=always]{changes}

\setaddedmarkup{\color{black}#1}
\setdeletedmarkup{} % remove
\sethighlightmarkup{#1}
\usepackage{url}
\usepackage{multicol}
\usepackage{makecell}
\usepackage{url}

\newcommand{\fx}[0]{façade-x}
\newcommand{\FX}[0]{Façade-X}

\newcommand{\IRI}[0]{$\mathcal{I}$}
\newcommand{\BN}[0]{$\mathcal{B}$}
\newcommand{\LIT}[0]{$\mathcal{L}$}
\newcommand{\VAR}[0]{$\mathcal{V}$}

\newcommand{\iri}[1]{\texttt{<#1>}}
\newcommand{\bn}[1]{\texttt{\_:#1}}
\newcommand{\lit}[1]{\texttt{"#1"}}
\newcommand{\var}[1]{\texttt{?#1}}

\newcommand{\subject}[1]{\texttt{subj}_{#1}}
\newcommand{\predicate}[1]{\texttt{pred}_{#1}}
\newcommand{\object}[1]{\texttt{obj}_{#1}}

\newcommand{\ax}[1]{$\tt{#1}$}

\newcommand{\sfx}[1]{#1\textsuperscript{FX}}
\newcommand*{\mline}[1]{%
\begingroup
    \renewcommand*{\arraystretch}{1.1}%
   \begin{tabular}[c]{@{}>{\raggedright\arraybackslash}p{2cm}@{}}#1\end{tabular}%
  \endgroup
}

\title{Towards a theory of Façade-X data access: satisfiability of SPARQL basic graph patterns.} %TODO Please add

\author{Luigi {Asprino}}{Università Telematica San Raffaele, Rome, Italy \and \url{https://luigi-asprino.github.io} }{luigi.asprino@uniroma5.it}{https://orcid.org/0000-0003-1907-0677}{}%TODO mandatory, please use full name; only 1 author per \author macro; first two parameters are mandatory, other parameters can be empty. Please provide at least the name of the affiliation and the country. The full address is optional. Use additional curly braces to indicate the correct name splitting when the last name consists of multiple name parts.

\author{Enrico {Daga}\footnote{Corresponding author}}{Knowledge Media Institute (KMI), The Open University, United Kingdom \and \url{http://www.enridaga.net} }{enrico.daga@open.ac.uk}{https://orcid.org/0000-0002-3184-5407}{EU funded projects (a) MultiPod: grant agreement 101178821, \url{https://cordis.europa.eu/project/id/101178821}; and (b) Climate Misinformation Surveillance and Analysis with Multidimensional GIS CHIST-ERA project.}

\authorrunning{L. Asprino and E. Daga} %TODO mandatory. First: Use abbreviated first/middle names. Second (only in severe cases): Use first author plus 'et al.'

\Copyright{Luigi Asprino and Enrico Daga} %TODO mandatory, please use full first names. TGDK license is "CC-BY";  http://creativecommons.org/licenses/by/4.0/

\ccsdesc[500]{Theory of computation~Data structures design and analysis}
\ccsdesc[500]{Information systems~Graph-based database models}
\ccsdesc[500]{Information systems~Query languages for non-relational engines}
\ccsdesc[500]{Information systems~Mediators and data integration}
\keywords{\FX, RDF, SPARQL, Knowledge Graphs, Data Integration} %TODO mandatory; please add comma-separated list of keywords

\category{} %optional, e.g. invited paper

\relatedversion{} %optional, e.g. full version hosted on arXiv, HAL, or other respository/website
\Volume{?}
\Issue{?}
\Article{?}
\DateSubmission{July 2025; minor revisions submitted on December 2025}
\SectionAreaEditor{}
\usepackage{thmtools}

\newtheorem{algorithm}{Algorithm}[]

\theoremstyle{remark}
\newtheorem{ex}{Example}[]

\lstdefinestyle{logic}{
    breaklines=true,
    mathescape, 
    columns=fullflexible,
    basicstyle=\ttfamily,
    numbers=left,
    stepnumber=1,
    showspaces=false,
    showstringspaces=false,
    tabsize=4,
    numbersep=4pt,
    numberstyle=\footnotesize\ttfamily,
    framesep=16pt,
    framerule=1pt,
    xleftmargin=16pt,
    backgroundcolor=\color{white},
    commentstyle=\color{green}\itshape,
    keywordstyle=\color{blue}\bfseries\itshape,
    stringstyle=\color{red},
    escapechar=|
}
\lstdefinestyle{rdf}{
    breaklines=true,
    mathescape, 
    columns=fullflexible,
    basicstyle=\tt,
    numbers=left,
    stepnumber=1,
    showspaces=false,
    showstringspaces=false,
    tabsize=4,
    numbersep=4pt,
    numberstyle=\footnotesize\ttfamily,
    framesep=16pt,
    framerule=1pt,
    xleftmargin=16pt,
    backgroundcolor=\color{white},
    commentstyle=\color{green}\itshape,
    keywordstyle=\color{blue}\bfseries\itshape,
    stringstyle=\color{red},
    escapechar=|
}
\lstdefinestyle{algorithm}{
    breaklines=true,
    mathescape, 
    columns=fullflexible,
    basicstyle=\footnotesize\ttfamily,
    numbers=left,
    stepnumber=1,
    showspaces=false,
    showstringspaces=false,
    tabsize=4,
    numbersep=4pt,
    numberstyle=\footnotesize\ttfamily,
    framesep=16pt,
    xleftmargin=16pt,
    backgroundcolor=\color{white},
    commentstyle=\color{green}\itshape,
    keywordstyle=\color{blue}\bfseries\itshape,
    stringstyle=\color{red}
}
\colorlet{punct}{red!60!black}
\definecolor{background}{HTML}{EEEEEE}
\definecolor{delim}{RGB}{20,105,176}
\colorlet{numb}{magenta!60!black}

\lstdefinelanguage{json}{
    basicstyle=\normalfont\ttfamily,
    numbers=left,
    numberstyle=\scriptsize,
    stepnumber=1,
    numbersep=8pt,
    showstringspaces=false,
    breaklines=true,
    frame=lines,
    backgroundcolor=\color{background},
    literate=
     *{0}{{{\color{numb}0}}}{1}
      {1}{{{\color{numb}1}}}{1}
      {2}{{{\color{numb}2}}}{1}
      {3}{{{\color{numb}3}}}{1}
      {4}{{{\color{numb}4}}}{1}
      {5}{{{\color{numb}5}}}{1}
      {6}{{{\color{numb}6}}}{1}
      {7}{{{\color{numb}7}}}{1}
      {8}{{{\color{numb}8}}}{1}
      {9}{{{\color{numb}9}}}{1}
      {:}{{{\color{punct}{:}}}}{1}
      {,}{{{\color{punct}{,}}}}{1}
      {\{}{{{\color{delim}{\{}}}}{1}
      {\}}{{{\color{delim}{\}}}}}{1}
      {[}{{{\color{delim}{[}}}}{1}
      {]}{{{\color{delim}{]}}}}{1},
}

\begin{document}

\maketitle

%TODO mandatory: add short abstract of the document
\begin{abstract}
Data integration is the primary use case for knowledge graphs.
However, integrated data are not typically graphs but come in different formats, for example, CSV, XML, or a relational database.
\FX{} is a recently proposed method for providing direct access to an open-ended set of data formats. 
The method includes a  meta-model that specialises RDF to fit general data structures.
This model allows to express SPARQL queries targeting data sources with those structures.
Previous work formalised Façade-X and demonstrated how it can theoretically represent any format expressible with a context-free grammar, as well as the relational model.
A reference implementation, SPARQL Anything, demonstrates the feasibility of the approach in practice.
It is noteworthy that \FX{} utilises a fraction of RDF, and, consequently, not all SPARQL queries yield a solution (i.e. are satisfiable) when evaluated over a \FX{} graph.
In this article, we consolidate \FX{}  and we study the \textit{satisfiability} of basic graph patterns.
The theory is accompanied by an algorithm for deciding the satisfiability of basic graph patterns on \FX{} data sources.
\chadded{Furthermore, we provide extensive experiments with a proof-of-concept implementation, demonstrating practical feasibility, including with real-world queries.}
Our results pave the way for studying query execution strategies for \FX{} data access with SPARQL and supporting developers to build more efficient data integration systems for knowledge graphs.
\end{abstract}

\section{Introduction}\label{sec:introduction}Knowledge graphs use a graph-based model to integrate, manage, and extract value from diverse data sources at scale~\cite{hogan2021knowledge}. 
Thus, data integration is the primary application of Knowledge Graphs (KG).
The W3C standard SPARQL 1.1~\cite{w3c:sparql11} serves as the authoritative language for engaging with knowledge graphs serialised in RDF~\cite{w3c:rdf11}. 
SPARQL provides functionalities for selecting, filtering, and aggregating data into a table format. 
Additionally, SPARQL allows the projection of results into an RDF structure using the CONSTRUCT query type. 
Consequently, research has investigated the use of SPARQL across a variety of applications beyond mere querying, particularly in the integration of diverse data sources~\cite{cyganiak2015tarql,lefranccois2017sparql,michel2018sparql}.
In practice, methods often depend on enhancing SPARQL to retrieve data from non-RDF formats.

Recent studies~\cite{daga2021facade} suggest utilizing an intermediate RDF model called \FX{}, which allows its components to be seamlessly converted into RDF Knowledge Graphs. %\todo{FX components converted into files?} 
This approach enables the development of software that offers \textit{direct access} to source data as RDF, freeing knowledge engineers from the burden of managing diverse formats and associated languages they depend on\chadded{.}
\chadded{\FX{} uniquely addresses }\chdeleted{--} \textit{re-engineering}, which means transforming resources by minimizing domain-specific concerns and concentrating on the syntactical meta-model, and allow\chreplaced{s users}{ing them} to focus on \chdeleted{the semantic enhancement -- }\textit{remodelling}, that is, transforming the original domain model into a new representation~\cite{daga2021facade}.
%Façade-X provides consistent access to a diverse array of data formats \textit{as-if} they were RDF, including widely used formats like CSV, JSON, HTML, and XML. It has been effectively utilized in various intricate situations, such as extracting content from websites, merging data from different sources, and even constructing knowledge graphs from music scores~\cite{rattaknowledge}.
In theory, \FX{} can represent any format defined by a context-free grammar, as well as the relational data model~\cite{asprino2023knowledge}.

The \FX{} method informs the SPARQL Anything software and open source project, which also serves as its reference implementation\footnote{\url{http://sparql-anything.cc}}.  
%The method can be accessed via a command-line interface, through a server, and through code written in Java and Python~\cite{ratta2024pysparql}\footnote{See also the comprehensive online documentation: \url{https://sparql-anything.readthedocs.io/}}.\todo{Ha solo l'obiettivo di far pubblicità? o c'è un altra motivazione per uqesta frase?}

% \begin{figure}
%     \centering
%     \begin{lstlisting}[language=json]
% [
%   {
%     "name": "Friends",
%     "genres": [
%       "Comedy",
%       "Romance"
%     ],
%     "language": "English",
%     "status": "Ended",
%     "premiered": "1994-09-22",
%     "summary": "Follows the personal and professional lives of six twenty to thirty-something-year-old friends living in Manhattan.",
%     "stars": [
%       "Jennifer Aniston",
%       "Courteney Cox",
%       "Lisa Kudrow",
%       "Matt LeBlanc",
%       "Matthew Perry",
%       "David Schwimmer"
%     ]
%   },
%   {
%     "name": "Cougar Town",
%     "genres": [
%       "Comedy",
%       "Romance"
%     ],
%     "language": "English",
%     "status": "Ended",
%     "premiered": "2009-09-23",
%     "summary": "Jules is a recently divorced mother who has to face the unkind realities of dating in a world obsessed with beauty and youth. As she becomes older, she starts discovering herself.",
%     "stars": [
%       "Courteney Cox",
%       "David Arquette",
%       "Bill Lawrence",
%       "Linda Videtti Figueiredo",
%       "Blake McCormick"
%     ]
%   }
% ]
%     \end{lstlisting}
%     \caption{Example JSON file}
%     \label{fig:example:json}
% \end{figure}

\begin{ex}
\label{running_example}

\begin{figure}[h!]
    \centering
    \begin{lstlisting}[]
email,name,surname
laura@example.com,Laura,Grey
craig@example.com,Craig,Johnson
mary@example.com,Mary,Jenkins
jamie@example.com,Jamie,Smith
    \end{lstlisting}
    \caption{Example CSV file}
    \label{fig:example:csv}
\end{figure}

Suppose having the \chreplaced{CSV}{JSON} file shown in Figure~\ref{fig:example:csv} as input\footnote{Also available at \url{https://sparql-anything.cc/example1.csv)}}.
Using SPARQL Anything, a SPARQL user can retrieve the surnames of people whose first name is 'Laura' by running the query\chreplaced{ in Listing~\ref{ex:running:query}.}{:}
\begin{lstlisting}[language=sparql,caption={A query for retrieving the surname of people whose first name is 'Laura'.},label={ex:running:query}]
PREFIX xyz: <http://sparql.xyz/facade-x/data/>

SELECT ?surname WHERE {
  SERVICE <x-sparql-anything:https://sparql-anything.cc/example1.csv> {
    _:person xyz:surname ?surname .
    _:person xyz:name "Laura" .
  }
}
\end{lstlisting}
obtaining the following result:

\begin{table}[h!]
    \centering
    \begin{tabular}{| l |}
\hline
\textbf{surname} \\
\hline

\hline
"Grey" \\
\hline

\end{tabular}
\end{table}

\end{ex}

This system (accessible through a web server, CLI, Java, or Python API~\cite{ratta2024pysparql})  has been utilized in numerous practical applications and is gaining recognition from the knowledge graph community in both research and industry~\cite{fx:endorse2023}\footnote{See also the activity on the open-source project page on GitHub: \url{http://github.com/sparql-anything/sparql.anything}}. 
\chadded{The W3C Data Façades Community Group has been recently founded to develop \FX{} as a vendor-neutral, interoperable technology that can be implemented by various developers and organisations, including graph database vendors}\footnote{W3C Data Façades Community Group: \url{https://www.w3.org/community/facade-x/}.}

Currently, the execution of SPARQL queries on \FX{} data sources requires the materialisation of the source (in full or via a slicing approach~\cite{asprino2025materialisation}).
An open question is how to support the implementation of more efficient query engines.
It is noteworthy that \FX{} utilises a fraction of RDF, and, consequently, not all SPARQL queries yield a solution (i.e. are satisfiable) when evaluated over a \FX{} graph. 
\chadded{For example, the query described in the Example~\ref{running_example} is satisfiable, whereas the query in the Listing~\ref{ex:running:unsatisfiable} is not. This is because the \FX{} model does not allow for such a pattern: CSV data sources don't have any \texttt{rdf:type} triples; therefore, no results would be produced from its evaluation. These considerations can be made simply by looking at the query itself (rather than the input) and could save computational resources.}

\begin{lstlisting}[language=sparql,caption={A query for retrieving the surname of people whose first name is 'Laura'.},label={ex:running:unsatisfiable}]
PREFIX rdf: <http://www.w3.org/1999/02/22-rdf-syntax-ns#>

SELECT ?s WHERE {
  SERVICE <x-sparql-anything:https://sparql-anything.cc/example1.csv> {
    ?s rdf:_1 ?o . ?x rdf:type ?s .
  }
}
\end{lstlisting}

In this article, we study the satisfiability of basic graph patterns under \FX{}. 
In addition, we contribute an algorithm to annotate any SPARQL BGP with possible solution patterns, to determine whether a given BGP can have a solution within \FX.
We contribute:
\begin{itemize}
    \item A consolidated formal definition of  \FX{} and associated mappings to RDF;
    %\item Façade-X profiles for CSV, JSON, XML file formats;\todo{check}
    %\item Façade-X profile for the relational data;\todo{check}
    \item Characterisation of the satisfiability of SPARQL basic graph patterns queries within \FX{};
    \item Algorithm for assessing whether a SPARQL basic graph patterns is satisfiable under \FX{}.
    %\item A methodology for applying the algorithm to an open-ended set of Façade-X profiles\todo{check}.
\end{itemize}

The rest of the article is structured as follows.
The next section is dedicated to surveying related literature and contextualise our work.
Section~\ref{sec:preliminaries} introduces the formalisms and languages adopted in this article.
Section~\ref{sec:facadex} presents a consolidated version of \FX{} whose mapping to RDF is formalised in Section~\ref{sec:mapping}.
Finally, Section~\ref{sec:theory} studies the conditions under which a BGP is satisfiable in \FX{} which are exploited in the algorithm presented in Section~\ref{sec:algorithms}. 
Finally, Section~\ref{sec:experiments} provides experiments with a proof-of-concept implementation, which demonstrates the practical feasibility \chadded{and advantages} of our theory\chadded{, including an evaluation with real world queries}.
We conclude the article with Section~\ref{sec:conclusions}.

\section{Related work}\label{sec:relatedwork}We survey literature on Knowledge Graph Construction (KGC), i.e. systems and methods to onboard non-RDF data into RDF databases.
We then look into approaches to extend SPARQL.
Next, we focus on the evaluation of SPARQL queries, particularly basic graph patterns.
Furthermore, we review the problem of satisfiability of graph patterns outside the RDF area.
We conclude with a look at various approaches to query documents and provide a unifying view over heterogeneous resources.
\subsection{Knowledge graph construction}
Mapping languages for transforming heterogeneous files into RDF are represented by RML~\cite{iglesias2023rml,van2023declarative}, also specialised to support data cleaning operations~\cite{slepicka2015kr2rml}, and specific forms of data: relational~\cite{rodriguez2015efficient}, geospatial data~\cite{kyzirakos2014geotriples}, etc.
RML has been adopted as reference mapping language in most of the KGC systems, such as SDM-RDFizer~\cite{iglesias2020sdm}, Morph-CSV~\cite{chaves2021enhancing},and Morph-KGC~\cite{arenas2024morph}. 
Planning techniques have been studied to improve the efficiency of RML engines~\cite{iglesias2023scaling}.
% RML is also representative for general data integration approaches such as OBDA~\cite{xiao2018ontology}.
A recent work provides an algeabric foundation for RML based approaches~\cite{oo2025algebraic}, which establish a \textit{query} operator as a primitive component in the definition of a mapping. \FX{}~could be used as a common language to express queries over heterogeneous data sources \textit{formally}. %~\todo[]{Connect the Query variable to \FX graph patterns, and say that \FX can be used as a common abstraction to formally express queries over heterogeneus data sources.}

Authors of an alternative to RML, based on the RDF constraing language ShEx~\cite{garcia2020shexml}, stress the importance of making mappings usable by end users.
Indeed, literature acknowledges how these languages are built with machine-processability in mind~\cite{heyvaert2018declarative}, and how defining or even understanding the rules is not trivial. 
%When the user is proficient in SPARQL, that is one of the reasons for choosing \FX.
\FX{} provides a common abstraction allowing to express queries to non-RDF sources using basic graph patterns. 
\chadded{It is worth noting that,  although  abstractions targeting individual data formats are already available in the literature, \FX{} keeps a single formalism and a single data model (i.e. RDF) for all data formats.}
Therefore, it paved the way for querying non-RDF data in SPARQL and building KGs, by making plain CONSTRUCT queries with a SPARQL engine supporting~\FX{} data access.
\chadded{Moreover, since \FX{} is an abstract model that has been defined independently of RDF (as shown in Section~\ref{sec:facadex}, its specification is in predicate logic), it can potentially be applied to other languages (e.g. Datalog). Indeed, other query languages could benefit from the homogeneous overview of heterogeneous sources provided by \FX{}.
Although this analysis could be extended to other frameworks for representing and querying data, our focus is on SPARQL.}
%one of the ways knowledge engineers can onboard non-RDF data
%Therefore, it is a first-class citizen in the landscape of KGC methods. 
However, it is an open question \textit{how to efficiently execute SPARQL queries on \FX{} data sources} that cannot be fully loaded in memory via a materialisation approach~\cite{asprino2025materialisation}.
This problem motivates the present work.

\subsection{Approaches to extend SPARQL}
We examine methods for enhancing SPARQL. A common technique for augmenting SPARQL is to introduce custom functions that can be utilized in FILTER or BIND operators\footnote{ARQ includes a library of custom functions that facilitate aggregates, such as calculating the standard deviation of a set of values. ARQ functions: \url{https://jena.apache.org/documentation/query/extension.html} (accessed 15/12/2020).}. Query processing engines can enhance SPARQL through the use of so-called magic properties. This method involves defining custom predicates to guide specific actions during query execution\footnote{For instance, this allows for the specification of intricate fulltext searches over literal values. Query processors can outsource execution to a fulltext engine (e.g., Lucene) and produce a set of query results as triple patterns}. A comparable method is employed to incorporate XPath and XQuery in Virtuoso, which also enables the embedding of SQL functions within SPARQL to directly access the underlying relational model\footnote {Virtuoso Open Source. SPARQL inline in SQL: \url{http://docs.openlinksw.com/virtuoso/rdfsparqlinline/}.}. SPARQL Generate~\cite{lefranccois2017sparql} presents a technique for transforming data from varied sources into RDF by enhancing the SPARQL syntax with a new GENERATE operator~\cite{lefranccois2017sparql}. This method introduces two additional operators, SOURCE and ITERATOR. Custom functions execute ad-hoc operations on supported formats, utilizing XPath or JSONPath. Nevertheless, there are methods to augment SPARQL without modifying the standard syntax. 
For instance, BASIL variables permit the formulation of parametric queries by imposing a convention on SPARQL variable names~\cite{daga2015basilar,merono2016grlc}. 
SPARQL Anything leverages BASIL variables to facilitate parametric queries and file names. 
SPARQL Micro service~\cite{michel2018sparql} offers a framework that, based on an API mapping specification, encapsulates web APIs as SPARQL endpoints and employs a JSON-LD profile to convert the API's JSON responses into RDF. 
\FX{} adopts a similar minimalist strategy and extends SPARQL by \textit{overriding} the functionality of the SERVICE operator.

\subsection{Evaluation of SPARQL graph patterns}
%\todo[inline]{revise + extend}
Foundations of SPARQL query executions can be found in~\cite{hogan2021knowledge}.
Seminal work in the area include studying the semantics of the query language~\cite{perez2006semantics,arenas2009semantics} and optimise the execution of graph pattern matching~\cite{stocker2008sparql,hartig2007sparql,fletcher2008algebra,vidal2010efficiently}.
%In our work, we refer to SPARQL 1.1 as defined in the W3C specifications.
%More recently, SPARQL canonalisation is investigated in~\cite{salas2022semantics}.
Recent work tackles the efficient execution of SPARQL queries via translation in Datalog~\cite{angles2023sparqlog} or via combining logical and physical query plans~\cite{pang2023gfov}.

The satisfiability problem is undecidable in SPARQL 1.0 (and, therefore, in 1.1 that is its extension) but becomes decidable for basic graph patterns~\cite{zhang2016satisfiability}. 
%Although decidability is NP-complete in theory, it is not in practice, since SPARQL BGPs are typically small~\cite{zhang2016satisfiability}.
Although decidability is NP-complete in theory, decidability of SPARQL BGPs is tractable in practices as the SPARQL BGPs are typically small~\cite{zhang2016satisfiability}.
%\todo{un problema NP complemento non cambia classe se l'istanza su cui è risolto è più piccola, semmai diventa trattabile. nel commento lascio la frase precedente}
Our work starts from the observation that \FX{} RDF graphs are less expressive than general RDF data. %\todo{sostituito "different" con "less expressive"} 
This motivates the study of evaluating basic graph patterns in the  context of \FX{} graphs.
However, it is certainly possible that optimization techniques for RDF pattern matching can be also applied to \FX. 
Hence, in the present work, we investigate the nature of \FX{} graphs and their corresponding basic graph patterns in SPARQL.

\chadded{Another line of research analyses the satisfiability of BGPs in the context of conjunctive query (CQ) theory. Indeed, a SPARQL BGP can be viewed as a CQ over an RDF graph, thus inheriting the properties of CQs. A well-known result of Yannakakis~\cite{yannakakis1981algorithms} demonstrates that the evaluation of acyclic  CQs   is solvable in polynomial time. This result shows that the shape of a query fundamentally impacts the complexity of its evaluation.
This perspective is  relevant to Fa\c{c}ade-X. 
Since the Fa\c{c}ade-X data model only allows for rooted, acyclic structures, thus enforcing the tree-like nature of satisfiable BGPs. }

\subsection{Satisfiability of patterns in graphs}
The problem of efficiently performing graph matching operations is significant beyond SPARQL~\cite{gallagher2006matching,livi2013graph,yan2016short,bouhenni2021survey,deutsch2022graph}.
While the general problem of subgraph pattern matching is known to be NP-complete~\cite{bouhenni2021survey}, it seems that for a wide variety of statistics concerning trees, the expected time complexity for pattern matching is linear~\cite{STEYAERT198319}.
The satisfiability problem of tree patterns queries (within XPath and XQuery) is investigated in~\cite{lakshmanan2004testing}, where it is observed that some of the queries are NP-complete, while others can be resolved in polynomial time.

While \FX{} graphs can be seen as trees, when referring to static data sources, such as JSON documents or XML files (where two equal strings have different identity, because it derives from their position), their representation in RDF are not, because identical types and literals will be projected to unique RDF nodes.
Instead, \FX/RDF graphs are \textit{directed, acyclic, and rooted path graphs}.
In the present work, we focus on satisfiability of SPARQL basic graph patterns on \FX{} RDF sources, and contribute a theory and related algorithm for deciding whether a graph pattern can have a solution in a possible \FX{} RDF graph.

% Graph admissibility: Case generation and analysis by learning models \url{https://www.sciencedirect.com/science/article/abs/pii/S1877750324000747} (2024)
% They study the problem of deciding whether a graph is \textit{t-admissible}
% \textit{“A graph G is t-admissible if it has a spanning tree T such that any adjacent vertices in G are at a distance at most t in T, in which case T is called a tree t-spanner of G. We denote as the stretch index of G, or o(G), the smallest value t such that G is t-admissible. Despite having a polynomial decider for t = 2, the problem is NP-complete for t >= 4, while the t = 3 variant remains open after decades since its proposal.”}
% In my understanding, FX graphs are t-admissible by definition.
% Can be useful for formalising FX-admissibility.

% Echahed, Rachid, and Jean-Christophe Janodet. \textit{"Admissible Graph Rewriting and Narrowing."} In IJCSLP, pp. 325-342. 1998. 
% \textit{“Many different notations are used in the literature to investigate graph rewriting [8, 22]. The aim of this section is to recall briefly some key definitions in order to make easier the understanding of the paper.”} 
% Useful to learn how to specify graphs and rewriting strategies.

\subsection{Other approaches}
%Tools for direct data access or transformation.
Tools are available for automatically transforming data sources of several formats into RDF (Any23\footnote{\url{http://any23.apache.org/} (now retired).}, JSON2RDF\footnote{\url{https://github.com/AtomGraph/JSON2RDF}}, CSV2RDF\footnote{\url{http://clarkparsia.github.io/csv2rdf/}} to name a few).
While these  tools have a similar goal (i.e. enabling the user to access the content of a data source as if it was in RDF), the (meta)model used for generating the RDF data  highly depends on the input format, thus limiting the homogeneity of data generated from heterogeneous data formats.
In addition, none of those approaches are based on a common abstraction from heterogeneous formats.

A way to provide generalised data access is the list-of-lists approach. This is used by software libraries to onboard heterogeneus data structures, see, for example, Pandas\footnote{\url{https://hevodata.com/learn/pandas-load-json/}} or the Visidata tool\footnote{\url{https://www.visidata.org/}}. \FX{} follows a similar intuition, while targeting \textit{graphs} as the general meta-model, rather than tabular representations.

\FX{} use cases typically involve documents.
The approaches to querying documents have a tradition in computer science~\cite{clifton1995hyperfile,abiteboul1997querying,10.1007/978-1-4471-1525-0_25}, and many format-specific languages have been developed for that (XPath, JsonPath, and CSS selectors are the most popular).
\FX{} can be seen as a general approach to express queries over heterogeneus files.
Here, we focus on the satisfiability of basic graph patterns against \FX{} RDF graphs.
This analysis is the foundation for studying efficient execution strategies of SPARQL / \FX{} queries against serialized resources such as JSON or XML files.

% In summary, the objective of this work is to provide a foundational reference towards a theory of \FX{} data access and querying, focusing on core definitions and theorems, tackling the SPARQL graph pattern satisfiability problem in \FX{}.

%\todo[inline]{Clarify the use of math/pl notation}

\section{Preliminaries}\label{sec:preliminaries}We introduce the reader to the formalisms on which our theory is based, namely RDF, SPARQL and Predicate Logic.

\subsection{Predicate logic}
As the \fx{} model is defined in predicate logic, we provide the reader with a brief introduction to the main ingredients of this formalism.
However, providing a full description of the syntax and semantic of PL is not the purpose of this section.
For a complete definition of the syntax and semantic of PL we defer the reader to reference books, such as~\cite{Russell2020}.
Predicate Logic (PL) is the logic to speak about objects, which are the domain of discourse (universe $\mathcal{U}$).
PL is concerned about properties and relations about the objects of the universe which are expressed in form of predicates.
Intuitively, a PL formula is combination of PL terms (variables and constants) with logical operators $\neg, \wedge, \vee, \Rightarrow$ and quantifiers $\exists, \forall$.
An interpretation $\mathscr{I}$ of one or multiple PL formulas specifies what each predicate means, and the entities that can instantiate the variables. 
Given a set of predicates $P_1...P_n$ and function $f_1...f_m$ a PL interpretation is 
\begin{center}
    $\mathscr{I}=(\Delta^{\mathscr{I}},P_1^{\mathscr{I}}...P_n^{\mathscr{I}},f_1^{\mathscr{I}}...f_m^{\mathscr{I}})$
\end{center}
where $\Delta^{\mathscr{I}}\subseteq \mathcal{U}$ is the domain, $P_i^{\mathscr{I}}\subseteq \Delta^{\mathscr{I}} \times \cdots \times \Delta^{\mathscr{I}}$ (k-times where k is the arity of  $\mathcal{P}_i$) is the interpretation of the predicate $P_i$, $f_i^{\mathscr{I}}: \Delta^{\mathscr{I}} \times \cdots \times \Delta^{\mathscr{I}}\to\Delta^{\mathscr{I}}$ (k-times where k is the arity of  $f_i$) is the interpretation of the function $f_i$.

A PL open formula, also called \textit{query}, is a PL formula with free variables (i.e. not quantified).
Let \ax{Vars} be the set of variables, given an interpretation $\mathscr{I}$, an assignment is a function $\alpha: \tt{Vars} \to \Delta^{\mathscr{I}}$.

Given a formula $\phi$ with free variables $(x_1...x_n)$ and an interpretation $\mathscr{I}$, the answer to the query is the tuple $(a_1...a_n)$, where $a_i$ is an assignment of a free variable to an object of the domain of discourse $\Delta^{\mathscr{I}}$,   such that $\mathscr{I}, (a_1...a_n) \models \phi$.
We introduce $\mathcal{L}$ as the set of all possible queries, i.e. the query language, and the function \ax{eval} which given a query $\phi$ and an interpretation $\mathscr{I}$ returns the list of assignments that satisfy $\phi$ over $\mathscr{I}$.

\subsection{RDF and SPARQL}
% Given an \FX{} SPARQL query our objective is to decide if its basic graph patterns are satisfiable. % in order to streamline the query evaluation process.
% To this end, this Section identifies all possible RDF graph patterns. 
% Then, in Section~\ref{sec:theory}, we introduce the \FX{} model and we enumerate characterise the conditions under which a basic graph pattern is satisfiable on a \FX{} graph.

% \textbf{Notation.}
In this section, we use relational algebra notation to define the elements of the universe under consideration, indicated by $\mathcal{U}$, namely RDF graphs, SPARQL queries, and resource, data sources and values they contain that will be introduced in Section~\ref{sec:theory}.
We also use predicate logic to define the conditions that apply to the elements of the universe.

%\subsection{RDF and SPARQL}

We start from the basic notions of RDF~\cite{w3c:rdf11} and SPARQL~\cite{w3c:sparql11}.
We defer the reader to~\cite{hogan2021knowledge} and to the corresponding documentation for a complete description of these standards.
Formally, let $\mathcal{I} \subseteq \mathcal{U}$, $\mathcal{B} \subseteq \mathcal{U}$, and $\mathcal{L} \subseteq \mathcal{U}$ be infinite sets of IRIs, blank nodes, and literals.
The sets are assumed to be pairwise disjoint and we will collectively
refer to them  as \emph{RDF terms}. 
A tuple $(s, p, o) \in \mathcal{T}$ (with $\mathcal{T}$ is the set $ (\mathcal{I}\cup \mathcal{B})\times(\mathcal{I})\times(\mathcal{I}\cup \mathcal{B} \cup \mathcal{L})$) is called \emph{(RDF) triple} and we say $s$ is the \emph{subject} of the triple, $p$ the \emph{predicate}, and $o$ the \emph{object}. 
An \emph{RDF graph} is a set of RDF triples, whereas an \emph{RDF dataset} is a collection of named RDF graphs, each one identified by IRI, and a default RDF graph.
\textit{RDF graphs} can be expressed as a set of quads, i.e. elements of the set $\mathcal{I}\times\mathcal{T}$.

SPARQL is based on the idea of defining patterns to be matched against an input RDF dataset.
Formally, considering the set of variables $\mathcal{V} \subseteq \mathcal{U}$, disjoint from the previously defined $\mathcal{I}$, $\mathcal{B}$, and $\mathcal{L}$, a \textit{triple pattern} is a tuple of the form $(s, p, o) \in (\mathcal{I}\cup \mathcal{B}\cup \mathcal{V})\times(\mathcal{I}\cup \mathcal{V})\times(\mathcal{I}\cup \mathcal{B} \cup \mathcal{L}\cup \mathcal{V})$\footnote{We are aware that the BGP specification in~\cite{w3c:sparql11} includes literals in subject position. However, such patterns won't have a solution in RDF graphs, as defined in~\cite{w3c:rdf11}.}. 
All the possible SPARQL triple patterns are enumerated in Table~\ref{tab:triple_patterns}.
A \emph{basic graph pattern (BGP)} is a set of triple patterns.
%We introduce the function $BGP$ that given a query returns the triples contained in its BGP\todo[inline]{Consider the order of the triples?}

A SPARQL query $Q$ is  composed of the following components:
\begin{enumerate*}[label=(\roman*)]
    \item the query type (i.e. SELECT, ASK, DESCRIBE, CONSTRUCT); 
    \item the dataset clause;
    \item the graph pattern (recursively defined as being a BGP or the result of the composition of one or more graph patterns through one of several SPARQL operators that modify and combine the  obtained results);
    \item the solution modifiers (i.e. LIMIT, GROUP BY, OFFSET).
\end{enumerate*}

This work focuses solely on the graph pattern itself, since the other components can be evaluated as defined in SPARQL 1.1.
Consider a graph pattern consisting of two joined unordered triple patterns with a join condition defined on a single node. 
There are 6 possible joins matching nodes of the two triple patterns.
The joins are listed in Table~\ref{tab:sparql_joins}.

More complex join conditions involving multiple nodes can be derived by combining these joins.
Most common graph structures can be classified according to a set of graph \textit{motifs}~\cite{Asprino2023how,Salas2023}.%\todo[]{Where is the idea of motifs coming from? Can we cite anyone? Why these motifs?}
These are reported in Table~\ref{tab:bgp_motifs}.
For brevity, the motifs only contain nodes as variables, but these can be any valid combination of node types.

%\subsection{RDF Graphs}

%\subsection{RDF Basic Graph Pattern (BGP)}
%Triple, variables, costants, and the universal quantifier

%\subsection{Classes of RDF Graph Patterns}

%Joins. Enumerate the different types of joins, considering variables, constants (Literal or IRI), and the universal quantifier []

%Conjunctive queries:

%- Star-patterns

%- Snow-flakes

%- cycles

%- paths (Path graphs)

%- cliques

%\subsection{Satisfiability of RDF Basic Graph Patterns (BGP)}

\begin{table}[t]
    \centering
    \caption{The list of all possible SPARQL triple patterns.}
    \label{tab:triple_patterns}

    \begin{minipage}{.45\linewidth}
\begin{tabular}{cccc}
    \hline
       \textbf{Id}  & \textbf{Pattern} & \textbf{Example} \\\hline\hline

\VAR{}\VAR{}\VAR{} & \VAR{} \VAR{} \VAR{} & \var{s} \var{p} \var{o}  \\
\VAR{}\VAR{}\IRI{} & \VAR{} \VAR{} \IRI{} & \var{s} \var{p} \iri{o}  \\
\VAR{}\VAR{}\BN{} & \VAR{} \VAR{} \BN{} & \var{s} \var{p} \bn{o}  \\
\VAR{}\VAR{}\LIT{} & \VAR{} \VAR{} \LIT{} & \var{s} \var{p} \lit{o}  \\
\VAR{}\IRI{}\VAR{} & \VAR{} \IRI{} \VAR{} & \var{s} \iri{p} \var{o}  \\
\VAR{}\IRI{}\IRI{} & \VAR{} \IRI{} \IRI{} & \var{s} \iri{p} \iri{o}  \\
\VAR{}\IRI{}\BN{} & \VAR{} \IRI{} \BN{} & \var{s} \iri{p} \bn{o}  \\
\VAR{}\IRI{}\LIT{} & \VAR{} \IRI{} \LIT{} & \var{s} \iri{p} \lit{o}  \\
\IRI{}\VAR{}\VAR{} & \IRI{} \VAR{} \VAR{} & \iri{s} \var{p} \var{o}  \\
\IRI{}\VAR{}\IRI{} & \IRI{} \VAR{} \IRI{} & \iri{s} \var{p} \iri{o}  \\
\IRI{}\VAR{}\BN{} & \IRI{} \VAR{} \BN{} & \iri{s} \var{p} \bn{o}  \\
\IRI{}\VAR{}\LIT{} & \IRI{} \VAR{} \LIT{} & \iri{s} \var{p} \lit{o}  \\

\hline
    \end{tabular}
    \end{minipage}
    \begin{minipage}{.45\linewidth}
\begin{tabular}{cccc}
    \hline
       \textbf{Id}  & \textbf{Pattern} & \textbf{Example}  \\\hline\hline

\IRI{}\IRI{}\VAR{} & \IRI{} \IRI{} \VAR{} & \iri{s} \iri{p} \var{o}  \\
\IRI{}\IRI{}\IRI{} & \IRI{} \IRI{} \IRI{} & \iri{s} \iri{p} \iri{o}  \\
\IRI{}\IRI{}\BN{} & \IRI{} \IRI{} \BN{} & \iri{s} \iri{p} \bn{o}  \\
\IRI{}\IRI{}\LIT{} & \IRI{} \IRI{} \LIT{} & \iri{s} \iri{p} \lit{o}  \\
\BN{}\VAR{}\VAR{} & \BN{} \VAR{} \VAR{} & \bn{s} \var{p} \var{o}  \\
\BN{}\VAR{}\IRI{} & \BN{} \VAR{} \IRI{} & \bn{s} \var{p} \iri{o}  \\
\BN{}\VAR{}\BN{} & \BN{} \VAR{} \BN{} & \bn{s} \var{p} \bn{o}  \\
\BN{}\VAR{}\LIT{} & \BN{} \VAR{} \LIT{} & \bn{s} \var{p} \lit{o}  \\
\BN{}\IRI{}\VAR{} & \BN{} \IRI{} \VAR{} & \bn{s} \iri{p} \var{o}  \\
\BN{}\IRI{}\IRI{} & \BN{} \IRI{} \IRI{} & \bn{s} \iri{p} \iri{o}  \\
\BN{}\IRI{}\BN{} & \BN{} \IRI{} \BN{} & \bn{s} \iri{p} \bn{o}  \\
\BN{}\IRI{}\LIT{} & \BN{} \IRI{} \LIT{} & \bn{s} \iri{p} \lit{o}  \\

\hline
    \end{tabular}
    \end{minipage}
    
\end{table}

\begin{table}[t]
    \centering
    
    \caption{The list of all possible single-node joins involving two SPARQL triple patterns.}
    \label{tab:sparql_joins}
    \begin{tabular}{cccc}
        \hline
        \textbf{Id} & \textbf{Join} & \textbf{Example} \\\hline\hline

% \var{s1} \var{p1} \var{o1} . \var{s2} \var{p1} \var{o2}

$S \bowtie P$ & $\subject{1} = \predicate{2}$ &  \var{j} \var{p1} \var{o1} . \var{s2} \var{j} \var{o2}  \\
$P \bowtie O$ & $\predicate{1} = \object{2}$ &  \var{s1} \var{j} \var{o1} . \var{s2} \var{p1} \var{j}  \\
$S \bowtie S$ & $\subject{1} = \subject{2}$ &  \var{j} \var{p1} \var{o1} . \var{j} \var{p1} \var{o2} \\
$P \bowtie P$ & $\predicate{1} = \predicate{2}$ &  \var{s1} \var{j} \var{o1} . \var{s2} \var{j} \var{o2} \\
$S \bowtie O$ & $\subject{1} = \object{2}$ &   \var{j} \var{p1} \var{o1} . \var{s2} \var{p1} \var{j} \\
$O \bowtie O$ & $\object{1} = \object{2}$ &  \var{s1} \var{p1} \var{j} . \var{s2} \var{p1} \var{j} \\

        \hline
         
    \end{tabular}
\end{table}

\begin{table}[t]
    \centering
    
    \caption{A list of possible SPARQL BGP motifs.}
    \label{tab:bgp_motifs}
    \begin{tabular}{c||c|c|c|c}
        \hline
        \textbf{BGP Motif Name} & 
        \textbf{Path} & 
        \textbf{Cycle/Clique} &
        \textbf{Star} &
        \textbf{Snowflake} 
        \\

        \textbf{Motif} & 
        \mline{\tt{?s ?p ?o . ?o ?p1 ?o1}} & 
        \mline{\tt{?s ?p ?o . ?o ?p1 ?s}} &
        \mline{\tt{?s ?p ?o . ?s ?p1 ?o1 . ?s ?p2 ?o2 }} &
        \mline{\tt{?s ?p ?o . ?s ?p1 ?o1 . ?s ?p2 ?o2 . ?o ?p3 ?o3 . ?o ?p4 ?o4 }}
        
        \\
        \hline
    \end{tabular}
\end{table}

\section{A consolidated version of \FX}\label{sec:facadex}\FX{} is the meta-model that results from the \textit{abstraction} of all the basic data structures used to represent the source data formats and the \textit{combination} of these data structures into a unified model. 
In this section, we present a consolidated version of \FX{}. 
%and its mapping to RDF. 
%We then provide insights on the properties of \FX{} graphs.
%Furthermore, we list inferences and axioms derivable from the definition.
%Next, we define mappings to RDF.

\subsection{Resources and data sources}
Before detailing the \FX{}, it is useful to formalise elements of its application context, i.e. the fact that data is serviced and shaped within a digital artifact of sort.
We rely on the generic notion of \textit{resource} to refer to a digital artifact that wraps some data\footnote{To not be confused with the concept of Resource in RDFS (that is the most general type of RDF term)~\cite{w3c:rdfs11}}. 
\chadded{In the Example~\ref{running_example}, the Resource is the file identified by the URL \url{https://sparql-anything.cc/examples/simple.csv}, while the content of the CSV file is the DataSource.}
We say that a resource contains one or more \textit{data sources}, to refer to actual data contained in them. 
Intuitively, a CSV is a file (then, a resource) that includes some tabular data (hence, a single data source). 
A Spreadsheet is a file (then, a resource) containing many tabs, each one of them including tabular data (hence, possibly multiple data sources).

We introduce the unary predicates \texttt{Resource} and  \texttt{DataSource} and the binary predicate \texttt{includes}. 
These predicates must comply with the following conditions: 
\begin{enumerate*}[label=\textit{(\roman*)}]
    \item Resources includes at least one data source;
    %\item Data sources have a unique name (identifier);
    \item A data source must be included in one and only one resource. \chadded{See Listing~\ref{def:fx:resources}.}
\end{enumerate*}
\begin{figure}\begin{lstlisting}[style=logic,caption={Formal definition of  resources and data sources.},label={def:fx:resources}]
$\forall \tt{r,ds}. \tt{includes}(r,ds) \Rightarrow \tt{Resource}(r) \wedge \tt{DataSource}(ds)$|\label{def:fx:resources:1}|
$\forall \tt{r}. \exists \tt{ds}. \tt{Resource}(r) \Rightarrow \tt{includes}(r,ds)$|\label{def:fx:resources:2}|
$\forall\tt{ds}.\exists\tt{r}.\tt{DataSource(ds)} \Rightarrow \tt{includes(r,ds)}$|\label{def:fx:resources:3}|
$\forall\tt{ds,r_1,r_2}.\tt{includes(r_1, ds}) \wedge \tt{includes(r_2, ds}) \Rightarrow r_1 = r_2$|\label{def:fx:resources:4}|
\end{lstlisting}\end{figure}

%$\forall \tt{ds,n}.\tt{hasName}(ds,n)\Rightarrow \tt{DataSource}(ds) \wedge \tt{Name}(n)$
%$\forall \tt{ds}. \exists \tt{n}. \tt{DataSource}(ds) \Rightarrow \tt{hasName}(ds,n)$
%$\forall \tt{r,ds_1,ds_2,n}.\tt{includes(r, ds_1)} \wedge \tt{includes(r, ds_2)} \wedge \tt{hasName(ds_1, n)} \wedge\tt{hasName(ds_2, n)}$$ \Rightarrow \tt{ds_1 = ds_2}$
%$\forall\tt{ds,n_1,n_2}.\tt{hasName(ds, n_1}) \wedge \tt{hasName(ds, n_2}) \Rightarrow n_1 = n_2$

\subsection{\FX}\label{sec:fxmodeldefition}
We overview the basic data structures needed for representing the content of common resources including CSV, JSON, YAML, XML, HTML, Text, Markdown, Relational Databases, and Spreadsheets (e.g. XLS and XLSx)\footnote{See also~\url{https://sparql-anything.readthedocs.io/stable/\#supported-formats}.}.
The framework proposed in~\cite{asprino2023knowledge} represents any data source as a combination of only three knowledge representation primitives, namely: sets of key-data pairs (\textit{maps}), ordered sequence of elements (\textit{lists}), and unary predicates (\textit{types}).
In fact, this framework interprets any data source as a sequence of values, regardless of its format, and then, arranges the values of the sequence in multiple structures according to their serialisation format.
Crucially, all these structures can be captured using the three knowledge representation primitives mentioned above.
In addition, the framework introduces the notion of \textit{Container} as an abstraction of both lists and maps through the notion of \textit{Slots}.
A \textit{Slot} is an allotted place for an object, namely a primitive value or another container, \textit{``contained''} in a \textit{Container}.
\textit{Container} may have a type.
\FX{} is a representation of a data source in terms of  Containers, Types, Slots, and Values.
In this Section we provide a formal definition of the \FX{}.
\chadded{Figure~\ref{fig:fx_model_diagram} shows a diagrammatic representation of the \FX{} that provides an informal overview of the model.}

\begin{figure}[ht]
    \centering
    \includegraphics[width=0.7\linewidth]{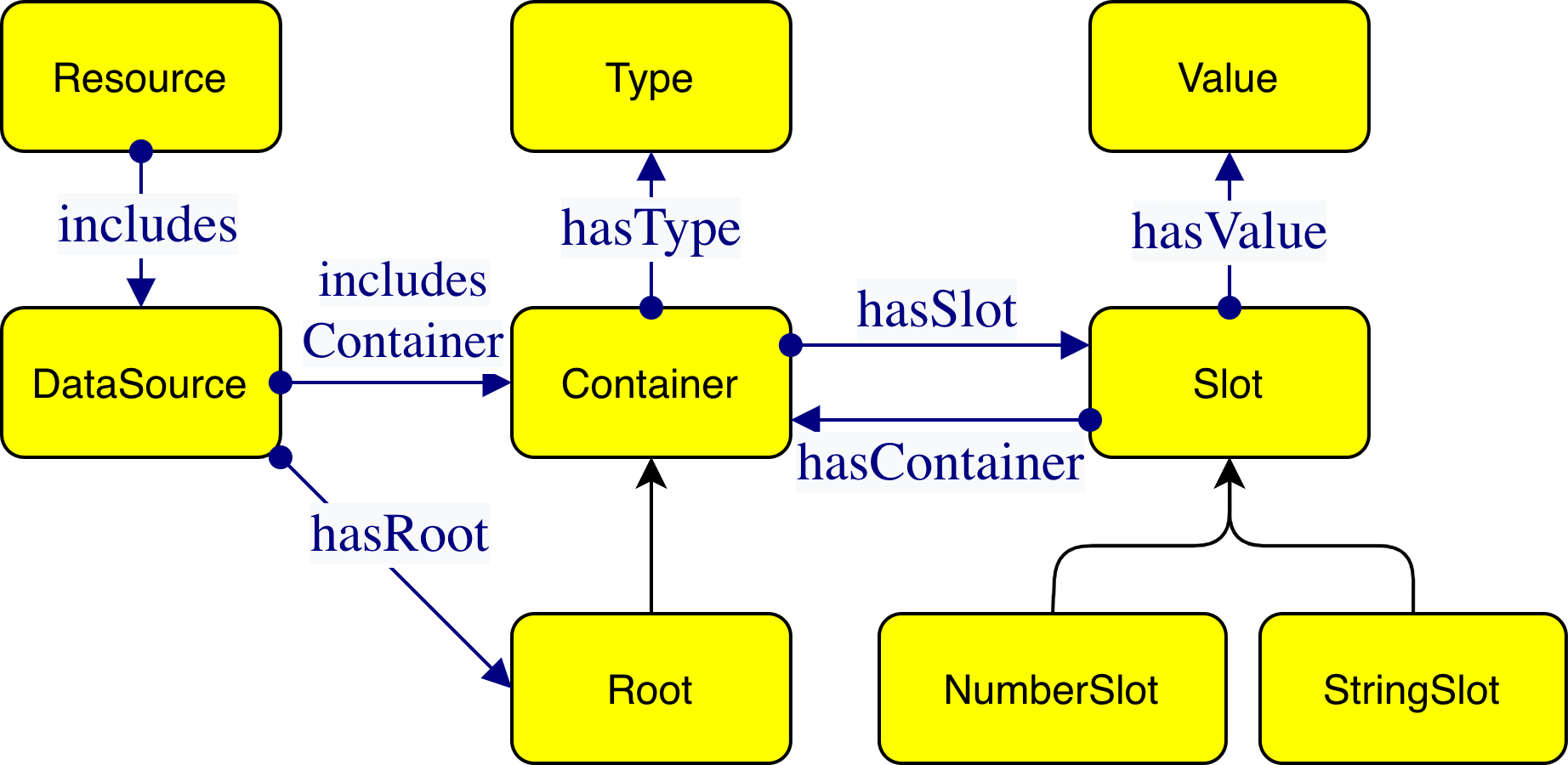}
    \caption{A diagrammatic representation of the \FX{} model. Unlabelled arcs represent subsumtion (e.g. Root is a type of Container).}
    \label{fig:fx_model_diagram}
\end{figure}

\begin{sloppypar}We introduce the unary predicates: \texttt{Container}, \texttt{Slot}, \texttt{StringSlot}, \texttt{NumberSlot} and \texttt{Value}.
We also introduce the binary predicates \texttt{hasSlot}, \texttt{hasValue}, \texttt{includesContainer} and \texttt{hasContainer}.
Containers are collections of unique key/value pairs (slots) where: the key of the slot is unique in the collection and can be either a number (i.e. \texttt{NumberSlot}) or a sequence of alphanumeric characters (i.e. \texttt{StringSlot});  the value can be either a primitive value or another container. \chadded{See Listing~\ref{def:fx:basic}}.\end{sloppypar}

\begin{figure}\begin{lstlisting}[style=logic,caption={Formal definition of containers, slots, and values.},label={def:fx:basic}]
$\tt{\forall c,s. hasSlot(c,s) \Rightarrow Container(c) \wedge Slot(s)}$|\label{def:fx:basic:1}|
$\forall \tt{s,v}. \tt{hasValue(s,v)}\Rightarrow \tt{Slot(s)} \wedge \tt{Value(v)}$|\label{def:fx:basic:2}|
$\forall \tt{s,c}. \tt{hasContainer(s,c)}\Rightarrow \tt{Slot(s)} \wedge \tt{Container(v)}$|\label{def:fx:basic:3}|
$\forall \tt{s.} \tt{NumberSlot(s)} \Rightarrow \tt{Slot(s)}$|\label{def:fx:basic:4}|
$\forall \tt{s.} \tt{StringSlot(s)} \Rightarrow \tt{Slot(s)}$|\label{def:fx:basic:5}|
$\nexists \tt{s.} \tt{StringSlot(s)} \wedge \tt{NumberSlot(s)}$|\label{def:fx:basic:6}|
$\forall \tt{s.} \tt{Slot(s)} \Rightarrow \tt{NumberSlot(s)}\vee\tt{StringSlot(s)}$|\label{def:fx:basic:7}|
$\tt{\forall ds,c. includesContainer(ds,c) \Rightarrow DataSource(ds) \wedge Container(c)}$|\label{def:fx:basic:8}|
$\tt{\forall c. \exists ds. Container(c) \Rightarrow includesContainer(ds,c)}$|\label{def:fx:basic:9}|
$\tt{\forall ds_1, ds_2, c. includesContainer(ds_1, c) \wedge includesContainer(ds_2, c) \Rightarrow ds_1 = ds_2 }$|\label{def:fx:basic:10}|
\end{lstlisting}\end{figure}

\paragraph*{}\textbf{Terminology.} We say that a container \ax{c} \textit{contains} a slot \ax{s} if \ax{(c,s)} is in the interpretation of the predicate \ax{hasSlot}. Moreover, we say that a slot \ax{s} \textit{holds} a value \ax{v} (or a container \ax{c}) if  \ax{(s,v)} (resp. \ax{(s,c)}) is in the interpretation of \ax{hasValue} (resp. \ax{hasContainer}). In this case we can also say that \ax{v} (resp. \ax{c}) is assigned to a slot. 
Finally, we say that a container \ax{c} \textit{recursively contains} a container \ax{c'} if there exists a sequence \ax{s_0,...,s_n} and containers \ax{c_1,...,c_n} such that \ax{(c,s_0),(c_1, s_1)...(c_n,s_n)} is in the interpretation of \ax{hasSlot} and \ax{(s_0,c_1),(s_1, c_2),...,(s_n,c')} is in the interpretation of \ax{hasContainer}.

\chadded{A formal definition of  the properties of slots is provided in Listing~\ref{def:fx:slots}.}
Values must be assigned to a slot (cf. \ref{def:fx:slots}.\ref{lst:fx:slots:1}). %Not single!
Slots must be contained by a single container (cf. \ref{def:fx:slots}.\ref{lst:fx:slots:3}, \ref{def:fx:slots}.\ref{lst:fx:slots:4}).
Slots must hold either a single container or a single value (cf.  \ref{def:fx:slots}.\ref{lst:fx:slots:5}, \ref{def:fx:slots}.\ref{lst:fx:slots:6}, \ref{def:fx:slots}.\ref{lst:fx:slots:2}, \ref{def:fx:slots}.\ref{lst:fx:slots:8}).
There cannot exists a container c and a slot s such that c contains s and s holds c.  (cf.  \ref{def:fx:slots}.\ref{lst:fx:slots:7}).  
Each container cannot recursively contain itself through a sequence of slots and containers (cf.  \ref{def:fx:slots}.\ref{lst:fx:slots:9}).
\chadded{Each container can only be held by a single slot (it is not possible for two slots to hold the same container)(cf.  \ref{def:fx:slots}.\ref{lst:fx:slots:10})}.
%I.e. a container is a collection of slots holding either other containers or values.
%A slot cannot have assigned a container if that container owns the slot.
\begin{figure}\begin{lstlisting}[style=logic,caption={Formal definition of  the properties of slots.},label={def:fx:slots}]
$\tt{\forall v \exists s. Value(v) \Rightarrow hasValue(s,v)}$|\label{lst:fx:slots:1}|
$\tt{\forall s \exists c. Slot(s) \Rightarrow hasSlot(c,s)}$|\label{lst:fx:slots:3}|
$\tt{\forall c_1, c_2, s. hasSlot(c_1,s) \wedge hasSlot(c_2, s) \Rightarrow c_1 = c_2}$|\label{lst:fx:slots:4}|
$\tt{\forall s \exists x. Slot(s) \Rightarrow hasContainer(s,x) \vee hasValue(s,x)}$|\label{lst:fx:slots:5}|
$\tt{\nexists s, x. hasContainer(s,x) \wedge hasValue(s,x)}$|\label{lst:fx:slots:6}|
$\tt{\forall c_1, c_2, s. hasContainer(s, c_1) \wedge hasContainer(s, c_2) \Rightarrow c_1 = c_2}$|\label{lst:fx:slots:8}|
$\tt{\forall v_1, v_2, s. hasValue(s, v_1) \wedge hasValue(s, v_2) \Rightarrow v_1 = v_2}$|\label{lst:fx:slots:2}|
$\tt{\nexists c, s. hasSlot(c, s) \wedge hasContainer(s, c)}$|\label{lst:fx:slots:7}|
$\tt{\forall c \nexists s_1...s_n, c_1...c_{n}.hasSlot(c, s_1) \wedge hasContainer(s_1, c_1)\wedge\cdots}$$\cdots\tt{\wedge hasSlot(c_{n-1}, s_{n}) \wedge hasContainer(s_n, c)}  $|\label{lst:fx:slots:9}|
$\tt{\forall c, s_1, s_2. hasSlot(s_1,c) \wedge hasSlot(s_2, c) \Rightarrow s_1 = s_2}$|\label{lst:fx:slots:10}|
\end{lstlisting}\end{figure}

Containers can have types (cf. \ref{def:fx:types}.\ref{def:fx:types:1}). Every type must be assigned to a container (cf. \ref{def:fx:types}.\ref{def:fx:types:2}). To this end we introduce the unary predicate \texttt{Type} and the binary predicate \texttt{hasType}. 
\chadded{A formal definition of types is provided in Listing~\ref{def:fx:types}.}
~\begin{lstlisting}[style=logic,caption={Formal definition of types.},label={def:fx:types}]
$\tt{\forall c,t. hasType(c, t) \Rightarrow Container(c) \wedge Type(t)}$|\label{def:fx:types:1}|
$\tt{\forall t. \exists c. Type(t) \Rightarrow hasType(c,t)}$|\label{def:fx:types:2}|
\end{lstlisting}

 The unary predicates Container, Slot, Value, and Type are pairwise disjoint.
\chadded{Disjointness axioms are provided in Listing~\ref{def:fx:disjointness}.}

\noindent\begin{minipage}{.47\textwidth}
        \centering
        \begin{lstlisting}[style=logic,caption={Definition of the disjointness relation.},label={def:fx:disjointness}]
$\tt{\forall x. Container(x) \Rightarrow \neg Slot(x)}$ |\label{def:fx:disjointness:1}|
$\tt{\forall x. Container(x) \Rightarrow \neg Value(x)}$ |\label{def:fx:disjointness:2}|
$\tt{\forall x. Container(x) \Rightarrow \neg Type(x)}$ |\label{def:fx:disjointness:3}|
$\tt{\forall x. Slot(x) \Rightarrow \neg Container(x)}$ |\label{def:fx:disjointness:4}|
$\tt{\forall x. Slot(x) \Rightarrow \neg Value(x)}$ |\label{def:fx:disjointness:5}|
$\tt{\forall x. Slot(x) \Rightarrow \neg Type(x)}$ |\label{def:fx:disjointness:6}|
\end{lstlisting}
    \end{minipage}%
    \begin{minipage}{0.5\textwidth}
        \centering
        \vspace{0.65cm}
        \begin{lstlisting}[style=logic,firstnumber=7]
$\tt{\forall x. Value(x) \Rightarrow \neg Container(x)}$ |\label{def:fx:disjointness:7}|
$\tt{\forall x. Value(x) \Rightarrow \neg Slot(x)}$ |\label{def:fx:disjointness:8}|
$\tt{\forall x. Value(x) \Rightarrow \neg Type(x)}$ |\label{def:fx:disjointness:9}|
$\tt{\forall x. Type(x) \Rightarrow \neg Container(x)}$ |\label{def:fx:disjointness:10}|
$\tt{\forall x. Type(x) \Rightarrow \neg Slot(x)}$ |\label{def:fx:disjointness:11}|
$\tt{\forall x. Type(x) \Rightarrow \neg Value(x)}$ |\label{def:fx:disjointness:12}|
\end{lstlisting}
    \end{minipage}

%$\forall(x)$  $\Rightarrow$ Container(x) $\oplus$ Slot(x)  $\oplus$ Value(x) $\oplus$ Type(x)

For each data source, there exists one and only one Root container (cf. \ref{def:fx:root}.\ref{def:fx:root:1}, \ref{def:fx:root}.\ref{def:fx:root:7}, \ref{def:fx:root}.\ref{def:fx:root:3}). 
The Root container cannot be assigned to a slot (cf.~\ref{def:fx:root}.\ref{def:fx:root:6}).
Containers that are not root must be assigned to a slot (cf.~\ref{def:fx:root}.\ref{def:fx:root:hascontainer}).
To identify the root container we introduce the unary predicate \texttt{Root} as specialisation of Container (\ref{def:fx:root}.\ref{def:fx:root:5}) and the binary predicate \texttt{hasRoot} that connects a data source to its root container (cf.~\ref{def:fx:root}.\ref{def:fx:root:2}, \ref{def:fx:root}.\ref{def:fx:root:4}).
\chadded{A formal definition of the Root container is provided in Listing~\ref{def:fx:root}.}
%exists that holds all data, which we name the Root container.
\begin{figure}\begin{lstlisting}[style=logic,caption={Formal defition of Root container.},label={def:fx:root}]
$\tt{\forall ds. \exists r. DataSource(ds) \Rightarrow hasRoot(ds,r)}$|\label{def:fx:root:1}|
$\tt{\forall ds, r. hasRoot(ds,r) \Rightarrow DataSource(ds) \wedge Root(r)}$|\label{def:fx:root:2}|
$\tt{\forall r. \exists ds. Root(r) \Rightarrow hasRoot(ds, r)}$|\label{def:fx:root:atleastoneroot}\label{def:fx:root:3}|
$\tt{\forall ds,r. hasRoot(ds, r) \Rightarrow includesContainer(ds,r)}$|\label{def:fx:root:4}|
$\tt{\forall x. Root(x) \Rightarrow Container(x)}$|\label{def:fx:root:5}|
$\tt{\nexists r,s. Root(r) \wedge hasContainer(s,r) }$|\label{def:fx:root:6}|
$\tt{\forall ds, r_1, r_2. hasRoot(ds, r_1) \wedge hasRoot(ds, r_2) \Rightarrow r_1 = r_2}$ |\label{def:fx:root:singleroot}\label{def:fx:root:7}|
$\tt{\forall c \exists s. Container(c) \wedge \neg Root(c) \Rightarrow  hasContainer(s, c) }$ |\label{def:fx:root:hascontainer}\label{def:fx:root:8}|
\end{lstlisting}
\end{figure}
\chadded{In Example~\ref{running_example},  a container is generated for each row of the file. These containers are held by slots in a root container, which represents the entire table. A value is generated for each cell and associated with the corresponding row via a slot. An excerpt of this is shown in the Example~\ref{tab:runingexample:1}.}

\subsection{Properties of the \FX{}}

This section outlines some properties of the \FX{} model that will be used later to characterise the RDF knowledge graph constructed from the instances of the model.

\begin{proposition}[Connectedness of containers]\label{proposition:connectiveness}
    Every non-root container is recursively contained by  the root container.
\end{proposition}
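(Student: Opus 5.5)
The plan is to start from an arbitrary non-root container and walk \emph{upwards} through the model, from a container to the slot holding it and from that slot to the container owning it, until a root is reached. Let \ax{c} be a container with $\neg\tt{Root}(c)$. By \ref{def:fx:root}.\ref{def:fx:root:hascontainer} there is a slot \ax{s_n} with \ax{hasContainer(s_n,c)}. By \ref{def:fx:basic}.\ref{def:fx:basic:3} this is indeed a slot, so \ref{def:fx:slots}.\ref{lst:fx:slots:3} gives a container \ax{c_n} with \ax{hasSlot(c_n,s_n)}. By \ref{def:fx:slots}.\ref{lst:fx:slots:4}, and reading \ref{def:fx:slots}.\ref{lst:fx:slots:10} as ``a container is held by at most one slot'', both choices are unique. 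So each non-root container has a well-defined \emph{parent} container. If \ax{c_n} is a root we stop; otherwise we apply the same step to \ax{c_n}, obtaining \ax{s_{n-1}} and \ax{c_{n-1}}, and so on.

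Concatenating these steps yields a sequence with \ax{hasSlot(c_i,s_i)} and \ax{hasContainer(s_i,c_{i+1})}. This is exactly the shape required by the definition of ``recursively contains'' in the Terminology paragraph. Hence, if the walk ever reaches a container satisfying \ax{Root}, that root recursively contains \ax{c}, which is what the proposition asks for.

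The main obstacle is \textbf{termination} of the walk. No container can repeat along the chain: a repetition \ax{c_i = c_j} with $i<j$ would give a cycle of \ax{hasSlot}/\ax{hasContainer} steps from a container back to itself. The one-step case is forbidden by \ref{def:fx:slots}.\ref{lst:fx:slots:7}, and the general case by \ref{def:fx:slots}.\ref{lst:fx:slots:9}. So the containers visited are pairwise distinct. However, the axioms alone do not exclude an infinite descending chain of non-root containers in an infinite interpretation. I would therefore make explicit the assumption that the interpretations of interest, namely those arising from actual data sources, have finitely many containers. Under that assumption a chain of pairwise distinct containers must stop, and by the construction it can only stop at a root (every non-root has a parent). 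Should finiteness not be acceptable, the fallback is to restrict the claim to well-founded interpretations, since this is precisely what is needed.

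Finally, ``the root container'' should be the root of the data source including \ax{c}. The axioms do not directly force \ax{hasSlot}/\ax{hasContainer} to stay within one data source. I would either note that the reached root \ax{r} satisfies \ax{hasRoot(ds',r)} for some \ax{ds'} (by \ref{def:fx:root}.\ref{def:fx:root:3}), which yields the proposition in the weaker form ``recursively contained by a root container''. Alternatively, I would adopt the intended reading that parent and child containers are included in the same data source. Then \ax{ds'} coincides with the data source of \ax{c} by \ref{def:fx:basic}.\ref{def:fx:basic:10} and \ref{def:fx:root}.\ref{def:fx:root:4}, and uniqueness of that root follows from \ref{def:fx:root}.\ref{def:fx:root:7}.
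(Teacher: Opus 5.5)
Your proposal is correct and follows essentially the same route as the paper: walk upward from a non-root container, using the axiom that every non-root container is held by a slot and the axiom that every slot belongs to a container, rule out repetitions via the acyclicity axioms, and conclude that the chain can only stop at a root. Your explicit finiteness (well-foundedness) assumption and your remark on which data source's root is reached spell out what the paper uses tacitly when it speaks of ``the first container in the sequence''. This is a genuine tightening rather than a deviation: without it, an infinite ascending chain of non-root containers satisfies all the axioms.
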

\begin{proof}
    For each non-root container \ax{c}, either
    \begin{enumerate*}[label=\textit{(\roman*)}]
        \item \label{theorem:connectiveness:case1} \ax{c} is not assigned to any slot;
        \item \label{theorem:connectiveness:case2} or \ax{c} is assigned to a slot.    
    \end{enumerate*}
    We can exclude \ref{theorem:connectiveness:case1} as by definition (\ref{def:fx:root}.\ref{def:fx:root:hascontainer}\footnote{With this notation we refer to the line \ref{def:fx:root:hascontainer} of the definition \ref{def:fx:root}.}) a non-root container must be assigned to a slot. 
    As for \ref{theorem:connectiveness:case2}, we show that for each non-root container \ax{c} there always exist a sequence of slots \ax{s_r,s_1,...,s_n} and containers \ax{r,c_1,...,c_n} such that \ax{(r,s_r),(c_1, s_1)...(c_n,s_n)} is in the interpretation of \ax{hasSlot} and \ax{(s_r,c_1),(s_1, c_2),...,(s_n,c)} is in the interpretation of \ax{hasContainer}, i.e. \ax{r} recursively contains \ax{c}. The following conditions imply that each non-root container \ax{c} must be in a sequence as above:
    \begin{enumerate*}[label=\textit{(\alph*)}]
        \item\label{theorem:connectiveness:casea} each non-root container must be assigned to a slot (\ref{def:fx:root}.\ref{def:fx:root:hascontainer});
        \item\label{theorem:connectiveness:caseb} each slot is always contained by a container (\ref{def:fx:slots}.\ref{lst:fx:slots:3});
        \item\label{theorem:connectiveness:casec} containers  cannot be recursively contain themself (\ref{def:fx:slots}.\ref{lst:fx:slots:9}).
    \end{enumerate*}
    It is worth noting that \ref{theorem:connectiveness:casea} and \ref{theorem:connectiveness:caseb} imply that for each non-root there is always a sequence of containers like that above which it belongs to.  \ref{theorem:connectiveness:casec} implies that the sequence does not form a cycle. 
    Therefore, the first container in the sequence must be the root container, since it cannot be assigned to any slot.
\end{proof}

\begin{proposition}\label{proposition:singlepath}
    For each data source ds, there is one and only one sequence of containers and slots that connects the root of ds to its containers.
\end{proposition}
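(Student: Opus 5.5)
Existence follows from Proposition~\ref{proposition:connectiveness}. Every non-root container \ax{c} of \ax{ds} is recursively contained by some root. That root must be the root of \ax{ds}: along any chain \ax{hasSlot}/\ax{hasContainer}, all containers belong to the same data source. I would justify this by noting that a container is included in exactly one data source (\ref{def:fx:basic}.\ref{def:fx:basic:10}) and that each data source has a unique root (\ref{def:fx:root}.\ref{def:fx:root:7}, \ref{def:fx:root}.\ref{def:fx:root:4}). Strictly, propagation of data-source membership along slots is not axiomatised. I would therefore read ``the containers of \ax{ds}'' as those recursively contained by the root of \ax{ds}, which makes existence immediate. (For the root itself, the empty sequence serves.)

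For uniqueness I would build the path backwards from the target and show that each backward step is forced. Fix a non-root container \ax{c}. By \ref{def:fx:root}.\ref{def:fx:root:hascontainer} some slot holds \ax{c}, and by \ref{def:fx:slots}.\ref{lst:fx:slots:10} that slot is unique. I intend this axiom as ``a container is held by at most one slot''; its displayed predicate \ax{hasSlot} is a typo for \ax{hasContainer}. Call the slot \ax{s_n}. By \ref{def:fx:slots}.\ref{lst:fx:slots:3} and \ref{def:fx:slots}.\ref{lst:fx:slots:4}, \ax{s_n} is contained by exactly one container \ax{c_n}. So the predecessor pair $(c_n,s_n)$ of \ax{c} is uniquely determined. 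Iterating gives a unique backward chain $c \leftarrow s_n \leftarrow c_n \leftarrow s_{n-1} \leftarrow \cdots$.

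Next I would show the chain terminates at the root. By \ref{def:fx:slots}.\ref{lst:fx:slots:9}, no container repeats along the chain, since a repetition would be a cycle in which a container recursively contains itself. The chain is finite because it is a subchain of the finite witness sequence from Proposition~\ref{proposition:connectiveness}. It stops only at a container held by no slot, which must be a root by \ref{def:fx:root}.\ref{def:fx:root:hascontainer}. By \ref{def:fx:root}.\ref{def:fx:root:6}, the root has no predecessor, so the chain cannot continue past it. Now take any two sequences from the root to \ax{c}. Both must coincide with this backward chain: by induction from the end, the last slot and container of each are forced, and so on back to the root. Neither can be longer, because the root has no predecessor.

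The main obstacle is the finiteness/termination argument. Acyclicity alone does not rule out an infinite backward chain in an arbitrary interpretation. I would avoid needing this by using the finite sequence guaranteed by Proposition~\ref{proposition:connectiveness}: the forced backward steps must follow that sequence, so the chain reaches the root in finitely many steps.
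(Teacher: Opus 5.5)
Your proof is correct and is essentially the paper's argument. Existence comes from Proposition~\ref{proposition:connectiveness}, and uniqueness comes from walking backward from the target container, with each step forced because a non-root container is held by exactly one slot and a slot lies in exactly one container. For the first of these facts you cite the right axiom, \ref{def:fx:slots}.\ref{lst:fx:slots:10} (with its \ax{hasSlot} read as \ax{hasContainer}), whereas the paper cites \ref{def:fx:slots}.\ref{lst:fx:slots:8}, which only says that a slot holds at most one container; your explicit handling of termination, of the root having no predecessor, and of data-source membership also makes precise what the paper leaves implicit.
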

\begin{proof}
    This follows from the following facts: 
    \begin{enumerate*}[label=\textit{(\roman*)}]
    
        \item\label{theorem:singlepath:i} There is always a single root for each data source (\ref{def:fx:root}.\ref{def:fx:root:singleroot});

        \item\label{theorem:singlepath:ii} Every non-root container is recursively contained by  the root container (\ref{proposition:connectiveness});

        \item\label{theorem:singlepath:iii} Each slot is assigned to one and only one container (\ref{def:fx:slots}.\ref{lst:fx:slots:4});

        \item\label{theorem:singlepath:iv} Each non-root container is assigned to one and only one slot (\ref{def:fx:slots}.\ref{lst:fx:slots:8}).
        
    \end{enumerate*}
    Therefore, \ref{theorem:singlepath:ii} implies that there is always a sequence of slots and containers that connects a container to a root, \ref{theorem:singlepath:ii} that there is always a single root,  \ref{theorem:singlepath:ii} and  \ref{theorem:singlepath:iv} guarantee that the sequence is one and only one (it is convenient to pass through the sequence backward, the last container of the sequence can be assigned to one and only one slot, which in turn can be assigned to one and only container and so on).
\end{proof}

%\begin{theorem}[Ancestors of sibling non-root containers]\label{theorem:siblingcontainers}
    %For any pair of sibling non-root containers (i.e. containers assigned to two slots within the same container), the set of ancestors  of those containers (i.e. the containers that recursively contain them) is the same.
%\end{theorem}
%\begin{proof}
%We observe that:
%    \begin{enumerate*}[label=\textit{(\roman*)}]
        
%        \item\label{theorem:siblingcontainers:ii} Each non-root container is assigned to one and only one slot (see \ref{def:fx:slots}.\ref{lst:fx:slots:8});

%        \item\label{theorem:siblingcontainers:i} Each slot is assigned to one and only one container (see \ref{def:fx:slots}.\ref{lst:fx:slots:4}).

%    \end{enumerate*}

%    Therefore, the backward path from the any pair of non-root containers to the root must go through the same set of containers.
%\end{proof}

Now that we have described the \FX{} model and demonstrated its properties, we introduce a function $fx$, called \textit{fa\c{c}adify}.
\chadded{This function is required to convert a resource into an instance of the \FX{} model.}
Let $\mathfrak{I}$ be the set all possible interpretations and $\mathscr{R}$ the set of all possible resources, the function $fx$, called \textit{fa\c{c}adify}, given a resource $r$ returns an interpretation $\mathscr{I}$ that complies with the \FX{} model, i.e. $fx: \mathscr{R} \to \mathfrak{I}$.

Summary of changes from~\cite{asprino2023knowledge}:

\begin{itemize}
    \item We remove the predicates Key(k), hasKey(s,k), StringKey(k), and NumberKey(k), i.e. collapse them with the definition of slots. We add StringSlot and NumberSlot as a replacement.
    Without loss of generality, we imagine that the key of the slot is included in the identifier of the individual.
    \item We rename Class into Type.
    \item We solve an ambiguity regarding the distinction between data sources and root containers, that are now distinct. We add the predicate hasRoot(ds, r).
    \item We remove Name and hasName that were used to assign a name to a DataSource (we assume that the identifier of the data sources are unique).
    \item We clarify disjointness and connectedness.
    \item We provide formal definitions for the predicates DataSource, Resource and for integrity constraints that were previously only defined informally.
\end{itemize}

\begin{ex}
Table~\ref{tab:runingexample:1} show an example of usage of the introduced predicates for interpreting a CSV file.
\begin{table}[h!]
    \centering
    \caption{An example of usage of the introduced predicates for interpreting a CSV file\footnotemark.}\label{tab:runingexample:1}
    \begin{tabular}{p{0.2\textwidth}p{0.78\textwidth}}
        \hline
         \multicolumn{1}{c}{\textbf{Source: \texttt{ex.csv}}} & \multicolumn{1}{c}{\textbf{\FX{} Interpretation}}  \\\hline
         \mline{name,surname\\Laura,Grey\\Craig,Johnson}& 
         \mline{
         \texttt{includes(file:///ex.csv,ex.csv),} \\
         \texttt{hasRoot(ex.csv,root)},\texttt{hasSlot(root,rs.1)},\\
\texttt{hasSlot(root,rs.2)},\texttt{hasSlot(root,rs.3),} \\
\texttt{NumberSlot(rs.1)},\texttt{NumberSlot(rs.2)},\texttt{NumberSlot(rs.3)}, \\
\texttt{hasContainer(rs.1,row:1)},\texttt{hasContainer(rs.2,row.2)} \\
\texttt{hasContainer(rs.3,row.3),} \\
\texttt{hasSlot(row.1,rsr:11)},\texttt{hasSlot(row.1,rsr.12),} \\
\texttt{hasSlot(row.2,rsr.21)},\texttt{hasSlot(row:2,rsr.22),} \\
\texttt{hasSlot(row.3,rsr.31)},\texttt{hasSlot(row.3,rsr.32),} \\
\texttt{hasValue(rsr.11,"name")},\texttt{hasValue(rsr.12,"surname"),} \\
\texttt{hasValue(rsr.21,"Laura")},\texttt{hasValue(rsr.22,"Grey"),} \\
\texttt{hasValue(rsr.31,"Craig")},\texttt{hasValue(rsr.32,"Jonhnson")} \\
\texttt{NumberSlot(rsr.11)},\texttt{NumberSlot(rsr.12)},\texttt{NumberSlot(rsr.21)}, \\
\texttt{NumberSlot(rsr.22)},\texttt{NumberSlot(rsr.31)},\texttt{NumberSlot(rsr.32)}
         }\\\hline
    \end{tabular}
\end{table}
\end{ex}
\footnotetext{The example shows the CSV as a single container including a list of lists. SPARQL Anything allows to interpret the first line as a schema and generate an alternative representation, using column names as \texttt{StringSlot}s. However, in this work, we do not consider format-specific configurations. More details at:~\url{https://sparql-anything.readthedocs.io/stable/formats/CSV/}.}
\begin{ex}
Table~\ref{tab:runingexample:2} show an example of usage of the introduced predicates for interpreting a JSON document.
\begin{table}[h!]
    \centering
    \caption{An example of usage of the introduced predicates for interpreting a JSON file}
    \label{tab:runingexample:2}
    \begin{tabular}{p{0.2\textwidth}p{0.78\textwidth}}
        \hline
         \multicolumn{1}{c}{\textbf{Source: \texttt{ex.json}}} & \multicolumn{1}{c}{\textbf{\FX{} Interpretation}}  \\\hline
         \mline{\{\\"a":1,\\"b":[1,2,3]\\\}}& 
         \mline{
         \texttt{includes(file:///ex.json,ex.json),} \\
         \texttt{hasRoot(ex.json,root)},\\
\texttt{hasSlot(root,rs.a)},\texttt{StringSlot(rs.a)},\texttt{hasValue(rs.a,1)},\\
\texttt{hasSlot(root,rs.b)},\texttt{StringSlot(rs.b)},\texttt{hasContainer(rs.b,rsb.c)},\\
\texttt{hasSlot(rsb.c,rsbc.1)},\texttt{hasSlot(rsb.c,rsbc.2)},\texttt{hasSlot(rsb.c,rsbc.3)},\\
\texttt{NumberSlot(rsbc.1)},\texttt{NumberSlot(rsbc.2)},\texttt{NumberSlot(rsbc.3)},\\
\texttt{hasValue(rsbc.1,1)},\texttt{hasValue(rsbc.2,2)},\texttt{hasValue(rsbc.3,3)}\\
         }\\\hline
    \end{tabular}
\end{table}
\end{ex}

\begin{ex}
Table~\ref{tab:runingexample:3} show an example of usage of the introduced predicates for interpreting a XML file.
\begin{table}[h!]
    \centering
    \caption{An example of usage of the introduced predicates for interpreting a XML file}
    \label{tab:runingexample:3}
    \begin{tabular}{p{0.4\textwidth}p{0.5\textwidth}}
        \hline
         \multicolumn{1}{c}{\textbf{Source: \texttt{ex.xml}}} & \multicolumn{1}{c}{\textbf{\FX{} Interpretation}}  \\\hline
         \mline{<TEAM~name="Chicago~Bulls">\\~~~~<PLAYER~name="Micheal~Jordan"/>\\</TEAM>}& 
         \mline{
         \texttt{includes(file:///ex.xml,ex.xml),} \\
         \texttt{hasRoot(ex.xml,root)},\texttt{hasType(root,TEAM)},\\
         \texttt{hasSlot(root,r.name),StringSlot(rname)},\\
         \texttt{hasValue(rname,"Chicago~Bulls")},\\        \texttt{hasSlot(root,r.1),NumberSlot(r.1),}\\
         \texttt{hasContainer(r.1,r1c),}\texttt{hasType(r1c,PLAYER),}
         \texttt{hasSlot(r1c,r1c.name),StringSlot(r1c.name)},\\\texttt{hasValue(r1c.name,"Micheal~Jordan")}\\
         }\\\hline
    \end{tabular}
\end{table}
\end{ex}
\section{\FX{}~RDF graphs}
\subsection{Mapping \FX{} to RDF}\label{sec:mapping}
We define a mapping to RDF as a function $\mathcal{F}: \mathcal{L} \times \mathfrak{I} \to 2^{\mathcal{I} \times \mathcal{T}}$ which takes as input a query $q$ over  \FX{}  and an interpretation $\mathcal{I}$ of the model, it evaluates $q$ over $\mathcal{I}$ and it returns a set of quads\footnote{$2^{A}$ denotes the powerset of a set $A$, i.e. the set of all possible subsets of $A$.}.
The function $\mathcal{F}$ uses the assignments of the free variables of $q$ to instantiate the quads.
The input of $\mathcal{F}$ is specified as a PL formula and the output is specified as a RDF graph template that include the free variable to instantiate the RDF terms.
We use the symbol $\rightleftharpoons$ to indicate an input-output pair of $\mathcal{F}$.
We note that such mappings are bidirectional and conversion is loss-less, meaning that one can go back and forth without loosing information.

It is worth noting that even though the purpose of mapping is different, formalising the mapping as a function that associates queries and data sources with RDF graphs is compatible with the framework proposed in \cite{oo2025algebraic} which associates queries with the positions of graphs to be generated.
However, demonstrating the compatibility of the two theories is beyond the scope of this article.

Before providing the details of the function $\mathcal{F}$ we introduce elements that will be used in the transformation.
We define two namespaces and preferred prefix names: \texttt{fx} and \texttt{xyz} \chadded{(see Listing~\ref{def:fx:namespaces})}. 

\begin{figure}\begin{lstlisting}[style=rdf,caption={Definition of \FX{} namespaces.},label={def:fx:namespaces}]
prefix fx: <http://sparql.xyz/facade-x/ns/> 
prefix xyz: <http://sparql.xyz/facade-x/data/>
\end{lstlisting}\vspace{-.5cm}
\end{figure}
%\begin{definition}[RDF construction functions]
%We use the predicate Namespace( ) to derive a namespace from an RDF IRI.
%We add STR( ) as the SPARQL function that converts an IRI to the corresponding string.
%We add the predicates \textit{TypeProperty} for all instances of rdf:type\todo{rdf:type is supposed to be an individual}, and \textit{ContainerMembershipProperty} to represent the set of IRIs with prefix rdf:\_ such that the remaining part is a natural number ($N$).

%~\begin{lstlisting}[style=logic,escapechar=|]
%$\forall$(s,p,o).Triple(s,p,o) $\Rightarrow$ Subject(s) $\wedge$ Property(p) $\wedge$ Object(o)
%$\forall(s)$.Subject(s) $\Rightarrow$ IRI(s) $\oplus$ BNode(s)
%$\forall(p)$.Property(p) $\Rightarrow$ IRI(s) 
%$\forall$(x).IRI(x) $\wedge$ STR(x) = rdf: $\cdot$ _ $\cdot$ n $|$ n $\in$ $N$ $\Rightarrow$ ContainerMembershipProperty(x)
%$\forall$(x).IRI(x) $\wedge$ x = rdf:type $\Rightarrow$ TypeProperty(x)
%$\forall(s)$.Object(s) $\Rightarrow$ IRI(s) $\oplus$ %BNode(s) $\oplus$ Literal(s)
%\end{lstlisting}    
%\end{definition}
%\todo[inline]{Double-check we don't confuse the predicate Root with the RDF term fx:root -- when we refer to the second, we should either use fx:root or the singleton predicate FXRoot (preferrable, because it is an annotation and we cannot infer equivalence in rules (but state that a variable can have role FXRoot}

We define the mappings \chadded{(see Listing~\ref{def:fx:mappings})} between \FX{} terms and RDF. 
These mappings are the input-output (i.e. query, RDF template) pairs of $\mathcal{F}$.
We use the functions \texttt{IRI}, \texttt{Entity} and \texttt{Literal} to construct RDF terms on the basis of the arguments passed to the function.
Depending from the configuration given by the KG engineer, the \texttt{Entity} constructs either an IRI or a Blank Node.
Moreover, we introduce the function \texttt{ContainerMembershipProperty} that given  $n\in \mathbb{N}^+$ constructs the IRI of the corresponding container membership property, i.e. \texttt{rdf:\_n}.
We introduce the primitive IRI \texttt{fx:root} to represent the unary predicate \texttt{Root}.
Without loss of generality, we assume that
\begin{enumerate*}[label=\textit{(\roman*)}]
    \item \texttt{ContainerMembershipProperty} can extract the cardinal order of the slot from its identifier;
    \item  \texttt{IRI} extract the name of the slot from its identifier.
\end{enumerate*}

%|Root(c) $\rightleftharpoons$ Triple(c, p, o) $|$ p = rdf:type $\wedge$ o = fx:root|
%StringSlot(s) $\wedge$ hasSlot(c1,s) $\wedge$ hasContainer(s, c2) $\rightleftharpoons$ Triple(c1, s, c2) 
%NumberSlot(s) $\wedge$ hasSlot(c1,s) $\wedge$ hasContainer(s, c2) $\rightleftharpoons$ Triple(c1, s, c2) $|$ ContainerMembershipProperty(s) $\wedge$ $\lnot$ Literal(o)
%StringSlot(s) $\wedge$ hasSlot(c,s) $\wedge$ hasValue(s,v) $\rightleftharpoons$  Triple(c1, s, v) $|$ Literal(v)
%NumberSlot(s) $\wedge$ hasSlot(c,s) $\wedge$ hasValue(s,v) $\rightleftharpoons$ Triple(c, s, v) $|$ Literal(v) $\wedge$ ContainerMembershipProperty(s)
%hasType(c,t) $\rightleftharpoons$ Triple(c, p, t) $|$ p = rdf:type
%|\label{def:fx:mappings:ds}|DataSource(ds) $\wedge$ hasRoot(ds, r) $\rightleftharpoons$ Quad(ds, r, p, t) $|$ p = rdf:type $\wedge$ t = fx:root

\begin{figure}\begin{lstlisting}[style=rdf,caption={Defintion of mappings of the \FX{} to RDF.},label={def:fx:mappings},escapechar=|]
$\tt{  hasRoot(ds,c)  \rightleftharpoons \{(IRI(ds), Entity(ds, c), \text{rdf:type}, \text{fx:root})  \}}$|\label{def:fx:mappings:1}|
$\tt{  includesContainer(ds,c_1) \wedge includesContainer(ds,c_2) }$$\tt{ \wedge StringSlot(s) \wedge hasSlot(c_1,s) \wedge hasContainer(s, c_2) \rightleftharpoons} $$\tt{\{(IRI(ds), Entity(ds, c_1), IRI(\text{xyz:}, s), Entity(ds, c_2))\}}$|\label{def:fx:mappings:2}|
$\tt{   includesContainer(ds,c_1) \wedge includesContainer(ds,c_2) }$$\tt{ \wedge NumberSlot(s) \wedge hasSlot(c_1,s) \wedge hasContainer(s, c_2) \rightleftharpoons} $$\tt{\{(IRI(ds), Entity(ds, c_1), ContainerMembershipProperty(s), Entity(ds, c_2))\}}$|\label{def:fx:mappings:3}|
$\tt{  includesContainer(ds,c)}$$\tt{ \wedge NumberSlot(s) \wedge hasSlot(c,s) \wedge hasValue(s, v) \rightleftharpoons} $$\tt{\{(IRI(ds), Entity(ds, c), ContainerMembershipProperty(s), Literal(v))\}}$|\label{def:fx:mappings:4}|
$\tt{  includesContainer(ds,c)}$$\tt{ \wedge StringSlot(s) \wedge hasSlot(c,s) \wedge hasValue(s, v) \rightleftharpoons} $$\tt{\{(IRI(ds), Entity(ds, c), IRI(\text{xyz:}, s), Literal(v))\}}$|\label{def:fx:mappings:5}|
$\tt{ \wedge includesContainer(ds,c)}$$\tt{ \wedge hasType(c,t) \rightleftharpoons} $$\tt{\{(IRI(ds), Entity(ds, c), \text{rdf:type}, IRI(\text{xyz:}, t))\}}$|\label{def:fx:mappings:6}|
\end{lstlisting}\vspace{-.5cm}
\end{figure}

\begin{corollary}\label{theorem:bidirectionalmappings}
    The mapping rules of  Definition~\ref{def:fx:mappings} are bidirectional.
\end{corollary}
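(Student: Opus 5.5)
To show that the rules of Definition~\ref{def:fx:mappings} are bidirectional, I will show that $\mathcal{F}$ is injective on \FX{} interpretations: from the RDF quads produced by $\mathcal{F}$ one can uniquely reconstruct the ground facts (over \texttt{hasRoot}, \texttt{includesContainer}, \texttt{hasSlot}, \texttt{hasContainer}, \texttt{hasValue}, \texttt{StringSlot}, \texttt{NumberSlot}, \texttt{hasType}) that produced them. The approach is to define an explicit inverse mapping that reads each quad and emits the corresponding PL atoms, and then to check that the rules are mutually exclusive on the output side, so that each quad determines which rule fired.

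\textbf{Injectivity of the term constructors.} I first assume that \texttt{IRI}, \texttt{Entity}, \texttt{Literal} and \texttt{ContainerMembershipProperty} are injective in their arguments. This follows from the stated conventions. Data source identifiers are unique. \texttt{Entity(ds,c)} is built from both arguments. The key of a slot is encoded in its identifier, and \texttt{IRI}/\texttt{ContainerMembershipProperty} extract it. Together with the pairwise disjointness of \texttt{Container}, \texttt{Slot}, \texttt{Value} and \texttt{Type} (Listing~\ref{def:fx:disjointness}) and the disjointness of $\mathcal{I}$, $\mathcal{B}$ and $\mathcal{L}$, this lets each RDF term be decoded back to the \FX{} individual it came from.

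\textbf{Case analysis on the shape of a quad $(g,s,p,o)$.}
\begin{enumerate*}[label=\textit{(\roman*)}]
\item If $p=$\texttt{rdf:type} and $o=$\texttt{fx:root}, only rule~\ref{def:fx:mappings:1} applies, and it yields \texttt{hasRoot}.
\item If $p=$\texttt{rdf:type} and $o$ is some other \texttt{xyz:} IRI, only rule~\ref{def:fx:mappings:6} applies, and it yields \texttt{hasType}. Here I need \texttt{fx:root} to lie outside the \texttt{xyz:} namespace, which holds because the two namespaces differ (Listing~\ref{def:fx:namespaces}).
\item If $p$ is a container membership property, rule~\ref{def:fx:mappings:3} applies when $o$ is not a literal and rule~\ref{def:fx:mappings:4} applies when $o$ is a literal. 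This uses the exclusivity of \texttt{hasContainer}/\texttt{hasValue} (\ref{def:fx:slots}.\ref{lst:fx:slots:6}) and the fact that literals and entities are disjoint.
\item If $p$ is an \texttt{xyz:} IRI, rules~\ref{def:fx:mappings:2} and~\ref{def:fx:mappings:5} are separated the same way.
\end{enumerate*}
In each case the slot identity is recovered from $(s,p)$. By \ref{def:fx:slots}.\ref{lst:fx:slots:4}, a slot belongs to a single container, and the key is unique within that container, so the pair (container, key) determines the slot. \texttt{StringSlot} versus \texttt{NumberSlot} is read off from whether $p$ is a container membership property or an \texttt{xyz:} IRI; \ref{def:fx:basic}.\ref{def:fx:basic:6} guarantees these never clash. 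The graph name $g$ recovers \texttt{includesContainer} (unique by \ref{def:fx:basic}.\ref{def:fx:basic:10}).

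\textbf{Completeness and main obstacle.} Finally I argue that the union of the reconstructed atoms equals the original interpretation restricted to these predicates, so applying $\mathcal{F}$ and then the inverse gives back the interpretation, and applying the inverse and then $\mathcal{F}$ gives back the graph. The remaining predicates (\texttt{Container}, \texttt{Slot}, \texttt{Value}, \texttt{Type}, \texttt{Root}, \texttt{DataSource}, \texttt{Resource}) follow from the domain/range axioms. The main obstacle I expect is that the mapping is not given a case for \texttt{includes}(r,ds) or for empty containers. For the former I would argue that the resource is fixed as the input of $fx$. For the latter, a leaf container still appears as the object of the triple for its holding slot, and a root container still appears via \texttt{hasRoot}, so no container is lost.
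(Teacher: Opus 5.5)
Your case analysis is the paper's argument. The paper proves the corollary by observing two things. On the input side, the six rules involve pairwise distinct model elements. On the output side, they produce pairwise distinct (property, object) shapes: \texttt{rdf:type} with \texttt{fx:root}, \texttt{rdf:type} with an \texttt{xyz:} IRI, and a container-membership or \texttt{xyz:} property with an entity or a literal. Hence each quad identifies the rule that produced it. The paper stops at this rule-level correspondence. Your additions go further: an explicit inverse, recovery of \texttt{includesContainer} from the graph name, and a completeness check that no container, slot, value or type is lost (leaf containers via their holding slot, the root via \texttt{hasRoot}). These are a genuine strengthening that makes ``loss-less'' precise.

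One step of that extra layer is wrong as stated and needs rephrasing. \texttt{IRI}(\texttt{xyz:}, $s$) and \texttt{ContainerMembershipProperty}($s$) are not injective in the slot $s$, because they only extract its key. In Table~\ref{tab:runingexample:1}, the slots \texttt{rsr.11}, \texttt{rsr.21} and \texttt{rsr.31} all become \texttt{rdf:\_1}. Likewise, \texttt{xyz:name} may come from a string slot (in predicate position) or from a type (in object position). So RDF terms cannot be decoded one by one; decoding must be positional, per quad.

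Concretely, the slot is determined by the pair (subject, predicate). This needs per-container uniqueness of keys, which the paper states only in prose; no axiom of the model formalises it. Even then, the pair recovers the slot only as ``the slot of $c$ with key $k$'', i.e.\ up to renaming of slot individuals. The same applies to \texttt{Literal}. The paper itself remarks that equal strings at different positions collapse onto a single literal node, so values must also be individuated by their position. State injectivity for $(\texttt{Entity}(ds,c), \mathrm{key}(s))$ rather than for each constructor separately. With that change your round trip holds up to isomorphism, which is all the corollary needs.

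A minor point: the appeal to line~\ref{lst:fx:slots:6} of Listing~\ref{def:fx:slots} in case (iii) is unnecessary. The literal/non-literal form of the object already selects the rule.
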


\begin{proof}
    This is can be demonstrated by showing that the mapping of Definition~\ref{def:fx:mappings} is a bijective, i.e. a one-to-one correspondence of distinct elements.
    By definition the mapping rules define a correspondence of queries and generated quads.
    Therefore, we have only to show that the corresponding elements are distinct.
    Firstly, we note that queries and generated quads are distinct entities (i.e. it is not possible to convert one into the other using implications).
    As queries, we note that in each query (implicitly) involves distinct elements of the model: \ref{def:fx:mappings:1}: Root; \ref{def:fx:mappings:2}: Container and StringSlot; \ref{def:fx:mappings:3}: Container and NumberSlot;
    \ref{def:fx:mappings:4}: Value and StringSlot; \ref{def:fx:mappings:5}: Value and NumberSlot; and
    \ref{def:fx:mappings:6}: Type.
    In a similar fashion, this is also the case for the generated quads.
    While the first two elements are always an IRI derived from the DataSource and a Container, the property and the object are distinct: \ref{def:fx:mappings:1}: \texttt{rdf:type}, \texttt{fx:root}; \ref{def:fx:mappings:2}:  \texttt{xyz:} property and Entity; \ref{def:fx:mappings:3}: container membership property and Entity; 
    \ref{def:fx:mappings:4}: container membership property and Literal;  \ref{def:fx:mappings:3}: \texttt{xyz:} property and Literal; and
    \ref{def:fx:mappings:6}: \texttt{rdf:type} and \texttt{xyz:} IRI.
\end{proof}

\begin{figure}
\begin{lstlisting}[caption={The result of the application of the mapping to the input of Example~\ref{running_example}.},label={ex:running_example_rdf}]
PREFIX fx:     <http://sparql.xyz/facade-x/ns/>
PREFIX rdf:    <http://www.w3.org/1999/02/22-rdf-syntax-ns#>

[ rdf:type  fx:root;
  rdf:_1    [ rdf:_1  "email"; rdf:_2  "name"; rdf:_3  "surname" ];
  rdf:_2    [ rdf:_1  "laura@example.com"; rdf:_2  "Laura";
              rdf:_3  "Grey" ];
  rdf:_3    [ rdf:_1  "craig@example.com"; rdf:_2  "Craig";
              rdf:_3  "Johnson" ];
  rdf:_4    [ rdf:_1  "mary@example.com"; rdf:_2  "Mary";
              rdf:_3  "Jenkins" ];
  rdf:_5    [ rdf:_1  "jamie@example.com"; rdf:_2  "Jamie";
              rdf:_3  "Smith" ]
] .
\end{lstlisting}\vspace{-0.5cm}
\end{figure}

\chadded{Listing~\ref{ex:running_example_rdf} shows the result of applying the mapping rules in ~\ref{def:fx:mappings} to the input of Example~\ref{running_example}.}
Considering the mapping to RDF defined above, we can show that the constructed quads meet a series of conditions.

\begin{corollary}\label{theorem:implicationsfxgraph}
     The following statements apply to the constructed RDF quads:
     \begin{enumerate}[label=\textit{(\roman*)}]
     
        \item\label{theorem:implicationsfxgraph:subjectcontainer} The subject is\footnote{``is'' in this context means ``is constructed on the basis of''.} a \textbf{Container};
        \item\label{theorem:implicationsfxgraph:propertyisslot} If the property  is not  \texttt{rdf:type}, then it is a \textbf{Slot};

         \item\label{theorem:implicationsfxgraph:objecttype}  If the property is  \texttt{rdf:type} and the object is not \texttt{fx:root}, then the object is a \textbf{Type};

         \item\label{theorem:implicationsfxgraph:predicatenumberslot} If the property is a container membership property, then the property is a \textbf{NumberSlot};
         
        \item\label{theorem:implicationsfxgraph:predicateslotvalueobject} If the object is a Literal, then  the property is a \textbf{Slot} and the object from a \textbf{Value}.

      %  \item\label{theorem:implicationsfxgraph:predicate} If the object is a \textbf{Value}, then the property is derived from a \textbf{Slot}.
         
     \end{enumerate}
\end{corollary}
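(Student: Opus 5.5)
The plan is a direct case analysis over the six mapping rules of Definition~\ref{def:fx:mappings}. Since $\mathcal{F}$ produces quads only by instantiating the RDF templates on the right-hand side of these rules, every constructed quad is an instance of exactly one template. By Corollary~\ref{theorem:bidirectionalmappings}, each quad also comes from a unique rule, and the elements of the model it was built from are fixed by the left-hand-side query of that rule. For each statement \ref{theorem:implicationsfxgraph:subjectcontainer}--\ref{theorem:implicationsfxgraph:predicateslotvalueobject} I will collect the rules whose template matches the antecedent of the statement. I will then read off, from the query part of each such rule, which \FX{} predicate the relevant term is built from.

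\begin{itemize}
\item For \ref{theorem:implicationsfxgraph:subjectcontainer}, the subject of every template is $\tt{Entity(ds, c)}$ or $\tt{Entity(ds, c_1)}$. In rule~\ref{def:fx:mappings:1}, $\tt{c}$ satisfies $\tt{hasRoot(ds,c)}$, so it is a Root by \ref{def:fx:root}.\ref{def:fx:root:2}, hence a Container by \ref{def:fx:root}.\ref{def:fx:root:5}. In the other rules the argument satisfies $\tt{includesContainer}$ or $\tt{hasSlot}$, so it is a Container by \ref{def:fx:basic}.\ref{def:fx:basic:8} or \ref{def:fx:basic}.\ref{def:fx:basic:1}.
\item For \ref{theorem:implicationsfxgraph:propertyisslot}, the only templates with a property other than \texttt{rdf:type} are rules \ref{def:fx:mappings:2}--\ref{def:fx:mappings:5}. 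There the property is $\tt{IRI(\text{xyz:}, s)}$ or $\tt{ContainerMembershipProperty(s)}$, with $\tt{hasSlot(c,s)}$ holding, so $\tt{s}$ is a Slot by \ref{def:fx:basic}.\ref{def:fx:basic:1}.
\item For \ref{theorem:implicationsfxgraph:objecttype}, the templates with property \texttt{rdf:type} are rules \ref{def:fx:mappings:1} and \ref{def:fx:mappings:6}. Excluding \texttt{fx:root} leaves rule~\ref{def:fx:mappings:6}, where $\tt{hasType(c,t)}$ gives $\tt{Type(t)}$ by \ref{def:fx:types}.\ref{def:fx:types:1}.
\item For \ref{theorem:implicationsfxgraph:predicatenumberslot}, the container membership property templates occur only in rules \ref{def:fx:mappings:3} and \ref{def:fx:mappings:4}, and both queries assert $\tt{NumberSlot(s)}$.
\item For \ref{theorem:implicationsfxgraph:predicateslotvalueobject}, literal objects occur only in rules \ref{def:fx:mappings:4} and \ref{def:fx:mappings:5}. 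In these rules $\tt{hasSlot(c,s)}$ makes $\tt{s}$ a Slot, and $\tt{hasValue(s,v)}$ makes $\tt{v}$ a Value by \ref{def:fx:basic}.\ref{def:fx:basic:2}.
\end{itemize}

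The main obstacle is justifying the ``only'' part of each case analysis. We must exclude a term of one syntactic kind arising from a template of another kind. Examples are an $\tt{IRI(\text{xyz:}, s)}$ coinciding with \texttt{rdf:type} or with a container membership IRI, $\tt{IRI(\text{xyz:}, t)}$ coinciding with \texttt{fx:root}, or $\tt{Entity}$ producing a literal. I would settle this with three facts. First, the $\mathcal{I}$, $\mathcal{B}$ and $\mathcal{L}$ sets are pairwise disjoint, and $\tt{Entity}$ yields only IRIs or blank nodes. Second, IRIs built with the \texttt{xyz:} namespace never lie in the \texttt{rdf:} or \texttt{fx:} namespaces. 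Third, $\tt{ContainerMembershipProperty}$ produces only \texttt{rdf:\_n} IRIs, which are distinct from \texttt{rdf:type}. These are exactly the distinctness observations used in the proof of Corollary~\ref{theorem:bidirectionalmappings}, so I will invoke that corollary instead of re-deriving them.
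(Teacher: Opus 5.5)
Your proposal is correct and is essentially the paper's own proof. Both are a rule-by-rule inspection of the templates in Definition~\ref{def:fx:mappings}, though you are more explicit than the paper in citing the typing axioms (e.g.\ \ref{def:fx:basic}.\ref{def:fx:basic:1}, \ref{def:fx:basic}.\ref{def:fx:basic:2}, \ref{def:fx:types}.\ref{def:fx:types:1}) and in spelling out the term-kind and namespace distinctness that the paper's ``the only rules that\ldots'' claims silently rely on.

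Those distinctness facts follow directly from the definitions of \texttt{Entity}, \texttt{IRI} and \texttt{ContainerMembershipProperty}, so you can state them outright rather than routing them through Corollary~\ref{theorem:bidirectionalmappings}, whose statement does not literally assert them. The uniqueness of the generating rule is also not actually needed: it suffices that every quad satisfying an antecedent comes from one of the rules you list.
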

\begin{proof}

    The statement \ref{theorem:implicationsfxgraph:subjectcontainer} is demonstrated by observing that the subjects in the mapping rules of the Definition~\ref{def:fx:mappings} is always an Entity constructed on the basis of a variable bound to a Container.

    The statement \ref{theorem:implicationsfxgraph:propertyisslot} is demonstrated by observing that all the rules that do not produce quads having \texttt{rdf:type} as property, namely \ref{def:fx:mappings:2}, \ref{def:fx:mappings:3}, \ref{def:fx:mappings:4} and \ref{def:fx:mappings:5}, have as property an Entity or a container membership property derived from a slot.

    The statement \ref{theorem:implicationsfxgraph:objecttype} is demonstrated by observing that the only mapping rule that produces a quad having \texttt{rdf:type} as property and an object different from \texttt{fx:root} is the \ref{def:fx:mappings:6}. The quad generated by this rule has as object an Entity derived from a type t.

    The statement \ref{theorem:implicationsfxgraph:predicatenumberslot} is demonstrated by observing that the only mappings comply with this condition are the \ref{def:fx:mappings:3} and \ref{def:fx:mappings:4} and in both rules the predicate is derived from a \textbf{NumberSlot}.

    The statement \ref{theorem:implicationsfxgraph:predicateslotvalueobject} is demonstrated by observing that the only mappings comply with this condition are the \ref{def:fx:mappings:4} and \ref{def:fx:mappings:5} and in both rules the predicate is derived from a \textbf{Slot} and the object from a \textbf{Value}.\end{proof}

Considering the implications demonstrated in the Corollary~\ref{theorem:implicationsfxgraph}, the mapping to RDF provided in the Definition~\ref{def:fx:mappings} and the \FX{} described in Section~\ref{sec:fxmodeldefition}, we can show that further implications hold.
Note that these implications apply to the constructed quads as well as to the facts implied by the statements in Corollary~\ref{theorem:implicationsfxgraph} (this is why they are provided separately).

\begin{corollary}\label{theorem:furtherimplications}
    The following statements hold:

    \begin{enumerate}[label=\textit{(\roman*)}]

        \item\label{theorem:furtherimplications:1} If the object is an IRI and the property is a \textbf{Slot}, then the object is a \textbf{Container};
        
        \item\label{theorem:furtherimplications:2} If the object is a  \textbf{Container}, then the property is a \textbf{Slot};

        \item\label{theorem:furtherimplications:3} If the object is a \textbf{Value}, then the property is a \textbf{Slot}.
        
    \end{enumerate}

\end{corollary}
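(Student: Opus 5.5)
The plan is a case analysis over the six mapping rules of Definition~\ref{def:fx:mappings}, in the same style as the proof of Corollary~\ref{theorem:implicationsfxgraph}. For each item I would isolate the rules whose output quads can meet the hypothesis. Then I would read off, from the left-hand side of each such rule, which \FX{} predicate the relevant free variable is bound to. Where a fact is only implied through Corollary~\ref{theorem:implicationsfxgraph} and not produced directly by a rule, I would also appeal to the axioms of Section~\ref{sec:fxmodeldefition}: the typing axioms of Listing~\ref{def:fx:basic} and the disjointness axioms of Listing~\ref{def:fx:disjointness}.

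For \ref{theorem:furtherimplications:1}, the property is a \textbf{Slot}. By Corollary~\ref{theorem:implicationsfxgraph}\ref{theorem:implicationsfxgraph:propertyisslot} and the form of the rules, the property is therefore not \texttt{rdf:type}, so only rules \ref{def:fx:mappings:2}--\ref{def:fx:mappings:5} can apply. Rules \ref{def:fx:mappings:4} and \ref{def:fx:mappings:5} produce a \texttt{Literal} object. Since $\mathcal{I}$ and $\mathcal{L}$ are disjoint, an IRI object excludes them. The remaining rules \ref{def:fx:mappings:2} and \ref{def:fx:mappings:3} produce an object $\tt{Entity(ds,c_2)}$ with $\tt{hasContainer(s,c_2)}$. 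Axiom \ref{def:fx:basic}.\ref{def:fx:basic:3} then gives $\tt{Container(c_2)}$.

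For \ref{theorem:furtherimplications:2}, suppose the object is a \textbf{Container}. The rules that could produce it are \ref{def:fx:mappings:1} (object \texttt{fx:root}), \ref{def:fx:mappings:6} (object from a Type) and \ref{def:fx:mappings:2}/\ref{def:fx:mappings:3}. The first two must be excluded. For rule \ref{def:fx:mappings:6}, the object comes from $t$ with $\tt{Type(t)}$, and disjointness (\ref{def:fx:disjointness}.\ref{def:fx:disjointness:10}) forbids $t$ from also being a Container. Rule \ref{def:fx:mappings:1} has the constant \texttt{fx:root} as object, which is not constructed from any container variable. The remaining rules \ref{def:fx:mappings:2} and \ref{def:fx:mappings:3} have a property built from a slot $s$ with $\tt{StringSlot(s)}$ or $\tt{NumberSlot(s)}$. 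Axioms \ref{def:fx:basic}.\ref{def:fx:basic:4}--\ref{def:fx:basic:5} then give $\tt{Slot(s)}$.

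Item \ref{theorem:furtherimplications:3} goes the same way. The object comes from a Value only in rules \ref{def:fx:mappings:4}--\ref{def:fx:mappings:5}. Types and Containers are disjoint from Values (\ref{def:fx:disjointness}.\ref{def:fx:disjointness:9}, \ref{def:fx:disjointness:7}), which excludes rules \ref{def:fx:mappings:1}, \ref{def:fx:mappings:2}, \ref{def:fx:mappings:3} and \ref{def:fx:mappings:6}. In both remaining rules the property is built from a slot, so it is a \textbf{Slot} as before.

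I expect the main obstacle to be making precise the informal sense of ``is'' used here, namely ``constructed on the basis of'' (as in the footnote to Corollary~\ref{theorem:implicationsfxgraph}). In particular, I must argue that an RDF term produced by one rule cannot coincide with a term produced by another rule from an element of a different \FX{} category. For example, $\tt{IRI(\text{xyz:},t)}$ for a type must not collide with $\tt{Entity(ds,c)}$ for a container. I would handle this by relying on the bijectivity established in Corollary~\ref{theorem:bidirectionalmappings}: each generated term determines its originating \FX{} element and, through disjointness, its unique category.
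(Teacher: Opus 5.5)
Your proposal is correct and is essentially the paper's own argument: a case analysis over the six mapping rules of Definition~\ref{def:fx:mappings}, reading off from each rule's left-hand side which \FX{} element the property and the object are built from. You carry it out more carefully than the paper does: its proof cites rules 2 and 4 for (i) and rules 3 and 4 for (iii), where rules 2--3 and 4--5 are meant, and it never invokes the typing and disjointness axioms you use. The paper does not address the term-collision issue you single out, because it reads ``is'' per generated quad, and under that reading your case analysis already suffices. If you want the stronger node-level reading, note that Corollary~\ref{theorem:bidirectionalmappings} only separates quads by the shape of the rule that produced them. Disjointness of the ranges of \texttt{Entity}, \texttt{IRI(xyz:, t)} and \texttt{fx:root} would therefore have to be stated as an explicit assumption, since it cannot be derived from that corollary.
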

\begin{proof}
    The statement \ref{theorem:furtherimplications:1} is demonstrated by observing that the only mapping rules that generate a IRI  as object (via the function Entity) are the \ref{def:fx:mappings:2} and \ref{def:fx:mappings:4}. In both, the predicate of the generated quad is a Slot.

    The statement \ref{theorem:furtherimplications:2} is demonstrated by observing that the only mapping rules that generate a Container as object are the \ref{def:fx:mappings:2} and \ref{def:fx:mappings:3}. In both, the predicate of the generated quad is a Slot.

    The statement \ref{theorem:furtherimplications:3} is demonstrated by observing that the only mapping rules that generate a Value as object are the \ref{def:fx:mappings:3} and \ref{def:fx:mappings:4}. In both, the predicate of the generated quad is a Slot.
\end{proof}

Finally, on the basis of the corollaries and the definitions above, we can show that further implications can demonstrated on the triple patterns satisfiable on the \FX{} graphs. 

\begin{corollary}\label{theorem:implicationsontriplepattern}
 The following conditions hold:

    \begin{enumerate}[label=\textit{(\roman*)}]

    \item\label{theorem:implicationsontriplepattern:1} If the object is \texttt{fx:root}, then the subject is derived from the \textbf{Root} and the property is either \texttt{rdf:type}  or a variable that can match only with \texttt{rdf:type}, we call such a term \textbf{TypeProperty};
    \item\label{theorem:implicationsontriplepattern:2} If the object is a \textbf{Type}, then the property is a \textbf{TypeProperty};
    \item\label{theorem:implicationsontriplepattern:3} If the property of a triple pattern is neither a container membership property, nor a variable, nor a \textbf{TypeProperty}, then it must be a \textbf{StringSlot}.
    \end{enumerate}
\end{corollary}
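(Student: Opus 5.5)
The plan is a case analysis over the six mapping rules of Definition~\ref{def:fx:mappings}. For each condition of Corollary~\ref{theorem:implicationsontriplepattern} we read off which rules can produce a quad with the stated shape. Because the mapping is bidirectional (Corollary~\ref{theorem:bidirectionalmappings}), every triple of a \FX{} RDF graph comes from exactly one rule. A triple pattern can therefore only be matched by a triple generated by one of those rules, so the constraints carried by that rule transfer to the pattern's terms.

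For \ref{theorem:implicationsontriplepattern:1}, we note that \texttt{fx:root} is a primitive IRI that appears only as the object of rule~\ref{def:fx:mappings:1}. It cannot be produced by \texttt{IRI}(\texttt{xyz:}, t) in rule~\ref{def:fx:mappings:6}, because it lives in the \texttt{fx:} namespace and not in \texttt{xyz:}. Rule~\ref{def:fx:mappings:1} has \texttt{rdf:type} as property, and its subject is \texttt{Entity}(ds, c) with \texttt{hasRoot}(ds, c), so the subject derives from a Root. Hence any pattern with object \texttt{fx:root} either has \texttt{rdf:type} in property position or a variable that can only be bound to \texttt{rdf:type}. This justifies introducing the name \textbf{TypeProperty}.

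For \ref{theorem:implicationsontriplepattern:2}, we invoke Corollary~\ref{theorem:implicationsfxgraph}\ref{theorem:implicationsfxgraph:objecttype}, together with the observation that rule~\ref{def:fx:mappings:6} is the only rule with a Type in object position. Its property is \texttt{rdf:type}, so the same reasoning yields a TypeProperty.

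For \ref{theorem:implicationsontriplepattern:3}, we enumerate the properties that can occur in generated quads: \texttt{rdf:type} (rules \ref{def:fx:mappings:1} and \ref{def:fx:mappings:6}), \texttt{ContainerMembershipProperty}(s) (rules \ref{def:fx:mappings:3} and \ref{def:fx:mappings:4}), and \texttt{IRI}(\texttt{xyz:}, s) (rules \ref{def:fx:mappings:2} and \ref{def:fx:mappings:5}). In the last two rules s is a StringSlot. We exclude variables, container membership properties and TypeProperties, and we use Corollary~\ref{theorem:implicationsfxgraph}\ref{theorem:implicationsfxgraph:propertyisslot} together with disjointness of StringSlot and NumberSlot (\ref{def:fx:basic}.\ref{def:fx:basic:6}). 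The remaining constant property can then only be matched by a triple whose property derives from a StringSlot. If it is not an \texttt{xyz:} IRI, the pattern is simply unsatisfiable, so the implication holds vacuously.

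The main obstacle I expect is making the pattern-level reading precise. A statement about a pattern term means a statement about every term that can match it in some \FX{} graph. Making this rigorous requires that the IRIs produced by the different construction functions are pairwise distinct. This covers \texttt{fx:root}, \texttt{rdf:type}, the \texttt{rdf:\_n}, and the \texttt{xyz:} IRIs. The cleanest way is to assume the \texttt{xyz:} and \texttt{fx:} namespaces are disjoint from \texttt{rdf:}, which the namespace definitions in Listing~\ref{def:fx:namespaces} make evident.
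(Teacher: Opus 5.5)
Your proposal is correct and matches the paper's proof: both inspect the mapping rules one by one to see which can emit a quad with object \texttt{fx:root}, with a Type as object, or with a constant property that is neither \texttt{rdf:type} nor a container membership property. Your extra care tightens the paper's terser argument in three places: you obtain Root rather than merely Container via \texttt{hasRoot}, you make namespace disjointness explicit, and you treat non-\texttt{xyz:} IRIs as a vacuous case; for completeness, your distinctness assumption should also rule out \texttt{Entity} producing \texttt{fx:root}, since containers appear as objects in rules 2 and 3.
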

\begin{proof}
     The statement \ref{theorem:implicationsontriplepattern:1} is demonstrated by observing that the only mapping rule in which \texttt{fx:root} appears is at line \ref{def:fx:mappings:1} and the subject of this constructed quad is a Container and the property is \texttt{rdf:type}. Therefore, any triple pattern that has \texttt{fx:root} as object, graph must have a \textbf{TypeProperty} as property to be satisfiable on a \FX{} graph.

     Similarly, regarding the statement \ref{theorem:implicationsontriplepattern:2} we can observe that a Type is generated only by the mapping \ref{def:fx:mappings:6} which generates a quad having \texttt{rdf:type} as property.

     The statement \ref{theorem:implicationsontriplepattern:3} is demonstrated by observing that if we exclude the mapping rules that generate a container membership property are the \ref{def:fx:mappings:3} and \ref{def:fx:mappings:4} and those generating a quad having \texttt{rdf:type} as property are the \ref{def:fx:mappings:1} and \ref{def:fx:mappings:6}. Therefore, if we exclude also that the property of the triple is a variable, then the property must be a \textbf{StringSlot}.
\end{proof}

\subsection{Properties of \FX{} RDF graphs}

RDF graphs are usually interpreted as \textit{directed}, \textit{labelled} (mathematical) graphs, which we refer to as ``m-graphs'' to avoid confusion.
An m-graph can be constructed from an RDF graph by treating the subjects and objects of the triples as nodes and representing the properties as labelled edges.
Considering the properties of the \FX{} model demonstrated in the previous Sections we can derive some topological properties of the corresponding m-graph.

\begin{theorem}\label{theorem:mgraphconnectedess}
    \FX{} m-graphs built from a \FX{} RDF graphs are connected.
\end{theorem}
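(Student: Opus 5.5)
We show that every node of the m-graph built from the RDF graph of a single data source \ax{ds} is linked by an undirected path to the node $\tt{Entity(ds, r)}$, where \ax{r} is the unique root of \ax{ds} (\ref{def:fx:root}.\ref{def:fx:root:singleroot}). Connectedness follows, since any two nodes are then joined through this root node. We read the statement per data source, i.e. per named graph $\tt{IRI(ds)}$. If several data sources are merged into one graph, their roots are not joined by any triple; the theorem is only meant for, and only true for, the graph of a single data source.

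First we classify the nodes of the m-graph. Nodes are the subjects and objects of the quads produced by the rules of Definition~\ref{def:fx:mappings}. By Corollary~\ref{theorem:implicationsfxgraph}\ref{theorem:implicationsfxgraph:subjectcontainer}, every subject is built from a Container. By inspecting the rules, every object is one of four things: \texttt{fx:root} (rule~\ref{def:fx:mappings:1}), a Container entity (rules~\ref{def:fx:mappings:2} and~\ref{def:fx:mappings:3}), a Literal built from a Value (rules~\ref{def:fx:mappings:4} and~\ref{def:fx:mappings:5}), or a Type IRI (rule~\ref{def:fx:mappings:6}). It therefore suffices to connect each Container node to the root node, and then to attach each non-container node to some Container node.

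Second, we connect the containers. Let \ax{c} be a non-root container of \ax{ds}. By Proposition~\ref{proposition:connectiveness}, \ax{r} recursively contains \ax{c}, via a chain $\tt{(r,s_r),(c_1,s_1),\dots,(c_n,s_n)}$ in $\tt{hasSlot}$ and $\tt{(s_r,c_1),\dots,(s_n,c)}$ in $\tt{hasContainer}$. Each consecutive step $\tt{hasSlot(c_i,s_i)}\wedge\tt{hasContainer(s_i,c_{i+1})}$ triggers rule~\ref{def:fx:mappings:2} or rule~\ref{def:fx:mappings:3}, according to whether $\tt{s_i}$ is a StringSlot or a NumberSlot; axiom \ref{def:fx:basic}.\ref{def:fx:basic:7} guarantees it is one of the two. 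That rule emits the edge $\tt{Entity(ds,c_i)}\to\tt{Entity(ds,c_{i+1})}$, so the chain becomes a directed path from the root node to $\tt{Entity(ds,c)}$.

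Both rules require $\tt{includesContainer(ds,\cdot)}$ for both endpoints, and this is the step needing care. Every container in the chain must be included in the same \ax{ds}. We show this by induction along the chain: the root is included via \ref{def:fx:root}.\ref{def:fx:root:4}, and we must propagate inclusion from a container to its children. The axioms (\ref{def:fx:basic}.\ref{def:fx:basic:9}, \ref{def:fx:basic}.\ref{def:fx:basic:10}) only say that each container lies in exactly one data source; they do not force a child into its parent's data source. I would therefore state as an explicit assumption (implicit in the notion of data source) that a container held by a slot of a container of \ax{ds} also belongs to \ax{ds}. Alternatively, we can restrict the m-graph to containers of \ax{ds}, for which rules~\ref{def:fx:mappings:2} and~\ref{def:fx:mappings:3} fire along the chain by construction. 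I expect this to be the main obstacle.

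Third, we attach the remaining nodes. The \texttt{fx:root} node is adjacent to the root node by rule~\ref{def:fx:mappings:1}. A Literal or Type object occurs only in a quad whose subject is a Container node of \ax{ds}, and that node is already connected to the root. Hence every node reaches the root node through undirected edges, and the m-graph is connected.
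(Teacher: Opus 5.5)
Your proof is correct and takes essentially the paper's route: Proposition~\ref{proposition:connectiveness} gives a slot/container chain that rules~\ref{def:fx:mappings:2} and~\ref{def:fx:mappings:3} turn into a path from the root node, and \texttt{fx:root}, literals and types are attached directly to their containers. It is also more careful than the paper's proof, which silently relies on the per-data-source reading and on exactly the inclusion assumption you flag (the axioms indeed do not force a child container into its parent's data source); keep that explicit assumption rather than the ``restriction'' alternative, which does not by itself keep the chain inside \texttt{ds}, and note that the same assumption is what identifies the root supplied by Proposition~\ref{proposition:connectiveness} with the root of \texttt{ds}.
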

\begin{proof}
    We can observe that nodes of a \FX{} m-graph can be either Containers, Values, Types or \texttt{fx:root}.
    \chreplaced{Proposition}{Theorem}~\ref{proposition:connectiveness} demonstrates that for each container there is always a sequence of slots and containers that connect it the root.
    Moreover, Containers can be object only of quads having a Slot as property (see \ref{theorem:furtherimplications}.\ref{theorem:furtherimplications:2}), hence a sequence of Containers and Slots in the \FX{} model corresponds a path of Container nodes and Slot edges.
    Therefore, there is always a path that connects the Root node to any Container node.
    Finally, we can observe that Types, Values and \texttt{fx:root} are always directly connected to the Container they belong to (see \ref{def:fx:mappings}.\ref{def:fx:mappings:1}, \ref{def:fx:mappings}.\ref{def:fx:mappings:4}, \ref{def:fx:mappings}.\ref{def:fx:mappings:5} and \ref{def:fx:mappings}.\ref{def:fx:mappings:6}). Hence, the \FX{} m-graphs are connected.
\end{proof}

\begin{theorem}\label{theorem:fxgraphacyclic}
    \FX{} m-graphs built from a \FX{} RDF graphs are acyclic.
\end{theorem}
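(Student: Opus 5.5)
We argue by contradiction: suppose the \FX{} m-graph contains a directed cycle $n_0 \xrightarrow{p_1} n_1 \xrightarrow{p_2} \cdots \xrightarrow{p_k} n_k = n_0$. The plan is first to classify the nodes that can occur on such a cycle, then to map the cycle back to the \FX{} model, and finally to invoke the axiom that no container recursively contains itself (\ref{def:fx:slots}.\ref{lst:fx:slots:9}).

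\textbf{Step 1: every node on the cycle is a Container.} Every node on a directed cycle has an outgoing edge, so it is the subject of some constructed quad. By Corollary~\ref{theorem:implicationsfxgraph}.\ref{theorem:implicationsfxgraph:subjectcontainer}, subjects are always built from Containers. Hence Values (literals), Types and \texttt{fx:root} cannot lie on the cycle. These are sinks: they occur only as objects, under rules \ref{def:fx:mappings:1}, \ref{def:fx:mappings:4}, \ref{def:fx:mappings:5} and \ref{def:fx:mappings:6}. This step also rules out every edge labelled \texttt{rdf:type}, because such an edge always points to \texttt{fx:root} or to a Type.

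\textbf{Step 2: every edge on the cycle comes from a Slot holding a Container.} Each remaining edge has a Container as its object. By Corollary~\ref{theorem:furtherimplications}.\ref{theorem:furtherimplications:2}, the property is then a Slot. By bidirectionality (Corollary~\ref{theorem:bidirectionalmappings}), the edge therefore comes from rule \ref{def:fx:mappings:2} or \ref{def:fx:mappings:3}. Each such edge $n_{i-1}\xrightarrow{p_i} n_i$ thus corresponds to a slot $s_i$ with $\mathtt{hasSlot}(c_{i-1},s_i)$ and $\mathtt{hasContainer}(s_i,c_i)$. Both containers belong to the same data source, because the rules require $\mathtt{includesContainer}(ds,\cdot)$ for both.

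\textbf{Step 3: the cycle contradicts the model.} The sequence from Step 2 witnesses that $c_0$ recursively contains itself. This directly violates \ref{def:fx:slots}.\ref{lst:fx:slots:9}, or \ref{def:fx:slots}.\ref{lst:fx:slots:7} in the self-loop case $k=1$. As an alternative route, Proposition~\ref{proposition:singlepath} gives the same contradiction: following each container back through its unique holding slot reaches the Root, which has no holding slot (\ref{def:fx:root}.\ref{def:fx:root:6}), and a cycle would make this backward walk go on forever.

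\textbf{Main obstacle.} The delicate point is that the mapping back from RDF nodes to model entities must be injective. Two distinct containers must not collapse into the same RDF term, or a cycle could appear in the m-graph that does not exist in the model. I would handle this using the assumption that \texttt{Entity}$(ds,c)$ is determined by, and determines, the pair $(ds,c)$, which is part of Corollary~\ref{theorem:bidirectionalmappings}. Containers from different data sources remain distinct because \texttt{Entity} takes $ds$ as an argument. A second point needs the same care: \texttt{fx:root} and Type IRIs must never coincide with a Container entity. This again follows from the distinctness of the terms produced by the mapping rules.
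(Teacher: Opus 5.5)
Your proof is correct and follows essentially the paper's argument. Values, Types and \texttt{fx:root} never occur as subjects, so they cannot lie on a directed cycle; a cycle through Container nodes would then make a container recursively contain itself, contradicting \ref{def:fx:slots}.\ref{lst:fx:slots:9}. Your extra care (the self-loop case, mapping edges back via rules \ref{def:fx:mappings:2}--\ref{def:fx:mappings:3}, and the injectivity and distinctness of the constructed RDF terms) makes explicit what the paper's two-line proof leaves implicit; it does not change the argument.
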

\begin{proof}
    A cycle involving a container node is not possible, as this would imply that a container contains itself recursively, which would invalidate condition \ref{def:fx:slots}.\ref{lst:fx:slots:9}.
    Types, \texttt{fx:root} and Values are directly connected to the container they belong to only and can not have outgoing edges.
    Therefore, \FX{} m-graphs are acyclic.
\end{proof}

\begin{theorem}\label{theorem:singleroot}
    \FX{} m-graphs have a single root.
\end{theorem}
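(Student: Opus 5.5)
We read ``root'' of a directed m-graph as a node with no incoming edges from which every other node is reachable. The m-graph in question is the one obtained from the quads of a single data source \ax{ds}; quads of different data sources are never merged. The plan has two parts: show that such a node exists, then show that no other node has the property.

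\textbf{Existence.} By \ref{def:fx:root}.\ref{def:fx:root:1} and \ref{def:fx:root}.\ref{def:fx:root:7}, \ax{ds} has exactly one Root container \ax{r}, and let $n_r = \tt{Entity(ds,r)}$. By \ref{def:fx:root}.\ref{def:fx:root:6}, \ax{r} is held by no slot. The only mapping rules producing a quad whose object is a container are \ref{def:fx:mappings}.\ref{def:fx:mappings:2} and \ref{def:fx:mappings}.\ref{def:fx:mappings:3} (cf.\ Corollary~\ref{theorem:furtherimplications}), and both require \ax{hasContainer(s,c_2)}. Hence $n_r$ has in-degree zero. For reachability, reuse the argument of Theorem~\ref{theorem:mgraphconnectedess}, but keep track of edge direction:
\begin{itemize}
\item Proposition~\ref{proposition:connectiveness} (with Proposition~\ref{proposition:singlepath}) gives, for every container, a chain of \ax{hasSlot}/\ax{hasContainer} facts starting at \ax{r}.
\item Rules \ref{def:fx:mappings}.\ref{def:fx:mappings:2} and \ref{def:fx:mappings}.\ref{def:fx:mappings:3} turn each link of that chain into an edge directed from the parent container to the child container, so every container node is reachable from $n_r$.
\item Value nodes, Type nodes and \texttt{fx:root} appear only as objects of quads whose subject is a container (rules \ref{def:fx:mappings}.\ref{def:fx:mappings:1}, \ref{def:fx:mappings}.\ref{def:fx:mappings:4}--\ref{def:fx:mappings}.\ref{def:fx:mappings:6}). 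They are therefore one edge away from a reachable container.
\end{itemize}

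\textbf{Uniqueness.} Take any node $n \neq n_r$; we show $n$ has an incoming edge, so it cannot be a root.
\begin{itemize}
\item If $n$ is a non-root container, \ref{def:fx:root}.\ref{def:fx:root:hascontainer} gives a slot holding it. By \ref{def:fx:slots}.\ref{lst:fx:slots:3} that slot belongs to some container, which yields an incoming edge to $n$ by rule \ref{def:fx:mappings}.\ref{def:fx:mappings:2} or \ref{def:fx:mappings}.\ref{def:fx:mappings:3}.
\item If $n$ is a Value, Type or \texttt{fx:root}, it is an object in some quad, which is an incoming edge. This follows from \ref{def:fx:slots}.\ref{lst:fx:slots:1}, \ref{def:fx:types}.\ref{def:fx:types:2} and \ref{def:fx:mappings}.\ref{def:fx:mappings:1} respectively, together with the fact that nodes exist only as terms of generated quads.
\end{itemize}
Since Theorem~\ref{theorem:fxgraphacyclic} makes the graph acyclic, any node with an incoming edge also has an ancestor, so no such node can be a root.

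The main obstacle is term identity. Literals with equal lexical form, and Type IRIs \ax{IRI(xyz:,t)}, may collapse into a single RDF node shared by several containers, as the related-work section warns. This does not create in-degree-zero nodes, but it must not be mistaken for a second source; the case analysis above covers it. The other risk is clashes between container entities, which is ruled out because \ax{Entity(ds,c)} is injective in \ax{c} within \ax{ds}.
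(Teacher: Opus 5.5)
Your argument is correct relative to the paper's framework, but it proves a different reading of ``single root'' than the paper does. The paper's proof is a short counting argument. Rule \ref{def:fx:mappings}.\ref{def:fx:mappings:1} is the only rule that emits an \texttt{rdf:type fx:root} edge, and it is keyed on \ax{hasRoot}, which is total and functional per data source (\ref{def:fx:root}.\ref{def:fx:root:1}, \ref{def:fx:root}.\ref{def:fx:root:7}). Hence exactly one node per data source is typed \texttt{fx:root}. This is the sense used by condition \ref{condition:singleroot} and by the proxy graph for \ref{def:fx:root}.\ref{def:fx:root:7}. Your write-up never states this fact, although it follows at once from the two axioms you open with.

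You instead show that $\tt{Entity(ds,r)}$ is the unique source of the m-graph and reaches every node. That matches the sense in which the paper later calls \FX{} graphs ``rooted'', and it gives a genuinely graph-theoretic statement. The cost is many more dependencies:
\begin{itemize}
\item Propositions~\ref{proposition:connectiveness} and~\ref{proposition:singlepath};
\item injectivity of \ax{Entity}, and no clashes between container entities, type IRIs and \texttt{fx:root};
\item the unstated invariant that a container's parent lies in the same data source. Rules \ref{def:fx:mappings}.\ref{def:fx:mappings:2} and \ref{def:fx:mappings}.\ref{def:fx:mappings:3} require \ax{includesContainer(ds,\cdot)} for both endpoints. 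No axiom guarantees this for the parent you obtain via \ref{def:fx:root}.\ref{def:fx:root:hascontainer} and \ref{def:fx:slots}.\ref{lst:fx:slots:3}. The paper's connectedness results tacitly rely on the same assumption, so you are on equal footing there, but the counting route does not need it at all.
\end{itemize}

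Two smaller points. First, your uniqueness case split omits a container $c\neq r$ of \ax{ds} with \ax{Root(c)}. It is excluded via \ref{def:fx:root}.\ref{def:fx:root:3}, \ref{def:fx:root}.\ref{def:fx:root:4}, \ref{def:fx:basic}.\ref{def:fx:basic:10} and \ref{def:fx:root}.\ref{def:fx:root:7}, but you should say so. Second, the final appeal to acyclicity is unnecessary: under your own definition, an incoming edge already disqualifies a node from being a root.
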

\begin{proof}
    The only mapping rule matching the root container is the \ref{def:fx:mappings}.\ref{def:fx:mappings:1}.
    For each DataSource there is a single solution for the query of this rule since there is one and only one root for each DataSource, see \ref{def:fx:root}.\ref{def:fx:root:atleastoneroot} and \ref{def:fx:root}.\ref{def:fx:root:singleroot}.
\end{proof}

\begin{theorem}\label{theorem:mgraphsinglepath}
    In \FX{} m-graphs, there always exist a single directed path connecting the root node to any container node.
\end{theorem}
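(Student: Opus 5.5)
The plan is to transfer Proposition~\ref{proposition:singlepath} from the \FX{} model to the m-graph. The transfer goes through the mapping of Definition~\ref{def:fx:mappings}, and it needs two steps: existence of a directed path and uniqueness of that path.

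\textbf{Existence.} Fix a container node $n$, which by Corollary~\ref{theorem:implicationsfxgraph}.\ref{theorem:implicationsfxgraph:subjectcontainer} is \texttt{Entity(ds,c)} for some container \ax{c} of a data source \ax{ds}. By Proposition~\ref{proposition:connectiveness} there are slots \ax{s_r,s_1,\dots,s_n} and containers \ax{r,c_1,\dots,c_n} with \ax{(r,s_r),(c_i,s_i)} in \ax{hasSlot} and \ax{(s_r,c_1),\dots,(s_n,c)} in \ax{hasContainer}. All these containers belong to the same \ax{ds}, by \ref{def:fx:basic}.\ref{def:fx:basic:10} together with the fact that the sequence starts at the root of \ax{ds}. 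Each step \ax{hasSlot(c_i,s_i)\wedge hasContainer(s_i,c_{i+1})} satisfies the body of rule \ref{def:fx:mappings}.\ref{def:fx:mappings:2} or \ref{def:fx:mappings}.\ref{def:fx:mappings:3}, according to whether \ax{s_i} is a StringSlot or a NumberSlot; one of the two must hold by \ref{def:fx:basic}.\ref{def:fx:basic:7}. So each step yields an edge from \texttt{Entity(ds,c_i)} to \texttt{Entity(ds,c_{i+1})}, and chaining these edges gives a directed path from the root node, i.e.\ the node typed \texttt{fx:root} (Theorem~\ref{theorem:singleroot}), to $n$. This is essentially the argument already used in Theorem~\ref{theorem:mgraphconnectedess}, now with attention to edge direction.

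\textbf{Uniqueness.} Suppose two distinct directed paths from the root reach $n$. Walking both paths backwards from $n$, they must diverge at some node $m$ that has two distinct incoming edges on the two paths. Every node on such a path other than its last is a subject, hence a container (Corollary~\ref{theorem:implicationsfxgraph}.\ref{theorem:implicationsfxgraph:subjectcontainer}), so $m$ is a container node. By Corollary~\ref{theorem:furtherimplications}.\ref{theorem:furtherimplications:2}, each edge into $m$ has a Slot as property. By bidirectionality (Corollary~\ref{theorem:bidirectionalmappings}), each edge arises from a unique pair \ax{hasSlot(c',s)\wedge hasContainer(s,c_m)}. But \ax{c_m} is held by a single slot (\ref{def:fx:slots}.\ref{lst:fx:slots:10}, read as intended; alternatively \ref{def:fx:root}.\ref{def:fx:root:8} together with the single-holder constraint), and that slot lies in a single container (\ref{def:fx:slots}.\ref{lst:fx:slots:4}). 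Hence the two incoming edges coincide, a contradiction. Acyclicity (Theorem~\ref{theorem:fxgraphacyclic}) guarantees that the backward walk terminates at the root, so no path can loop.

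\textbf{Main obstacle.} The hard part is showing that a container node has in-degree one in the m-graph. Uniqueness of the holding slot in the model does not immediately give uniqueness of the incoming RDF edge. Two different slots with the same key in different containers might map to the same predicate IRI, but this is harmless, since the subjects differ. A subtler risk is that \texttt{Entity} could conflate distinct containers, which would make the m-graph non-injective on nodes. I would handle this by relying on the bijectivity stated in Corollary~\ref{theorem:bidirectionalmappings}, namely that \texttt{Entity(ds,c)} is injective in \ax{c}, and I would state that assumption explicitly.
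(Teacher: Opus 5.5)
Your proof is correct and follows the paper's overall strategy. The paper's proof simply combines Proposition~\ref{proposition:singlepath} with the remark, from the proof of Theorem~\ref{theorem:mgraphconnectedess}, that a model sequence of containers and slots corresponds to a directed path of container nodes and slot edges. Your existence half is that remark made precise. Your uniqueness half is done differently: instead of lifting model-level uniqueness, you prove a local fact in the m-graph, namely that every container node has in-degree at most one, which together with acyclicity forces the backward walk. This exposes what the paper's one-line appeal to the ``correspondence'' silently needs for uniqueness to transfer, i.e.\ that distinct m-graph paths come from distinct model sequences. That in turn requires \texttt{Entity} to be injective, and it requires the single-holder constraint \ref{def:fx:slots}.\ref{lst:fx:slots:10} read with \texttt{hasContainer}. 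Note that the paper's proof of Proposition~\ref{proposition:singlepath} cites \ref{def:fx:slots}.\ref{lst:fx:slots:8} instead, which only says that a slot holds at most one container.

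One justification of yours does not go through. You claim that every container in the chain lies in the same data source, by \ref{def:fx:basic}.\ref{def:fx:basic:10}. But that axiom only says a container is included in at most one data source; it does not propagate membership along \texttt{hasSlot}/\texttt{hasContainer}. So it does not show that the chain from Proposition~\ref{proposition:connectiveness} starts at the root of \texttt{ds} and stays inside \texttt{ds}. Rules \ref{def:fx:mappings}.\ref{def:fx:mappings:2} and \ref{def:fx:mappings}.\ref{def:fx:mappings:3} need exactly this in order to emit the edges. The axioms as written even allow a container of one data source to be held by a slot of another data source's container. This is therefore an unstated assumption, which the paper also makes tacitly inside Proposition~\ref{proposition:singlepath}. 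State it explicitly rather than deriving it from \ref{def:fx:basic}.\ref{def:fx:basic:10}.
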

\begin{proof}
    \chreplaced{Proposition}{Theorem}~\ref{proposition:singlepath} demonstrates that for each Container c there is always a single sequence of containers and slots that connects the root of the data source to c. 
    Proof of Theorem~\ref{theorem:mgraphconnectedess} demonstrates that a sequence of Containers and Slots in the \FX{} model corresponds a directed path of Container nodes and Slot edges.
    Therefore, each container is connected to the root of the its data source by one and only one path.
\end{proof}

\begin{corollary}
\label{theorem:mgraphsinglepathcontainers}
    In \FX{} m-graphs, there is either a single directed path connecting any pair of container nodes, or there is none.
\end{corollary}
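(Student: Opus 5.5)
Take two container nodes $c_1$ and $c_2$ in a \FX{} m-graph. The plan is to show that any directed path from $c_1$ to $c_2$ is forced to be unique, by splicing it onto root paths and applying Theorem~\ref{theorem:mgraphsinglepath}. If no directed path from $c_1$ to $c_2$ exists, the second alternative of the statement holds and there is nothing to prove. So assume at least one such path $\pi$ exists. It remains to show that any other directed path $\pi'$ from $c_1$ to $c_2$ equals $\pi$.

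First, every node on $\pi$ and $\pi'$ is a container node. By Corollary~\ref{theorem:implicationsfxgraph}.\ref{theorem:implicationsfxgraph:subjectcontainer}, only containers appear as subjects. Types, Values and \texttt{fx:root} have no outgoing edges, as noted in the proof of Theorem~\ref{theorem:fxgraphacyclic}. Hence every intermediate node of a directed path, which must have an outgoing edge, is a container. The endpoint $c_2$ is a container by assumption.

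Next, let $\rho$ be the unique directed path from the root node to $c_1$ given by Theorem~\ref{theorem:mgraphsinglepath}. The concatenations $\rho\pi$ and $\rho\pi'$ are directed walks from the root to $c_2$. By Theorem~\ref{theorem:fxgraphacyclic} the m-graph is acyclic, so no node repeats in these walks. Hence both are directed paths from the root to the container node $c_2$. By the uniqueness part of Theorem~\ref{theorem:mgraphsinglepath}, $\rho\pi=\rho\pi'$, and stripping the common prefix $\rho$ gives $\pi=\pi'$.

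The main obstacle is that Theorem~\ref{theorem:mgraphsinglepath} must give uniqueness at the level of m-graph paths, including edge labels, and not only node sequences. If two parallel edges between the same pair of containers could exist, uniqueness of the path would fail. I would rule this out directly from the model. Each non-root container is held by exactly one slot (\ref{def:fx:slots}.\ref{lst:fx:slots:10}). Each slot belongs to exactly one container (\ref{def:fx:slots}.\ref{lst:fx:slots:4}). Each container-valued quad arises from exactly one slot via rules \ref{def:fx:mappings:2} and \ref{def:fx:mappings:3}. Together these give every non-root container node exactly one incoming container edge. With in-degree one, uniqueness also follows by backward induction along the paths from $c_2$, which I would present as the alternative argument.
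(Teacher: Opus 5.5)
Your proof is correct and takes essentially the same route as the paper, which derives the corollary directly from Theorem~\ref{theorem:mgraphsinglepath}: for two containers $x$ and $y$, either $y$ lies on the unique root path to $x$, or it does not. Your splicing of an arbitrary $c_1$-to-$c_2$ path onto the root path $\rho$, combined with acyclicity, is exactly the step the paper leaves implicit. Your in-degree-one observation for non-root container nodes, which uses the axioms that each container is held by a single slot and each slot belongs to a single container, is a sound extra justification for uniqueness at the level of labelled edges.
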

\begin{proof}
    This is a consequence of the Theorem~\ref{theorem:mgraphsinglepath}. We observe that for a given pair of nodes x and y, either y lies on the path from the root to x, or it does not.
\end{proof}

%Façade-X RDF graphs are:
%\begin{itemize}
    %\item Directed
    %\item Acyclic
    %\item Connected 
    %\item Labelled (...RDF)
%    \item Simple directed: no loops and no multiple arcs on the same two nodes [ARE WE SURE? 
%    G1: 
%        <a> a fx:root . 
%        <a> <b> "Enrico" . <a> <b1> "Enrico" .
%   G2:
%        <a> a fx:root . 
%        <a> rdf:\_1 <b> .
%        <a> rdf:\_2 <c> .
%        <b> a <Person> . 
%        <c> a <Person> .
%    ]
%    \item Non symmetric (edges only appear once in one direction) 
%    \item Non-transitive reducible (the transitive reduction of a FX graph is the same as the original FX graph)
    % \item NO Rooted tree: out-tree, all arcs are oriented against the root. Then, it is also bipartite and planar (edges do not intersect).
    %\item Single path to Container and FXRoot.
    %\item Rooted graph: there exists a directed path to every vertex from a distinguished root vertex
    %\item \FX graphs have at least one node typed Root.
%\end{itemize}

\section{Satisfiability of Basic Graph Patterns under \FX}\label{sec:theory}We characterise the Basic Graph Patterns that admits a solution on a \FX{} RDF graph (\sfx{BGP}).
Although the number of satisfiable \sfx{BGP} is infinite, we can characterise a \sfx{BGP} from the Triple Patterns (TPs) that admit a solution on a \FX{} RDF Graph and the possible joins among them.
Moreover, although the \FX{} model include their definition, Resource and DataSource are not involved in BGP expressions. 
In fact, a DataSource is mapped to a named graph (see Definition~\ref{def:fx:mappings}). 
A \FX{} Resource is mapped to an RDF Dataset, as intended by~\cite{w3c:datasets}.
In what follows, we study basic graph patterns.

\subsection{Satisfiable Triple Patterns} 
We remind some properties of a \FX{} RDF graph that can be derived by the theory outlined in the previous Sections.
\begin{itemize}
    \item The subject must be a Container;
    \item The set of all possible predicates is the union of TypeProperty, SlotString and SlotNumber;
    \item The set of all possible objects is the union of Value, Container, FXRoot and Type;
    \item Container, TypeProperty, SlotString, SlotNumber, Value, Root and Type, of these sets are mutually disjoint by definition.
\end{itemize}
Given these conditions, we can have only 6 TPs compatible with a \FX{} graph (out of the 12 possible combinations).
Table~\ref{tab:fxtripleppatterns} shows the TPs resulting all \chadded{the 12} possible  combinations of \FX{} components, together with an indication of their satisfiability.
\chadded{Consider the \FX{} triple pattern ``\texttt{Container} \texttt{SlotNumber} \texttt{Value}'' (first row of the table). This is satisfiable, since the \FX{} graph can contain such a triple. 
The last column of the table indicates the SPARQL triple patterns that can match the corresponding \FX{} triple pattern. For example, the SPARQL triple pattern  \VAR{}\IRI{}\LIT{} (e.g. \texttt{?s} \texttt{rdf:\_1} \texttt{"some value".}) can match   ``\texttt{Container} \texttt{SlotNumber} \texttt{Value}'' as the variable \texttt{?s} can be resolved to a container, \texttt{rdf:\_1} is a \texttt{SlotNumber} and \texttt{"some value"} is a Literal. The \FX{} triple pattern ``\texttt{Container} \texttt{SlotNumber} \texttt{Type}'' (second row of the table) is not satisfiable as  types in \FX{} occur only in triples with a \texttt{TypeProperty} as predicate.}

\begin{table}[th]
    \centering
    
    \caption{Triple Patterns result in all possible combinations of \FX{} components, together with an indication of their satisfiability on a \FX{} RDF graph.}
    \label{tab:fxtripleppatterns}
    \resizebox{\textwidth}{!}{\begin{tabular}{cccp{6.5cm}}
        \hline
        \textbf{Id} & \textbf{\FX{} Triple Pattern} & \chadded{\textbf{Satisfiability}}\chdeleted{\textbf{Validity}} & \textbf{Corresponding \chadded{SPARQL}\chdeleted{RDF} Triple Patterns} \\
        \hline
        \hline

\texttt{C.SN.V} & \texttt{Container} \texttt{SlotNumber} \texttt{Value} & \checkmark{} & \VAR{}\VAR{}\VAR{}, \VAR{}\VAR{}\BN{}, \VAR{}\VAR{}\LIT{}, \VAR{}\IRI{}\VAR{}, \VAR{}\IRI{}\BN{}, \VAR{}\IRI{}\LIT{}, \IRI{}\VAR{}\VAR{}, \IRI{}\VAR{}\BN{}, \IRI{}\VAR{}\LIT{}, \IRI{}\IRI{}\VAR{}, \IRI{}\IRI{}\BN{}, \IRI{}\IRI{}\LIT{}, \BN{}\VAR{}\VAR{}, \BN{}\VAR{}\BN{}, \BN{}\VAR{}\LIT{}, \BN{}\IRI{}\VAR{}, \BN{}\IRI{}\BN{}, \BN{}\IRI{}\LIT{}  \\
\texttt{C.SN.T} & \texttt{Container} \texttt{SlotNumber} \texttt{Type} & - & -  \\
\texttt{C.SN.R} & \texttt{Container} \texttt{SlotNumber} \texttt{FXRoot} & - & -  \\
\texttt{C.SN.C} & \texttt{Container} \texttt{SlotNumber} \texttt{Container} & \checkmark{} & \VAR{}\VAR{}\VAR{}, \VAR{}\VAR{}\IRI{}, \VAR{}\VAR{}\BN{}, \VAR{}\IRI{}\VAR{}, \VAR{}\IRI{}\IRI{}, \VAR{}\IRI{}\BN{}, \IRI{}\VAR{}\VAR{}, \IRI{}\VAR{}\IRI{}, \IRI{}\VAR{}\BN{}, \IRI{}\IRI{}\VAR{}, \IRI{}\IRI{}\IRI{}, \IRI{}\IRI{}\BN{}, \BN{}\VAR{}\VAR{}, \BN{}\VAR{}\IRI{}, \BN{}\VAR{}\BN{}, \BN{}\IRI{}\VAR{}, \BN{}\IRI{}\IRI{}, \BN{}\IRI{}\BN{}  \\
\texttt{C.SS.V} & \texttt{Container} \texttt{SlotString} \texttt{Value} & \checkmark{} & \VAR{}\VAR{}\VAR{}, \VAR{}\VAR{}\BN{}, \VAR{}\VAR{}\LIT{}, \VAR{}\IRI{}\VAR{}, \VAR{}\IRI{}\BN{}, \VAR{}\IRI{}\LIT{}, \IRI{}\VAR{}\VAR{}, \IRI{}\VAR{}\BN{}, \IRI{}\VAR{}\LIT{}, \IRI{}\IRI{}\VAR{}, \IRI{}\IRI{}\BN{}, \IRI{}\IRI{}\LIT{}, \BN{}\VAR{}\VAR{}, \BN{}\VAR{}\BN{}, \BN{}\VAR{}\LIT{}, \BN{}\IRI{}\VAR{}, \BN{}\IRI{}\BN{}, \BN{}\IRI{}\LIT{}  \\
\texttt{C.SS.T} & \texttt{Container} \texttt{SlotString} \texttt{Type} & - & -  \\
\texttt{C.SS.R} & \texttt{Container} \texttt{SlotString} \texttt{FXRoot} & - & -  \\
\texttt{C.SS.C} & \texttt{Container} \texttt{SlotString} \texttt{Container} & \checkmark{} & \VAR{}\VAR{}\VAR{}, \VAR{}\VAR{}\IRI{}, \VAR{}\VAR{}\BN{}, \VAR{}\IRI{}\VAR{}, \VAR{}\IRI{}\IRI{}, \VAR{}\IRI{}\BN{}, \IRI{}\VAR{}\VAR{}, \IRI{}\VAR{}\IRI{}, \IRI{}\VAR{}\BN{}, \IRI{}\IRI{}\VAR{}, \IRI{}\IRI{}\IRI{}, \IRI{}\IRI{}\BN{}, \BN{}\VAR{}\VAR{}, \BN{}\VAR{}\IRI{}, \BN{}\VAR{}\BN{}, \BN{}\IRI{}\VAR{}, \BN{}\IRI{}\IRI{}, \BN{}\IRI{}\BN{}  \\
\texttt{C.TP.V} & \texttt{Container} \texttt{TypeProperty} \texttt{Value} & - & -  \\
\texttt{C.TP.T} & \texttt{Container} \texttt{TypeProperty} \texttt{Type} & \checkmark{} & \VAR{}\VAR{}\VAR{}, \VAR{}\VAR{}\IRI{}, \VAR{}\VAR{}\BN{}, \VAR{}\IRI{}\VAR{}, \VAR{}\IRI{}\IRI{}, \VAR{}\IRI{}\BN{}, \IRI{}\VAR{}\VAR{}, \IRI{}\VAR{}\IRI{}, \IRI{}\VAR{}\BN{}, \IRI{}\IRI{}\VAR{}, \IRI{}\IRI{}\IRI{}, \IRI{}\IRI{}\BN{}, \BN{}\VAR{}\VAR{}, \BN{}\VAR{}\IRI{}, \BN{}\VAR{}\BN{}, \BN{}\IRI{}\VAR{}, \BN{}\IRI{}\IRI{}, \BN{}\IRI{}\BN{}  \\
\texttt{C.TP.R} & \texttt{Container} \texttt{TypeProperty} \texttt{FXRoot} & \checkmark{} & \VAR{}\VAR{}\VAR{}, \VAR{}\VAR{}\IRI{}, \VAR{}\VAR{}\BN{}, \VAR{}\IRI{}\VAR{}, \VAR{}\IRI{}\IRI{}, \VAR{}\IRI{}\BN{}, \IRI{}\VAR{}\VAR{}, \IRI{}\VAR{}\IRI{}, \IRI{}\VAR{}\BN{}, \IRI{}\IRI{}\VAR{}, \IRI{}\IRI{}\IRI{}, \IRI{}\IRI{}\BN{}, \BN{}\VAR{}\VAR{}, \BN{}\VAR{}\IRI{}, \BN{}\VAR{}\BN{}, \BN{}\IRI{}\VAR{}, \BN{}\IRI{}\IRI{}, \BN{}\IRI{}\BN{}  \\
\texttt{C.TP.C} & \texttt{Container} \texttt{TypeProperty} \texttt{Container} & - & -  \\

\hline
         
    \end{tabular}}
\end{table}

\subsection{Satisfiable Join conditions} 

\chadded{Without loss of generality, we can study the satisfiability of \sfx{BGP} consisting of two TPs.
In fact, more complex BGPs can be derived by joining BPGs of one or two TPs.}
\chdeleted{Without loss of generality, we can enumerate only \sfx{BGP} consisting of two TPs as more complex BGP can derived by joining BPGs of one or two TPs that have a single node in common.}
\chadded{Since properties never occur as the subject or object in \FX{}, joins that equate the subject or object of one TP with the predicate of another TP are not permitted.}
\chdeleted{We have that the join condition of two \FX{} TPs cannot equate subject (or object) of a triple pattern with the predicate of the other, as the \FX{} types are pairwise disjoint (see \ref{def:fx:disjointness}).}
Therefore, the joins $ S \bowtie P$ and  $ P \bowtie O$ will always result in a empty solution.
As a result, \FX{} TPs can be involved in four types of joins, namely: $ S \bowtie S$, $ P \bowtie P$, $ S \bowtie O$, and $ O \bowtie O$.

However, not all \FX{} TPs can be involved in all of these joins.
Consider the TPs \texttt{C.SN.V} and \texttt{C.SN.C}. 
The only joins allowed are $ S \bowtie S$, $ P \bowtie P$, and $ S \bowtie O$, but not $ O \bowtie O$ since Values and Containers are disjoint.

\chadded{For example, the following SPARQL BGPs are satisfiable:}
\begin{itemize}
    \item \chadded{\texttt{?s} \texttt{rdf:\_1} \texttt{?v} . \texttt{?s} \texttt{rdf:\_2} \texttt{?c} .  which is example of $ S \bowtie S$ join;}
    \item \chadded{\texttt{<i1>} \texttt{?p} \texttt{"some value"} . \texttt{<i2>} \texttt{?p} \texttt{<i3>} .  which is example of $ P \bowtie P$ join;}
    \item \chadded{\texttt{<i>} \texttt{rdf:\_1} \texttt{?c} . \texttt{?c} \texttt{rdf:\_2} \texttt{"some value"} .   which is example of $ S \bowtie O$ join;}
\end{itemize}
\chadded{Moreover, consider the following BGP.}
\\
\chadded{\texttt{?s1} \texttt{rdf:\_1} \texttt{?o} .} \\
\chadded{\texttt{?s2} \texttt{rdf:\_2} \texttt{?o} .}
\\
\chadded{We have that either \texttt{?o} is a value or it is a container; it cannot be both, as these are mutually exclusive. Therefore, if other constraints from different TPs require that \texttt{?o} be both a container and a value, the pattern is not satisfiable. Moreover, we will see that \texttt{?o} can only be a \texttt{Value}.}

In general, assuming the order of the triple pattern is meaningful, we have 6 valid \FX{} TPs, hence 36 \FX{} basic graph patterns of two joined \FX{} TPs.
If we do not consider the order of the TPs,  there are 21 possible \FX{} graph patterns\footnote{This is the number of all possible combinations without repetition of 2 elements from a set of 6, i.e. $\binom{6}{2}$, plus the number of \FX{} graph patterns where the same pattern is repeated twice, i.e. 6.}.
The join condition among these patterns can be one of the 4 above, resulting in 84 combinations.

To narrow down the number of satisfiable combinations, we can make the following considerations.
%Assuming that the other members of the triples are not distinct (otherwise both triple patterns would be bound to the same triple), we have that\todo{Not true, eg. spoXsp1o  }:
\begin{itemize}

    \item $S \bowtie S$ is always satisfiable as it equates containers (see \ref{theorem:implicationsfxgraph}.\ref{theorem:implicationsfxgraph:subjectcontainer});

    \item $S \bowtie O$ is satisfiable when the joined node of the first and the joined node of the second \FX{} triple pattern are of the same \FX{} type (otherwise the graph would not have a solution as the \FX{} types are pairwise disjoint, see Definition \ref{def:fx:disjointness}).
    
    %\item $ O \bowtie O$ is admissible when the objects of the two \FX{} patterns are both a Value or a Type;
    
    %\item $P \bowtie P$ is admissible when the predicates of the two triple patterns belong to the same \FX{} primitive.
\end{itemize}
As a result, we have 42 possible \FX{} graph patterns, which are showed in Table~\ref{tab:fxgraphpatterns}. 
%It is easy to see that these graph patterns can be combined to construct more complex structures.

\begin{table}[th]
    \centering
    
    \caption{The list of all possible joins of two \FX{} triple patterns with their corresponding admissible joining condition.}
    \label{tab:fxgraphpatterns}
    \begin{tabular}{cc}
        \hline
         \textbf{\FX{} Graph Pattern} & \textbf{Satisfiable Joins}  \\
         \hline\hline
         
 \texttt{C.SN.V} $\bowtie$ \texttt{C.SN.V} & $ S \bowtie S$ $ P \bowtie P$   $ O \bowtie O$  \\ 
 \texttt{C.SN.V} $\bowtie$ \texttt{C.SS.V} & $ S \bowtie S$   $ O \bowtie O$  \\ 
 \texttt{C.SN.V} $\bowtie$ \texttt{C.SS.C} & $ S \bowtie S$  $ S \bowtie O$   \\ 
 \texttt{C.SN.V} $\bowtie$ \texttt{C.TP.T} & $ S \bowtie S$     \\ 
 \texttt{C.SN.V} $\bowtie$ \texttt{C.TP.R} & $ S \bowtie S$     \\ 
 \texttt{C.SN.C} $\bowtie$ \texttt{C.SN.V} & $ S \bowtie S$ $ P \bowtie P$     \\ 
 \texttt{C.SN.C} $\bowtie$ \texttt{C.SN.C} & $ S \bowtie S$ $ P \bowtie P$  $ S \bowtie O$ $ O \bowtie O$  \\ 
 \texttt{C.SN.C} $\bowtie$ \texttt{C.SS.V} & $ S \bowtie S$     \\ 
 \texttt{C.SN.C} $\bowtie$ \texttt{C.SS.C} & $ S \bowtie S$  $ S \bowtie O$ $ O \bowtie O$  \\ 
 \texttt{C.SN.C} $\bowtie$ \texttt{C.TP.T} & $ S \bowtie S$     \\ 
 \texttt{C.SN.C} $\bowtie$ \texttt{C.TP.R} & $ S \bowtie S$     \\ 
 \texttt{C.SS.V} $\bowtie$ \texttt{C.SS.V} & $ S \bowtie S$ $ P \bowtie P$   $ O \bowtie O$  \\ 
 \texttt{C.SS.V} $\bowtie$ \texttt{C.TP.T} & $ S \bowtie S$     \\ 
 \texttt{C.SS.V} $\bowtie$ \texttt{C.TP.R} & $ S \bowtie S$     \\ 
 \texttt{C.SS.C} $\bowtie$ \texttt{C.SS.V} & $ S \bowtie S$ $ P \bowtie P$     \\ 
 \texttt{C.SS.C} $\bowtie$ \texttt{C.SS.C} & $ S \bowtie S$ $ P \bowtie P$  $ S \bowtie O$ $ O \bowtie O$  \\ 
 \texttt{C.SS.C} $\bowtie$ \texttt{C.TP.T} & $ S \bowtie S$     \\ 
 \texttt{C.SS.C} $\bowtie$ \texttt{C.TP.R} & $ S \bowtie S$ \\ 
 \texttt{C.TP.T} $\bowtie$ \texttt{C.TP.T} & $ S \bowtie S$ $ P \bowtie P$   $ O \bowtie O$  \\ 
 \texttt{C.TP.R} $\bowtie$ \texttt{C.TP.T} & $ S \bowtie S$ $ P \bowtie P$     \\ 
 \texttt{C.TP.R} $\bowtie$ \texttt{C.TP.R} & $ S \bowtie S$ $ P \bowtie P$   $ O \bowtie O$  \\
 
\hline
    \end{tabular}
\end{table}

\subsection{Satisfiable Basic Graph Patterns}\label{sec:satbgps}
Combining these patterns yields to more complex graph structures as those reported in Table~\ref{tab:bgp_motifs}.
%An unsatisfiable \sfx{BGP} is  a BGP violating at least a formula of the \FX{} model.%\todo{No! Perche' prima sono definiti come quelli che hanno una soluzione. Quindi non possono includere bgp non-satisfiable. Cioe', $BGP^{FX} \subset BGP$ cioe' i bgp che hanno almeno un potenziale grafo FX come soluzione.}
It is worth noting that  we can consider a BGP as a conjunction of its sub-BGPs, if a sub-BGP violates the \FX{} model, then the BGP is unsatisfiable.
Therefore, given a BGP, we could extract every possible sub-BGP from it and check whether any of these violate the \FX{} model.
If we find a sub-BGP that violates \FX{}, then the BGP is unsatisfiable.

We observe that there are formulas of the model that can be violated without compromising the satisfiability of the BGP.
In fact, a BGP is satisfiable if there is a \textit{partial} mapping for its variables under which the BGP is isomorphic with a \textit{portion} of the \FX{} RDF graph. 
Therefore, formulas requiring the existence of triples can be violated without compromising the satisfiability of the BGP, because the mandatory triples could be in a different portion of the graph.
For example, the formula \ref{def:fx:slots}.\ref{lst:fx:slots:1} implies that every value must be contained in a slot. 
However, BGPs that do not involve slots or values (e.g. \texttt{?s a ?t}) can violate formula \ref{def:fx:slots}.\ref{lst:fx:slots:1} while still being satisfiable.

This approach, i.e. analysing the BGP by its sub components to assess the satisfiability, raises three questions: 
\begin{enumerate}[label=\textit{(\roman*)}]
    \item\label{procedure:question1} Which formulas, if violated, would make a BGP unsatisfiable?
    \item\label{procedure:question2} How can we check whether a (sub-)BGP violates a model formula?
    \item\label{procedure:question3} Which (sub-)BGPs do we need to extract?
\end{enumerate}

\textbf{Answer \ref{procedure:question1}}. We identify the categories for the model formulas: \textit{type/disjoint implications} which are the formulas that allow us to reason about the types of the individuals; \textit{min-cardinality constraints} (resp. \textit{max-cardinality constraints}) which are the  formulas that require that there must be a minimum (resp. maximum) of individuals/pairs in the interpretation of a model; and \textit{integrity constraints} which are the formulas that impose that one or multiple individuals/pairs do not exist.

%\todo{Questo e' il periodo piu' bello dell'articolo. :-)}
It is worth noticing that min-cardinality constraints can be violated without compromising the satisfiability of the BGP.
These formulas imply the (implicit) existence of triples  that can exist in a different portion of the \FX{} RDF graph.
Instead, violating a maximum cardinality constraint, an integrity constraint, a type/disjoint implications  renders a BGP unsatisfiable, since these constraints prevent triples from existing, whereas BGP requires them to exist.
These call these kind of formulas \textit{inviolable}.

Moreover, we note that there are formulas that, by definition of the mapping rules \textit{alone} (see~\ref{def:fx:mappings}), can not be violated. 
For example, formula \ref{def:fx:basic}.\ref{def:fx:basic:6} can not be violated as NumberSlot and StringSlot are always generated by distinct rules.
Instead, for formula \ref{def:fx:basic}.\ref{def:fx:basic:1} to be violated, we need to know that StringSlots or NumberSlots are both Slots, which is knowledge that cannot be derived from the mappings alone.
Finally, it is worth noting that type implications themselves can only lead to the unsatisfiability of the BGP when considered alongside a disjoint implication.

\begin{table}[t]
    \centering

    \caption{Classification of the \FX{} model formulas by the categories identified: Min-Cardinality Constraint (Min), Max-Cardinality Constraint (Max), Type Implications (Type), Disjointness (Dis), Integrity Constraints (Int), BGP Inviolable (Inv) and BGP Inviolable By Construction of the Mapping Rules (Inv BC). %Formulas introduced in Definition~\ref{def:fx:disjointness} are not included as they are all disjoint implications classified as \textit{Inviolable}.
    }
    \label{tab:formula_classification}
    
    \resizebox{\linewidth}{!}{\begin{tabular}{c|cccc|cccccccccc|cccccccccc}
    \hline
    
         & 
        \multicolumn{4}{c|}{Definition \ref{def:fx:resources}} &
        \multicolumn{10}{c|}{Definition \ref{def:fx:basic}} &
        \multicolumn{10}{c}{Definition \ref{def:fx:slots}}\\
        
        Form  & 
        \ref{def:fx:resources:1} & \ref{def:fx:resources:2} & \ref{def:fx:resources:3} & \ref{def:fx:resources:4} &
        
        \ref{def:fx:basic:1} & \ref{def:fx:basic:2} & \ref{def:fx:basic:3} & \ref{def:fx:basic:4} & \ref{def:fx:basic:5} &\ref{def:fx:basic:6} & \ref{def:fx:basic:7} & \ref{def:fx:basic:8} & \ref{def:fx:basic:9} & \ref{def:fx:basic:10} &

        \ref{lst:fx:slots:1} & \ref{lst:fx:slots:3} & \ref{lst:fx:slots:4} & \ref{lst:fx:slots:5} & \ref{lst:fx:slots:6} & \ref{lst:fx:slots:8} & \ref{lst:fx:slots:2} & \ref{lst:fx:slots:7} & \ref{lst:fx:slots:9} & \ref{lst:fx:slots:10}
        
        \\\hline
        Min &
        & \checkmark & \checkmark & &
        & & & & & & & & \checkmark & &  
        \checkmark & \checkmark && \checkmark &&&&&&
        \\
        
        Max & 
        & & & \checkmark &
        & & & & & & & & & \checkmark &
        && \checkmark &&& \checkmark & \checkmark &&& \checkmark
        \\
        
        Type & 
        \checkmark & & & &
        \checkmark & \checkmark & \checkmark & \checkmark & \checkmark & & \checkmark & \checkmark & &  &
        &&&&&&&&&
        \\
        
        Disj & 
        & & & &
        & & & & & & & & &  &
        &&&&&&&&
        \\

        Int  & 
        & & & &
        & & & & & \checkmark & & & & &
        &&&& \checkmark &&& \checkmark & \checkmark &
        \\\hline
        
        BGP Inv & 
        & & & &
        \checkmark & & & & & & & & & &
        &&&& \checkmark & \checkmark & \checkmark & \checkmark & \checkmark & \checkmark
        \\
        
        BGP Inv BC & 
        
         &  &  &  &
        
        & \checkmark & \checkmark & \checkmark & \checkmark & \checkmark & \checkmark & \checkmark &  \checkmark  &  \checkmark &

        \checkmark & \checkmark & \checkmark & \checkmark &&&&&&
        \\
        
        \hline
    \end{tabular}}

    \resizebox{\linewidth}{!}{\begin{tabular}{c|cc|cccccccccccc|cccccccc}
    \hline
    
         & 
        \multicolumn{2}{c|}{Definition \ref{def:fx:types}} &
        \multicolumn{12}{c|}{Definition \ref{def:fx:disjointness}} &
        \multicolumn{8}{c}{Definition \ref{def:fx:root}} \\
        
        Form  & 
        \ref{def:fx:types:1} & \ref{def:fx:types:2} & 
        &\ref{def:fx:disjointness:1} &\ref{def:fx:disjointness:2} &\ref{def:fx:disjointness:3} &\ref{def:fx:disjointness:4} &\ref{def:fx:disjointness:5}&\ref{def:fx:disjointness:6}&\ref{def:fx:disjointness:7}&\ref{def:fx:disjointness:8}&\ref{def:fx:disjointness:9}&\ref{def:fx:disjointness:10}&\ref{def:fx:disjointness:11}&
        \ref{def:fx:root:1} & \ref{def:fx:root:2} & \ref{def:fx:root:3} & \ref{def:fx:root:4} & \ref{def:fx:root:5} & \ref{def:fx:root:6} & \ref{def:fx:root:7} & \ref{def:fx:root:8} 
        
        \\\hline
        
        Min &
        & & 
        &&&&&&&&&&&&
        \checkmark  &  & \checkmark  & & & & & \checkmark 
        \\
        
        Max & 
        & & 
        &&&&&&&&&&&&
        & & & & & &\checkmark  & 
        \\
        
        Type & 
        \checkmark & \checkmark  & 
        &&&&&&&&&&&&
        & \checkmark  & &\checkmark  & \checkmark  & & & 
        \\
        
        Disj & 
        & & 
        &\checkmark &\checkmark &\checkmark &\checkmark &\checkmark &\checkmark &\checkmark &\checkmark &\checkmark &\checkmark &\checkmark &\checkmark 
        & & & & & & & 
        \\

        Int  & 
        & & 
        &&&&&&&&&&&&
        & & & & & \checkmark  & & 
        \\\hline
        
        BGP Inv & 
        & & 
        &&&\checkmark &&&&&&&\checkmark&&
        & & & & & \checkmark & \checkmark & 
        \\
        
        BGP Inv BC & 
        \checkmark  &\checkmark  & 
        &\checkmark &\checkmark & &\checkmark &\checkmark &\checkmark & \checkmark & \checkmark &\checkmark & &\checkmark &\checkmark
        & \checkmark & &\checkmark & \checkmark & & & 
        \\
        \hline
    \end{tabular}}

\end{table}

Table~\ref{tab:formula_classification} classifies the model formulas according to the above-defined categories and indicates wether the formula is inviolable or inviolable by construction.
%Formulas introduced in Definition~\ref{def:fx:disjointness} are not included as they are all disjoint implications classified as \textit{Inviolable}. 

%Formulas in Definition   \ref{def:fx:resources} can not be violated due to the way the mapping rules are constructed.
%Formulas of Definition \ref{def:fx:basic} are either type implications, min-cardinality constraints or formulas that can not be violated due to the mapping rules \ref{def:fx:basic}.\ref{def:fx:basic:6} and \ref{def:fx:basic}.\ref{def:fx:basic:10}.
%In Definition \ref{def:fx:slots}, \ref{lst:fx:slots:1}, \ref{lst:fx:slots:3}, and \ref{lst:fx:slots:5} are min-cardinality constraints; \ref{lst:fx:slots:4}, \ref{lst:fx:slots:8} and \ref{lst:fx:slots:2} are max-cardinality constraints; \ref{lst:fx:slots:6}, \ref{lst:fx:slots:7} and \ref{lst:fx:slots:9} are  integrity constraints.
%Definition \ref{def:fx:types} introduces a type implication \ref{def:fx:types:1} and min-cardinality constraint \ref{def:fx:types:1}.
%Formulas in Definition \ref{def:fx:disjointness} are all disjoint implications.
%Definition \ref{def:fx:root} introduces: min-cardinality constraints, i.e. \ref{def:fx:root:1}, \ref{def:fx:root:3} and \ref{def:fx:root:8}; type induction formulas \ref{def:fx:root:2}, \ref{def:fx:root:4} and \ref{def:fx:root:5}; an integrity constraint \ref{def:fx:root:6} ; and a max-cardinality constraint \ref{def:fx:root:7}.
%To sum up, most of the formulas 

\textbf{Answer \ref{procedure:question2}}. 
A BGP violates a model formula if there is a mapping between its nodes and the formula's variables that renders the formula invalid (i.e. if it makes the negation of the formula true).
For example, consider the formula \ref{def:fx:slots}.\ref{lst:fx:slots:7}, the triple pattern \texttt{?x ?p ?x} violate it since we can assign to both \texttt{c} and \texttt{s} the variable \texttt{?x}.
Then we can reformulate question \ref{procedure:question2} as ``how can we assign variables of the inviolable formulas to the nodes of the BGP?''

For each \textit{inviolable} formula, we can construct, via the mapping rules in Definition \ref{def:fx:mappings}, a \textit{proxy graph} that violates the formula.
This is the minimal graph that violates a formula and exemplifies the pattern that must be avoided in BGP to ensure satisfiability.
Since mapping rules are bidirectional (see Theorem~\ref{theorem:bidirectionalmappings}) assigning BGP nodes to formula variables is equivalent to assigning BGP nodes to proxy-graph nodes.
Therefore, BGPs that are isomorphic to a proxy pattern violate an inviolable formula.

Table \ref{tab:proxy_patterns} shows the inviolable formulas and the corresponding proxy graphs.
For brevity, Table \ref{tab:proxy_patterns} omits disjoint formulas.
Such formulas can be violated by forcing a node to be of two disjoint types.
Most of the formulas cannot be violated due to the construction of the mapping rules, either because the two types of a disjoint occur in different roles (e.g. slots and containers are placed as properties and subjects) or because they are materialised using different RDF node types (e.g. values and containers are literals and IRIs).
These conditions can be validated by checking if the triple patterns of the BGP match the valid triple pattern indicated in Table~\ref{tab:fxtripleppatterns}. 
Therefore, the only disjoint that can be violated is between Type and Container.
%Formulas that are inviolable by construction of the mapping rules  as well as violable formulas (i.e. min-cardinality constraints) are not reported in the Table \ref{tab:proxy_patterns}.
%We provide a single proxy graph for multiple formulas wherever possible.
%Since formulas of the same type have a similar shape, we provide a single proxy graph for each formula type that apply to the formulas of the same category.

\begin{table}[t]
    \centering
    \caption{Inviolable formulas and corresponding proxy graph}
    \label{tab:proxy_patterns}
    \begin{tabular}{cl}
        \hline
        \textbf{Formula} & \multicolumn{1}{c}{\textbf{Proxy Graph} } \\\hline\hline

        \ref{def:fx:basic}.\ref{def:fx:basic:1}, \ref{def:fx:disjointness}.\ref{def:fx:disjointness:3}, \ref{def:fx:disjointness}.\ref{def:fx:disjointness:10} &
        \texttt{<i1>}~\texttt{xyz:s}~\texttt{<i2>}~.  \texttt{<i3>}~\texttt{rdf:type}~\texttt{<i1>} \\

        \ref{def:fx:slots}.\ref{lst:fx:slots:6} &
        \texttt{<i1>}~\texttt{rdf:\_1}~\texttt{``a''}~.  \texttt{<i1>}~\texttt{rdf:\_1}~\texttt{<i2>} \\

        \ref{def:fx:slots}.\ref{lst:fx:slots:8} &
        \texttt{<i1>}~\texttt{rdf:\_1}~\texttt{<i2>}~.  \texttt{<i1>}~\texttt{rdf:\_1}~\texttt{<i3>} \\

        \ref{def:fx:slots}.\ref{lst:fx:slots:2} &
        \texttt{<i1>}~\texttt{rdf:\_1}~\texttt{``1''}~.  \texttt{<i1>}~\texttt{rdf:\_1}~\texttt{``2''} \\

        \ref{def:fx:slots}.\ref{lst:fx:slots:7} &
        \texttt{<i1>}~\texttt{rdf:\_1}~\texttt{<i1>}\\
        
        \ref{def:fx:slots}.\ref{lst:fx:slots:9} &
        \texttt{<i1>}~\texttt{rdf:\_1}~\texttt{<i2>} . \texttt{<i2>}~\texttt{rdf:\_1}~\texttt{<i3>} . \texttt{<i3>}~\texttt{rdf:\_1}~\texttt{<i1>} 
        \\

        \ref{def:fx:slots}.\ref{lst:fx:slots:10} &
        \texttt{<i1>}~\texttt{rdf:\_1}~\texttt{<i2>} . \texttt{<i1>}~\texttt{rdf:\_2}~\texttt{<i2>} 
        \\

        \ref{def:fx:root}.\ref{def:fx:root:6} & 
    \texttt{<i1>}~\texttt{rdf:type}~\texttt{fx:root}~. \texttt{<i2>}~\texttt{rdf:\_1}~\texttt{<i1>}\\
        
        \ref{def:fx:root}.\ref{def:fx:root:7} & 
    \texttt{<i1>}~\texttt{rdf:type}~\texttt{fx:root}~. \texttt{<i2>}~\texttt{rdf:type}~\texttt{fx:root}

        \\\hline
    \end{tabular}
\end{table}

\begin{remark}\label{remark:multiplepaths}
     In addition to the above conditions, there is an extra constraint that must be met to ensure the BGP is satisfiable.
    This constraint is treated separately from the others as it implicitly emerges from the structural properties of the \FX{} m-graph.
    Since there is a single path connecting the root to its containers (see \ref{theorem:mgraphsinglepath}) and there is either a single path connecting any pair of container nodes, or there is none (see \ref{theorem:mgraphsinglepathcontainers}), backward paths of containers to the root must be compatible. 
    For example, the following BGP is not satisfiable since \texttt{<iri>} which must be a container, and it is contained by two different containers: \texttt{<j1>} and \texttt{<j2>} (in case that \texttt{?j3=?j5}, \texttt{?p3=?p4}, \texttt{?p2=?p4}), or, \texttt{<j3>} and \texttt{<j5>} (if $\tt{j3}\neq \tt{j5}$):
    \begin{verbatim}
<j1> ?p2 ?j3 . ?j3 ?p3 <iri> .
<j4> ?p4 ?j5 . ?j5 ?p4 <iri> . <iri> ?p ?o .    \end{verbatim}
\end{remark}
\textbf{Answer \ref{procedure:question3}}. 
It is worth noting that, in order to determine whether a BGP violates an inviolable formula, we must first establish whether the BGP contains an invalid FX triple pattern, and secondly, whether it is isomorphic to a proxy graph in Table  \ref{tab:proxy_patterns}. %\todo{is sufficient? do we need to check joins?}.
If we exclude proxy graph associated to formula \ref{def:fx:slots}.\ref{lst:fx:slots:9} that imposes that \FX{} RDF graphs are acyclic, conditions that would make the BGP unsatisfiable would span across one or two triple patterns.
Therefore, assuming that the BGP is acyclic, we can determine its validity by extracting sub-BGPs of one or two triple patterns and checking the above-defined conditions.

\section{Detecting satisfiable Basic Graph Patterns under \FX{}}\label{sec:algorithms}
We apply the \FX{} theory defined in the previous sections to determine the satisfiability of basic graph patterns in \FX{}.
We first summarize the components of the theory and their role in addressing the problem and then present two algorithms to annotate basic graph patterns and determine their satisfiability.

\subsection{Implementing the theory}

We will introduce some preliminary definitions and supporting functions that will be used in later algorithms.

\textbf{Satisfiable BGPs under \FX{}.}
We define $G$ as the set of all possible RDF graphs, and \sfx{G} as the set of possible RDF graphs that derive from the mappings of definition~\ref{def:fx:mappings}.
Furthermore, we define BGP\footnote{We remind that a BGP is a set of triple patterns, then $BGP:=2^{(\mathcal{I}\cup \mathcal{B}\cup \mathcal{V})\times(\mathcal{I}\cup \mathcal{V})\times(\mathcal{I}\cup \mathcal{B} \cup \mathcal{L}\cup \mathcal{V})}$} as the set of SPARQL Basic Graph patterns and $\sfx{BGP} \subset BGP$ as the set of BPGs having  at least one solution against at least one possible $g \in \sfx{G}$. Note that there is a strict inclusion between \sfx{BGP} and BGP as we already demonstrated that not all triple patterns/joins are satisfiable under \FX{} (see Section~\ref{sec:satbgps}).

Any $bgp \in \sfx{BGP}$ must comply with the conditions determined by our theory, namely:
\begin{enumerate}[label=C\arabic*]
    
    \item No cycles are allowed in BGPs (see Theorem~\ref{theorem:fxgraphacyclic}); \label{condition:nocycles}\label{conditionfirst}

    \item Join conditions of the BGP must comply with those reported in Table~\ref{tab:fxgraphpatterns}; \label{condition:joins}

    \item BGPs cannot have two alternative paths that lead from any node to a container (see remark~\ref{remark:multiplepaths}); \label{condition:multiplepaths}
    
    \item Mapping Rules (i.e. \ref{def:fx:mappings}) and related corollaries \ref{theorem:implicationsfxgraph}, \ref{theorem:furtherimplications} and \ref{theorem:implicationsontriplepattern};\label{condition:mappingrules}

    \item There must be \textit{at most} one root (see \ref{theorem:singleroot}); \label{condition:singleroot}

    \item Triple patterns of the BGP must comply with those reported in Table~\ref{tab:fxtripleppatterns}; \label{condition:triplepattern}
    
    \item None of the sub-BGPs included in bgp must be isomorphic to the proxy graphs defined for the inviolable formulas as reported in the Table~\ref{tab:proxy_patterns}. \label{condition:proxygraph}\label{conditionlast}
    
    %\item The constraints defined by the theory needs to be satisfied, specifically:
    %\begin{enumerate}
     %   \item definition~\ref{def:fx:mappings},
      %  \item related corollaries~\ref{def:fx:mappings:1}-\ref{def:fx:mappings:6}, and
       % \item their implications in definition~\ref{def:fx:inferences}.\todo{review according to new classification of rules}
       
    %\item The solution graph cannot have two alternative paths that lead from any node to a object Container (as per theorem~\ref{theorem:furtherimplications}) or to the node FXRoot (as per~\ref{??}). We note that when such paths exist, the shortest one cannot start from a root container, as that would invalidate~\ref{XXX}. When the object node is FXRoot, since the root container can only be one, if two triple patterns both have the object \texttt{fx:root}, their subject and predicate must be compatible.\todo{This invalidates other conditions defined. so maybe we can avoid it.}
       % \end{enumerate}
\end{enumerate}

We ignore the following properties of \FX{} RDF graphs:
\begin{itemize}
    \item Connectedness (cf. \ref{theorem:mgraphconnectedess} and \ref{proposition:singlepath}): because BGP don't need to, as non-joined triple patterns will be executed as their product;
    \item \FX{} graphs are rooted (cf. \ref{theorem:singleroot}): because the BGP can safely match a portion of the \FX{} graph;
    \item Min-Cardinality constraints and conditions defined in formulas that are inviolable by construction of the mapping rules (cf. Section~\ref{sec:satbgps}).
\end{itemize}

Next, we introduce functions to avoid unsupported joins, cycles and multiple paths to container nodes. 
These functions guarantee that the conditions \ref{condition:joins}, \ref{condition:nocycles} and \ref{remark:multiplepaths} are satisfied.

We proceed by describing the procedure to remove unsupported joins.

\begin{algorithm}[Detection of unsupported joins]
\sfx{BGP} cannot have Subject/Property or Object/Property joins within or across triples.
%Denoting the BGP with cycles as $BGP^{C}$, we know that $BGP^{C} \subseteq BGP \wedge BGP^{FX} \subseteq BGP \wedge BGP^{C} \oplus BGP^{fx}$.
\chadded{Listing~\ref{alg:hasUnsupportedJoin}}\chdeleted{We} define\chadded{s} the function $hasUnsupportedJoin(bgp) \rightarrow Bool$ to detect unsupported joins. 
{\normalfont\begin{lstlisting}[style=algorithm,caption={Pseudocode of the  hasUnsupportedJoin function.},label={alg:hasUnsupportedJoin}]
function hasUnsupportedJoin(bgp): Bool
    subjects = {}, properties = {}, objects = {}
    for (s, p, o) $\in$ bgp:
        subjects = subjects $\cup$ {s}
        properties = properties $\cup$ {p}
        objects = objects $\cup$ {o}
        if s $\in$ properties $\vee$ o $\in$ properties $\vee$ p $\in$ subjects $\vee$ p $\in$ objects :
            return True
    return False
\end{lstlisting}}
\end{algorithm}

\textbf{Time Complexity of \texttt{hasUnsupportedJoin}.}
   The time complexity depends on the number of triples in the bgp, hence $O(|bgp|)$.

We proceed to detect cycles in basic graph patterns.

\begin{algorithm}[Detection of cycles]
Since \FX~graphs are acyclic, \sfx{BGP} also cannot have cycles.
%Denoting the BGP with cycles as $BGP^{C}$, we know that $BGP^{C} \subseteq BGP \wedge BGP^{FX} \subseteq BGP \wedge BGP^{C} \oplus BGP^{fx}$.
We define the function $hasCycle(bgp) \rightarrow Bool$ to detect cycles in BGPs. 
This function explores the bgp by performing a depth-first search  (DFS, see \texttt{detectCycle}) for each triple pattern of the bgp.
If a node is visited two times by the search, the function returns True.
\chadded{hasCycle and detectCycle are defined in Listing~\ref{alg:detectcycle}.}
{\normalfont\begin{lstlisting}[style=algorithm,caption={Pseudocode of functions for detecting cycles.},label={alg:detectcycle}]
function hasCycle(bgp): Bool
    for (s, p, o) $\in$ bgp:
        if(detectCycle (bgp, {s}, s)):
            return True
    return False

function detectCycle(bgp, visited, last): Bool
    for (last, p, next) $\in$ bgp:
        if next $\in$ visited:
            return True
        else:
            return detectCycle(bgp, visited $\cup$ {last}, next)
    return False
\end{lstlisting}}
\end{algorithm}

\textbf{Time Complexity of cycle detection.}
    The time complexity of a depth-first search on a graph of n nodes and e edges is $O(n+e)$.
    We observe that the function \texttt{hasCycle} performs a DFS for each triple of the BGP (number of edges of the graph).
    Therefore in the worst case the time complexity is $O(|bgp|^2+|bgp|*n)$.

Once the cycling detection is verified, we move to verify the constraints of our theory. 
To do that, we proceed to annotate the nodes in the pattern with the predicates resulting from the \FX{} mappings to RDF. 
We list the possible predicates in Table~\ref{tab:fx:hierarchy}, which also summarises the following characteristics.

\textbf{Ground predicates.}
Nodes in BGPs have roles related to the RDF triples (Subject, Predicate, Object) or roles resulting from mappings to the \FX{} model: Container, FXRoot (to represent the role of the term \texttt{fx:root}), Slot, SlotString, SlotNumber, Value, Type and TypeProperty. 
We  organise these roles in a predicate hierarchy, where the top role is the RDF Node and the first layer Subject, Predicate, Object, which are in turn specialised by \FX{} predicates.
Ultimately, the nodes in $BGP^{FX}$ will necessarily be one of Container (Subject or Object), Type (Object), FXRoot (Object), TypeProperty (Property), SlotString (Slot which specialises Property), SlotNumber (Slot which specialises Property), or Value (Object).
We name those \textit{ground predicates} ($GP$), i.e. the set of \FX{} predicates that cannot be specialised.
These are marked with a * in Table~\ref{tab:fx:hierarchy}.
Table~\ref{tab:fx:hierarchy} also summarises the disjointness between predicates (derived from Definition~\ref{def:fx:disjointness} and the analysis of satisfiable joins, see Table~\ref{tab:fxgraphpatterns}).
%Similarly, from the analysis of admissible triple patterns and joins (Table~\ref{}), we derive a node roles disjointness matrix.
%This is also summarised in Table~\ref{tab:fx:hierarchy}.

\begin{table}[t]
\caption{\FX ~RDF predicates hierarchy and disjointness. \textit{Ground predicates} are marked with a *}
\label{tab:fx:hierarchy}
\centering
\begin{tabular}{lll}  
\hline\multicolumn{1}{c}{\textbf{Predicate}} & \multicolumn{1}{c}{\textbf{Specialises}} & \multicolumn{1}{c}{\textbf{Disjoint with}}\\\hline\hline
Subject & & Property \\
Property & & Subject, Object  \\
Object & & Property \\
TypeProperty* & Property & Subject, Object, Slot, Type, Container\\
Type* & Object  & Slot, Container, Value, FXRoot \\
Container* & Subject, Object & Property, Slot, Value, Type, FXRoot \\
Slot & Property & Subject, Object,TypeProperty \\
Value* & Object & Property, Subject,Type,Container,FXRoot,Slot\\
FXRoot* & Object & Subject, Property,Container,Value,Type\\
SlotNumber* & Slot & SlotString, Subject, Object\\
SlotString* & Slot & SlotNumber, Subject, Object \\\hline

\end{tabular}
\end{table}
%\todo[inline]{FXRoot is not the same as fx:root: the first is an annotation/role that a node can have; the second is the concrete IRI fx:root. A rule may need to refer to one or the other.}
\begin{definition}[Annotated Basic Graph Patterns]\label{def:fx:annotate}
We indicate the set of nodes of a BGP as $\mathcal{N}$, i.e. $\mathcal{N}:=\mathcal{V} \cup\mathcal{I} \cup \mathcal{B} \cup \mathcal{L}$
We define the function $\tt{nodes}: BGP \rightarrow 2^{\mathcal{N}}$ which given a BGP returns the set of nodes it contains.
%We define $O \subseteq (I\cup~B\cup~V\cup~L)$ as the set of nodes in any BGP, and define a function $nodes(bgp \in BGP)$ to obtain $\{n_{1},n_{\dots}\} | n\in O$.
We define a function $\tt{annotate}: BGP  \rightarrow  Map(\mathcal{N}, \tt{GP})$\footnote{\texttt{Map(X,Y)} indicates the set of all possible dictionaries having elements in X as keys and Y as values.} which associates every bgp with a set of pairs (node, ground predicate).
%In other words, $annotate()$ generate annotations of bgp, each annotation binding each node to a \textit{predicate} from Table~\ref{tab:fx:hierarchy}.
%\todo[inline]{Maybe to be removed}
%We denote $ann_{bgp} \in ANN_{bgp}$, where $ANN_{bgp}$ is the set of all possible annotations on a BGP, and $ANN$ all possible annotated BGPs, so that $ANN_{bgp} \subset~ANN$.
%Furthermore, establish a one-to-one correspondence between $map_{bgp}^{n\rightarrow~p} \leftrightarrow ann_{bgp}$.
\end{definition}

\textbf{Relations between $BGP$ and annotated BPGs.} Note that a BGP can be annotated in many ways, i.e. $|annotate({bgp})| \geq~1. \forall bgp \in BGP$
Note that the existence of an annotation for a BGP does not guarantee that it is satisfiable.
% However, only a subset of them will be valid, ie. will represent a set of solutions on a possible \FX~graph.

% \begin{remark}[Minimal number of distinct nodes in a \FX~BGP] For each non-empty $bgp^{FX} \in BGP^{FX}$, $|bgp^{FX}_{nodes}| \geq 3$, because there cannot be repetitions of nodes in a single \FX~triple pattern (subject, predicate, and object must be all different)\todo{reference to theory}.
% \end{remark}

%\begin{remark}[Grounded Annotations of Graph Patterns]
%We note that a subset of $ANN$ will only include \textit{grounded predicates}, and denote such set $ANN^{GP} \subseteq~ANN$. 
%\end{remark}

%\begin{remark}[Satisfiability of annotated BGPs]
%We know $ANN^{GP} \subset ANN$ but we don't know whether the annotated BGP satisfies the theory, yet.
%We denote the set of annotations including only ground predicates that also satisfies the theory as %$ANN^{FX} \subseteq ANN^{GP} \subseteq ANN$. 
%We observe that there can still be annotations that don't violate the theory but are not part of %$ANN^{GP}$, since include predicates that are not in $GP$.
%We say that $ANN^{\neg~GP} \oplus ANN^{GP} \subseteq ANN$.
%\end{remark}

Later, we present two algorithms that, given a generic $bgp \in BGP$, can generate all possible annotations that satisfy the theory.
%If no valid annotations  exist, the basic graph pattern does not have a solution in \FX, ie. it is not satisfiable.
In the remaining of this section, we are going to summarise the elements of the theory and introduce pseudo-code that will be later applied in the algorithms.

In what follows, we distinguish the BGP node types (Variable, IRIs, Literal) from the predicates, i.e. the \textit{annotated roles} that each node may have in a \FX{} graph matching scenario.
We omit blank nodes, as they operate as existential qualifiers and thus behave like variables do~\cite{w3c:sparql11}. 
%Thus, the following rules  include the additional predicates: 
%\begin{itemize}
%    \item IRI( node): the node is a concrete IRI
%    \item Variable( node): the node is a variable or a blank node -- we collapse Blank nodes with variables  because they act as existential qualifier in BGPs (as variables that do not capture values).
%    \item Literal( node ) -- the node is a concrete literal.
%\end{itemize}

%\paragraph{\FX~RDF predicates hierarchy table}

\begin{definition}\label{def:inferencerules}
    According to the above theory, the algorithm implements the following rule for annotating a BGP.
Note that if a rule raises an inconsistency when verified, the NS term is raised, which stands for 'no solution'. 

\begin{enumerate}[label=R\arabic*]
    \item If a node is a subject, then it is a Container; \label{crule:1}\label{crule:first}
    \item If a node is the  IRI fx:root, then it is FXRoot;\label{crule:2}
    \item If a node is the  IRI rdf:type, then it is TypeProperty;\label{crule:3}
    \item If a Property is not a variable or the IRI rdf:type, then it is a Slot;\label{crule:4}
    \item If the Object of a triple is not a Variable and not the  IRI fx:root but Property is rdf:type, then the Object is a Type;\label{crule:5}
    \item If the Object is FXRoot, then the Property is TypeProperty;\label{crule:6}
    \item If the Object is Type, then Property is TypeProperty;\label{crule:7}
    \item If the Property is a ContainerMembershipProperty, then the Property is SlotNumber;\label{crule:8}
    \item If the Property is neither a Variable, rdf:type, or ContainerMembershipProperty, then is a SlotString;\label{crule:9}
    \item If a node is a Literal, then it is a Value;\label{crule:10}
    \item If the Object is a IRI and Property is Slot, then Object is a Container;\label{crule:11}
    \item If the Object is a Value, then the Property is Slot;\label{crule:lasttp}\label{crule:12}
    %\item Object cannot be a IRI that is not fx:root and be FXRoot;
    \item If the Object is an IRI not equal to \texttt{fx:root} and it is annotated as FXRoot, then raise NS;\label{crule:13}
    %\item Nodes annotated as Type cannot be the concrete fx:root;
    \item If the Object is a Type and is \texttt{fx:root}, then raise NS;\label{crule:14}
    %\item Object cannot be IRIs and Value;
    \item If the Object is an IRI and is a Value, then raise NS;\label{crule:15}
    \item If the Object is a Container, then Predicate is Slot;\label{crule:16}
    \item If two triples have the same subject and predicate, and the predicate is a Slot, they also need to have matching objects - where either or are variables or are the same node;\label{crule:17}
    \item If there is a subject/object join when the object is FXRoot, then raise NS;\label{crule:18}
    \item If a node is Object and Container or FXRoot and there are disjoint paths (S$\bowtie$O joins) leading to it, then raise NS.\label{crule:19}\label{crule:last}
\end{enumerate}
\end{definition}

% Any BGP$^{fx}$  must comply with the rules above.
% We pose the following research questions:
% \begin{itemize}
%     \item[RQ1] How to determine all possible interpretations of a BGP in \FX?
%     \item[RQ1] Is it always possible to determine whether a FXGB is satisfiable? 
%     \item[RQ2] Is a satisfiability algorithm sound and complete? 
%     \item[RQ3] What is the complexity of such algorithm?
% \end{itemize}
% Is it possible to determine the number of expected interpretations, theoretically?
% From here, determine the two complexity issues: number of BGPs to evaluate, number of grounded BGPs.

\textbf{Data structures used in the algorithms.}
We introduce the data structures which will be referred to by both algorithms. These are: 
\begin{enumerate*}[label=(\roman*)]
    \item the input Basic Graph Pattern, i.e. a sequence of triples denoted by the variable \texttt{bgp};
    \item the set of predicates of \FX{} indicated as \texttt{terms};
    \item the \FX{} RDF terms hierarchy, as presented in Table~\ref{tab:fx:hierarchy}, i.e. a map, named \texttt{specialisations}, which associates each \FX{} predicate to its direct super type;
    \item the \FX{} RDF terms disjointness Table ~\ref{tab:fx:hierarchy}, a map, named \texttt{disjointness} which associates each \FX{} predicate to the set of predicates it is disjoint with;
    \item ground terms, i.e. \FX{} set of terms that cannot be specialised, indicated as \texttt{groundTerms};
    \item top terms, i.e. terms that do not specialise other terms, indicated as \texttt{topTerms}.
    % \item the \FX ~ implications as per Definition~\ref{def:fx:inferences}, and
    % \item a consistency check function to apply them (cf. lines~9-16).
\end{enumerate*}
%{\normalfont 
%\begin{lstlisting}[style=algorithm,escapechar=|]
%bgp $\leftarrow$ [(s$_1$,p$_1$,o$_1$), (s$_2$,p$_2$,o$_2$), $\dots$ (s$_n$,p$_n$,o$_n$)]
%terms $\leftarrow$ {'Subject', 'Predicate', 'Object', 'Container', 'Slot', 'SlotNumber', 'SlotString', 'Value', 'TypeProperty', 'Type', 'Root'} 
%specialisations $\leftarrow$ { type $\rightarrow$ supertype } # See Table|~\ref{tab:fx:hierarchy}|
%disjointness $\leftarrow$ { term $\rightarrow$ { disjoints } } # See Table|~\ref{tab:fx:disjointness}|
%groundTerms $\leftarrow$ { term $|$ term $\rightarrow$ superTerm $\in$ specialisations $and$ subTerm $\rightarrow$ term $\notin$ specialisations }
%topTerms $\leftarrow$ { term | subterm $\rightarrow$ term $\in$ specialisations $and$ term $\rightarrow$ superterm $\notin$ specialisations }
%\end{lstlisting}}

\textbf{Implementation of inference rules.}
    We assume that the consistency rules presented above (see \ref{crule:first}-\ref{crule:last}) are implemented in form of implications ($\tt{head}\leftarrow \tt{body}$), meaning that the head is applied if the body is true.
    These rules are accessible through the variable \texttt{rules}.
    We also introduce the functions:
    \begin{enumerate*}[label=\textit{(\roman*)}]
    \item $\tt{body}(rule, node, map):Bool$ which takes a rule a node of the BGP and a $map \in \tt{Map(\mathcal{N},terms)}$, it evaluates the body of the rule and returns True if the condition holds, False otherwise;
    \item $\tt{head(rule, node, map):Bool}$ which applies the head of a rule (returns the implied FX term or NS).
    \end{enumerate*}

\begin{algorithm}[Checking constraints from inference rules]
We introduce the function \texttt{check} \chadded{(Listing~\ref{alg:check})}, which verifies the consistency of a bgp against the set of rules presented before. 
The function iterates through the nodes of the bgp, applying each rule in \texttt{rules} to each node. 
If the body of the rule is true, then it obtains the inferred term from the head of the rule.
If the the inferred term is inconsistent with the other terms already associated with the node, it returns False; otherwise, it continues with the next rule.
The function returns True if no inferred term raises an inconsistency.
%These are a list of \FX ~ rules as per Definition~\ref{def:fx:inferences} (cf. line~1), two functions to access the head and body of each rule (cf. lines 2 and 3), and
%a function to apply them (cf. lines~4-11).
%rules $\leftarrow$ [head $\Leftarrow$ body] # See list of rules in Definition|~\ref{def:fx:inferences}|
%function body(rule, node, map) # evaluates the body of a rule (true if condition holds, false otherwise)
%function head(rule) # evaluates the head of a rule (returns the implied FX term or NS)
{\normalfont 
\begin{lstlisting}[style=algorithm,escapechar=|,caption={Pseudocode of  check function.},label={alg:check}]
function check(map, bgp): Bool
    for node $\rightarrow$ term in map:
        for rule in rules:
            if body (rule, node, map) then:
                result $\leftarrow$ head(rule, node, map)
                if result = NS or result in disjointness[term]:
                    return False
    return True
\end{lstlisting}}
\end{algorithm}

\textbf{Time complexity of \texttt{check}.}
    The time complexity of the function \texttt{check} is $O(n * r )$ where n is the number of nodes in the bgp and r is the number of rules, i.e. \ref{crule:last}. Since we can consider r a constant parameter, then  we can conclude that \texttt{check} is linear in the number of nodes of the bgp.

\textbf{match(n1,n2).} We define a function $\tt{match(n1, n2):Bool}$ that implements the SPARQL specification (SPARQL Basic Graph Pattern Matching)~\cite{w3c:sparql11}. The function compares the nodes' types, if any of the two is a variable, returns True. If both aren't, it checks node equality (see Table~\ref{tab:match}).

\begin{table}[htpb]
    \centering
    \caption{The \textit{match(left,right)} function compares two nodes}
    \label{tab:match}
    \begin{tabular}{ccc}
    \hline
       \textbf{left} & \textbf{right} & \textbf{matches(left,right)} \\\hline\hline
       \VAR{} & \VAR{} & True\\
       \VAR{} & \IRI{} & True\\
       \VAR{} & \LIT{} & True\\
       \IRI{} & \IRI{} & =\\
       \IRI{} & \LIT{} & False\\
       \LIT{} & \LIT{} & =\\
       \hline
    \end{tabular}
\end{table}

\begin{algorithm}[Function satisfiesUniquePathToRoot]\label{al:satisfiesUniquePathTo}
\begin{sloppypar}We present the function \texttt{satisfiesUniquePathToRoot(bgp, map, node):Bool} \chadded{(see Listing~\ref{alg:satisfiesUniquePathToRoot} for the pseudocode)} which guarantees that the constraint \ref{crule:last} is satisfied.
The function takes as argument (i) $bgp$ - a basic graph pattern; (ii) $map \in \tt{Dict(\mathcal{N}, terms)}$ - nodes with annotations (e.g. `?x -> Slot`); and (iii) a focus node $n$ so that $(n\rightarrow~Container) \in map$ or $(n\rightarrow~FXRoot) \in map$. 
The algorithm operates as follows:\end{sloppypar}
\begin{enumerate}
    
    \item From bgp, construct the set \texttt{triplePairs} containing pairs of distinct triples that have \texttt{node} as object (cf. line \ref{line:constructTriplePairs}). This can be an empty set. 
    
    \item We collect backward triple paths. For each pair of triples so that $Triple(\_,\_,node)$, and each triple in the pair, get the lists of triples having recursively S$\bowtie$O joins from the BGP (cf. function \texttt{walkBackwards}).
    $backwardsTriplePathPairs$ is a list of tuples, holding two backwards paths for each triple pair.
    
    \item Next, for each triple path pair, we transform each triple path in a sequence of backward node paths starting from $node$ (cf. function \texttt{nodesPath}).
    $backwardsNodePathPairs$ is a list of tuples, holding two backwards node paths for each pair.
    
    \item We walk each paths' pair.
    
    \item Paths can have the same length or not. When there is a node in both paths, we check whether they are compatible, ie. we match the nodes in the same position, relying on the $match$ function defined before. If two nodes don't match, the bgp does not satisfy the  constraint at \ref{crule:last}.
    
    \item If length does not match (lines~\ref{line:begindifferentlength}-\ref{line:enddifferentlength}), we verify that the shortest path of the two ends with a node that is not being annotated as FXRoot. Since both paths must be compatible for the bgp to be satisfiable, if the shortest path ends with a root container, the bgp is not satisfiable (cf. line~26).
    
    \item If the algorithm proceeds, all conditions are satisfied: the solution graph does not need to have alternative unmatching paths to a container or root node.
    
\end{enumerate}
\begin{figure}
{\normalfont\begin{lstlisting}[style=algorithm,escapechar=|,caption={Pseudocode of the satisfiesUniquePathToRoot function.},label={alg:satisfiesUniquePathToRoot}]
function satisfiesUniquePathToRoot(bgp, map, node): Bool
    triplePairs $\leftarrow$ {((s$_1$, p$_1$, node), (s$_2$, p$_2$, node)) $|$ (s$_1$, p$_1$, node) $\in$ bgp $\wedge$ (s$_2$, p$_2$, node) $\in$ bgp $\wedge$ (( s$_1$ $\neq$ s$_2$) $\vee$ ( p$_1$ $\neq$ p$_2$))}|\label{line:constructTriplePairs}|
    backwardsTriplePathPairs $\leftarrow$ {(walkBackwards([left], bpg), walkBackwards([right], bgp)) $|$ (left, right) $\in$ triplePairs}
    backwardsNodePathPairs $\leftarrow$ {(nodesPath(left), nodesPath{right) $|$ (left, right) $\in$ backwardsTriplePathPairs}
    for (leftList, rightList) in backwardsNodePathPairs:
        for i $\in$ [0$\cdots$max($|$leftList$|$,$|$rightList$|$)]
            if $|$leftList$|$=$|$rightList$|$
                if match(leftList[i], rightList[i]) = False: 
                    return False
            else: |\label{line:begindifferentlength}|
                if i < $|$leftList$|$:
                    shortestEndNode = rightList[$|$rightList$|$]
                else:
                    shortestEndNode = leftList[$|$leftList$|$]
                if (shortestEndNode, _, r) \in bgp $\wedge$ map[r] = FXRoot:
                    return False |\label{line:enddifferentlength}|
    return True

function walkBackwards(list, bgp): List
    $\tt{(s, p, o)} \leftarrow \tt{list}[|\tt{list}|]$
    if $\tt{(s_p, p_p, s)} \in bgp$:
        $\tt{list} \leftarrow \tt{list} + [\tt{(s_p, p_p, s)}]$
        return walkBackwards(list, bgp)
    else:
        return list
        
function nodesPath(triplePath): List
    path = []
    for (s, p, o) $\in$ triplePath:
        if path = []: 
            path = [o]
        else: path = 
            path + [p, s]
    return path
    \end{lstlisting}}\end{figure}
\end{algorithm}

\paragraph*{}\textbf{Time complexity of \texttt{satisfiesUniquePathToRoot}.}
    In the worst case, the function \textbf{satisfiesUniquePathToRoot} constructs a path pair for each node. 
    Since the dominant part of the function involves going through pairs of paths and checking that each node does not invalidate any conditions, the time complexity of the function is $O(n^2)$.

\subsection{Correctness and Completeness with respect to \FX{}}

To guarantee the satisfiability of a BGP, it is necessary and sufficient to verify the conditions \ref{conditionfirst}-\ref{conditionlast}.
Here, we summarise how all the conditions are captured by our framework.

\textbf{ \ref{condition:nocycles}, \ref{condition:joins} and \ref{condition:multiplepaths}.}
As discussed earlier, \ref{condition:nocycles}, \ref{condition:joins} and \ref{condition:multiplepaths} are verified by executing the functions \texttt{hasCycle}, \texttt{hasUnsupportedJoin}, and \texttt{satisfiesUniquePathToRoot}.

The other conditions (\ref{condition:mappingrules}, \ref{condition:triplepattern}, \ref{condition:singleroot}, \ref{condition:proxygraph}) are verified by the \texttt{check} function via the inference rules \ref{crule:first}-\ref{crule:last}, as follows.

\textbf{\ref{condition:mappingrules} and \ref{condition:triplepattern}.}
\ref{crule:first}-\ref{crule:lasttp} guarantee that the BPG's triple patterns comply with the \FX{} triple patterns and mapping rule: %, see \ref{condition:triplepattern} and \ref{condition:mappingrules}:
\begin{itemize}
    \item \ref{crule:1} implies that all the subject must be a Container;
    \item \ref{crule:3}, \ref{crule:5},  \ref{crule:6}, \ref{crule:7}, \ref{crule:8} and \ref{crule:9} imply that the predicate of a BGP's TP is either a Slot or TypeProperty.
    \item \ref{crule:5}, \ref{crule:10}, \ref{crule:11} imply that the object of a TP is a Container or a Type or a Value or FXRoot.
\end{itemize}
Note that if multiple incompatible predicates are associated with a node, the function \texttt{check} raises NS.

So far, satisfying the conditions guarantees that satisfiable \FX{} TPs are identified.

Next, we must demonstrate that unsatisfiable TPs are excluded.
To this end, we go through the unsatisfiable patterns of Table~\ref{tab:fxtripleppatterns}, to demonstrate that they are excluded by the rules.
\texttt{C.SN.T}, \texttt{C.SN.R}, \texttt{C.SS.T}, \texttt{C.SS.R} are excluded by the \ref{crule:11}, which requires the Object of a Slot to be a Container that is disjoint with Type and FXRoot.
\texttt{C.TP.V} is excluded by the \ref{crule:12}, which requires the property of a triple having Value as Object to be a Slot (and Slot is disjoint with TypeProperty).
\texttt{C.TP.C} is excluded by the rule \ref{crule:16}, which requires the property of a triple having Container as Object to be a Slot.
Thus, we have shown how all unsatisfiable triple patterns are identified by the rules.

\textbf{\ref{condition:singleroot}.} Condition \ref{condition:singleroot} is guaranteed by the \ref{crule:2} and \ref{crule:19}. In fact, \ref{crule:2} guarantees that the IRI \texttt{fx:root} is annotated as FXRoot and \ref{crule:19} guarantees that there can not be multiple path leading to it.

\textbf{\ref{condition:proxygraph}.} Last, we go through the Proxy Graph reported in Table~\ref{tab:proxy_patterns} and we demonstrate that are all excluded by the rules.
Proxy Graph for Formulas \ref{def:fx:basic}.\ref{def:fx:basic:1}, \ref{def:fx:disjointness}.\ref{def:fx:disjointness:3}, \ref{def:fx:disjointness}.\ref{def:fx:disjointness:10} raises a NS since \texttt{<i1>} is annotate as both Container (cf. \ref{crule:1}) and Type (cf. \ref{crule:3} and \ref{crule:5}) and they are disjoint predicates.
\ref{def:fx:slots}.\ref{lst:fx:slots:6}, \ref{def:fx:slots}.\ref{lst:fx:slots:8}, and \ref{def:fx:slots}.\ref{lst:fx:slots:2} are excluded by  \ref{crule:17} since we have two triples have the same subject and predicate, the predicate is a Slot, but the objects do not match.
\ref{def:fx:slots}.\ref{lst:fx:slots:7} and \ref{def:fx:slots}.\ref{lst:fx:slots:9} are cycles, then excluded by the function \texttt{hasCycle}.
\ref{def:fx:slots}.\ref{lst:fx:slots:10} is excluded by the function \texttt{satisfiesUniquePathToRoot}.
\ref{def:fx:root}.\ref{def:fx:root:6} is excluded by \ref{crule:18} since it is  a subject/object join when the object is FXRoot.
Finally,  \ref{def:fx:root}.\ref{def:fx:root:7} guarantees that the BGP has a single root and, as discussed earlier, this is enforced by \ref{crule:2} and \ref{crule:19}.

\subsection{Annotating graph patterns}
We investigate two approaches to annotate basic graph patterns, ie. implementing the function of definition~\ref{def:fx:annotate}: (1) a top-down algorithm, addressing the annotation process as a search, and (2) a bottom-up algorithm, addressing the annotation process as constraint satisfaction one.
% Two approaches: Search (top-down) or CSP (bottom-up).

\subsubsection{Top-down algorithm}

A top-down algorithm could address the annotation of the BGP problem as searching the space of annotated patterns for any (or all) annotations that comply with the theory. If none is found, the input $bgp$ is not satisfiable. Clearly, it would be sufficient to find one viable solution for the BGP to be satisfiable. However, the implementation allows for finding all possible valid annotated BGPs (by setting the argument complete as True).

\begin{algorithm}[Top-down algorithm: annotation as search]\label{al:search} The algorithm \chadded{(whose pseudocode is provided in Listing~\ref{alg:annotate_topdown})} operates as follows:
\begin{enumerate}
    
    %\item A search function (cf. line 1) selects all nodes from the bgp. 
    
    \item It initialise an initial annotation $start \in Map(\mathcal{N}, terms)$, where each node is associated with its position in the triple it occurs in. For instance, given the bgp $ \{?a~?b~?c~.~?c~?d~?e\} $, it would generate the map $\{(~?a\rightarrow~Subject~)$ $(~?b\rightarrow~Property~)$ $(~?c\rightarrow~Object~)$ $(~?d\rightarrow~Property~)$ $(~?e\rightarrow~Object~) \}$;
    
    \item Next, the recursive function \texttt{search} is called (cf. line~3);
    
    \item For each node in the map, we first lookup for possible specialisations, and generate a copy of the map with a modified annotation (cf. line~9);
    
    \item Next, we verify that the set of annotations is consistent with the constraints defined in~\ref{def:inferencerules} (cf. line 10);
    
    \item If the bgp is consistent, and only ground predicates are mentioned in the annotation, the bgp is satisfiable. The current valid map is added to the solutions (cf. line 12). If $complete$ is $False$, the solution is returned and the process interrupted, otherwise, we keep generating alternative solutions (cf. lines 6-7);
    
    \item If the bgp is consistent but some annotations can still be specialised, we keep searching within the current space (cf. line 16).
\end{enumerate}

{\normalfont 
\begin{lstlisting}[style=algorithm, caption={Pseudocode of annotate\_topdown function.}, label={alg:annotate_topdown}]
function annotate_topdown(bgp, complete) : List   
    map $\leftarrow$ { node $\rightarrow$ Subject |  (node,_,_)$\in$bgp } $\cup$ { node $\rightarrow$ Predicate | (_,node,_)$\in$bgp } $\cup$ { node $\rightarrow$ Object |  (_,_,node)$\in$bgp }
    return search ( map , bgp, {}, complete)
    
function search (map , bgp, solutions, complete): List
    for node $\rightarrow$ term in map:
        for (subterm $\rightarrow$ term) in specialisations:
            newMap $\leftarrow$ copy(map)
            newMap[node] $\leftarrow$ subterm
            if check(newMap, bgp):
                if {term | node $\leftarrow$ term $\in$ newMap} $\subseteq$ GP:
                    solutions $\leftarrow$ solutions + [newMap]
                    if complete:
                        return solutions
                else:
                    solutions $\leftarrow$ solutions + search(newMap, bgp, solutions, complete)
    return solutions
\end{lstlisting}
}

\end{algorithm}

\textbf{Time complexity of \texttt{annotate\_topdown}.}
    The dominant part of the algorithm is the search function. This function specialises multiple times the nodes of the bgp and checks the validity of the annotation every time a term is specialised.
    Therefore, the complexity of the \texttt{annotate\_topdown} depends on the number of nodes of the bgp (n), the number of specialisation relation in the predicate hierarchy |specialisations| and the complexity of \texttt{check}, i.e. $O(n*r)$. 
    Furthermore, if the solution contains predicates other than ground ones, the function specialises the solution further.
    The process is repeated at most $|terms| - |GP|$ times.
    Therefore, the complexity of \texttt{annotate\_topdown} is quadratic in the number of nodes in the bgp and linear in the number of specialisations and inference rules, i.e. $O(n^2*r*|specialisations|*(|terms| - |GP|))$.

\subsubsection{Bottom-up algorithm}
The second algorithm uses the same components (inference rules) of the previous one, but operates by first annotating bgp triples with all \textit{grounded predicates} specialising Subject, Property, or Object, and then filters out inconsistent combinations.

\begin{algorithm}\label{algorithm2}
[Bottom-up algorithm] 
A \textit{generate} function produces valid annotations of a bgp. When the parameter \textit{complete} equals to True, it stops when one valid solution is found. 
\begin{enumerate}

    \item We extract the list of unique nodes from the bgp (cf. line~2);
    
    \item We populate a list with, for each node, the set of \textit{ground predicates} that could match the nodes' position in any of the triples of the bgp (cf. lines~5-13). Note that $|nodesTerms|=|nodes|$ and for any index $i$ in the list, $nodes[i] \rightarrow~nodeTerms[i]$. \chadded{We assume that $nodes[n]$ where n is a node of the BGP returns the set of ground predicates associated with n.}

    \item \chadded{We refine the ground predicates by some of  the inference rules (cf. Definition~\ref{def:inferencerules}) whose application conditions are at the triple level (lines~15-22).}
    
    \item We generate possible annotations in line~16. $hypotheses$ is a set of lists (tuples) of size $|nodes|$, each one with a possible combinations of ground predicates (line~24). 
    
    \item In the next steps, we verify each combination against our conditions with the check function (cf. lines~26-32).
    
    \item If the argument \textit{complete} is set to False, the process ends when the first valid solution is found (cf. lines~29-32).
    
    \item Finally, all valid solutions are returned, or an empty set, if no solution was found (cf. line~33).
\end{enumerate}

\begin{figure}
{\normalfont 
\begin{lstlisting}[style=algorithm,caption={Pseudocode of the bottom-up alogorithm.}]
function generate_bottomup(bgp, complete):
    nodes $\leftarrow$  [{s|(s,_,_)$\in$bpg} $\cup$ {p|(_,p,_)$\in$bpg} $\cup$ {o|(_,_,o)$\in$bpg}]
    nodesTerms $\leftarrow$  []
    
    for node in nodes:
        t $\leftarrow$ {}
        if (node, p, o) $\in$ bgp:
            t $\leftarrow$ { term | term $\in$ groundTerms $\wedge$ topTerm(term) = 'Subject' }
        if (s, node, o) $\in$ bgp:
            t $\leftarrow$ t + { term | term $\in$ groundTerms $\wedge$ topTerm(term) = 'Predicate' }
        if (s, p, node) $\in$ bgp:
            t $\leftarrow$ t + { term | term $\in$ groundTerms $\wedge$ topTerm(term) = 'Object' } 
        nodesTerms $\leftarrow$ nodesTerms + t
        
    for (s, p, o) $\in$ bgp:
        nodes[s] = {'Container'}
        if p is an IRI:
            if p $\in$ TypeProperty:
                nodes[p] = {'TypeProperty'}
                nodes[o] = nodes[o] \ {'Container', 'Value'}
            else:
                nodes[o] = nodes[o] \ {'Type', 'Root'}
            
    hypotheses $\leftarrow$ nodeTerms[1] $\times$ nodeTerms[2]  $\times$ $\cdots$ $\times$ nodeTerms[|nodes|]
    
    solutions = {}
    for h $\in$ hypotheses:
        map = {nodes[i] $\rightarrow$ h[i] |1$\leq$i$\leq$|nodes|}
        if check(map, bgp):
            solutions $\leftarrow$ solutions $\cup$ {map}
            if complete = False:
                return solutions
    return solutions
\end{lstlisting}
}\end{figure}
\end{algorithm}

\textbf{Time complexity of \texttt{generate\_bottomup}}
    The dominant part of the algorithm is the iteration at lines 27-32. This goes through the hypotheses, which are $n * |GP|$. Then, for each hypothesis, it generates an annotation (map) and checks its validity. Therefore, the time complexity of \texttt{generate\_bottomup} is $O(n^2 * |GP|*r)$.

%\subsection{The problem of Façade-X BGP satisfiability is decidable}

%To demonstrate that our algorithms are complete:
%can we say that all possible valid BGP$^{fx}$ are combinations of the allowed joins of Table~\ref{tab:fxgraphpatterns}?
%If yes, can we simply show that all remaining joins from Table~\ref{tab:sparql_joins} are invalid according to any of the definitions/theorems we presented before?

% \section{Façade-X profiles \FX}\label{sec:profiles}\input{sections/profiles}

\section{Experiments}\label{sec:experiments}\chadded{We report on three experiments. 
The first compares the two algorithms described in Section~\ref{sec:algorithms} with a curated set of BGPs of increasing complexity, and provide empirical evidence of their performance. 
The second experiment focuses on real world queries, and has the objective of testing the relevance and practical feasibility of the approach.
The third experiment shows the advantages of checking whether a query can be satisfied before evaluating it, considering the case of a knowledge graph construction benchmark.}

\subsection{Setting}%\todo{Moved before the experiments as it is the same for both}
We prepare a set of BGPs with different characteristics and consider two regimes: (a) find the first satisfiable annotation (ensure satisfiability); or (b) generate all satisfiable annotations (describe solution patterns).

The algorithms are implemented in Java. 
Reported results were performed on a MacBook Pro \chdeleted{[MacBookPro16,1]} with 8-Core Intel Core i9 (2.3 GHz) processor and Java 17 and Maven 3.9.
We report on results of ten executions, aggregated with average duration (in milliseconds) and standard deviation. 
Since the satisfiability assessment will be part of a process performed at query time, we set the upper limit of 5 seconds, after which we stop the execution and consider the algorithm not viable.
We leave possible optimisations of the algorithms and experimenting with a broader class of BGPs to future work.

\chadded{The proof-of-concept implementation and the experimental code is published and experiments can be reproduced~\cite{fxbgp}.}

\subsection{Comparing two algorithms}
We report on experiments with the two algorithms described in Section~\ref{sec:algorithms}. Both implementations perform all checks described in Section~\ref{sec:algorithms} (unsupported joins and cycles), before proceeding with the annotation of the BGPs.
With these experiments, we want to demonstrate the practical feasibility of the approach to solving \FX~BGP satisfiability problem.

\begin{table}[t]
    
\centering
    \caption{List of BGPs used in experiments}\label{tab:files}
    \begin{tabular}{lccp{5cm}}
    \hline 
\textbf{BGP label} & \textbf{Satisfiable?} &
\textbf{Triple patterns} &\textbf{Features} \\\hline \hline 
N\_1T   & No & 1 & only variables, no joins  \\
N\_2J   & No & 2 & only variables, with joins  \\
N\_2P\_R & No & 2 & multiple paths to fx:root \\
N\_2T   & No & 2 & only variables, no joins  \\
N\_3J   & No & 3 & only variables, no joins  \\
N\_3P\_C & No & 3 & multiple paths to a container \\
N\_3P\_R & No & 3 & multiple paths to fx:root \\
N\_3T & No & 3 & only variables, no joins  \\
N\_4J & No & 3 & only variables, with joins  \\
N\_4P\_C & No & 4 & multiple paths to a container \\
N\_4T & No & 4 & only variables, no joins  \\
N\_5J & No & 5 & only variables, with joins  \\
N\_5P\_C & No & 5 & multiple paths to a container \\
N\_5T & No & 5 & only variables, no joins  \\
S\_1T & Yes & 1 & only variables, no joins  \\
S\_2J & Yes & 2 & only variables, with joins  \\
S\_2P\_R & Yes & 2 & multiple paths to fx:root \\
S\_2T & Yes & 2 & only variables, no joins  \\
S\_3J & Yes & 3 & only variables, with joins  \\
S\_3P\_C & Yes & 3 & multiple paths to a container \\
S\_3T & Yes & 3 & only variables, no joins  \\
S\_4J & Yes & 4 & only variables, with joins  \\
S\_4P\_C & Yes & 4 & multiple paths to a container \\
S\_4T & Yes & 4 & only variables, no joins  \\
S\_5P\_C & Yes & 5 & multiple paths to a container \\
S\_5T & Yes & 5 & only variables, no joins  \\\hline
\end{tabular}
\end{table}

\subsubsection{Data}
We generated 27 test BGPs with different characteristics.
These are listed in Table~\ref{tab:files}.
BGP labels are informative:
\begin{itemize}
    \item S: the bgp is satisfiable
    \item N: the bgp is not satisfiable
    \item \textit{number}: the number of triples in the bgp
    \item T: the bgp contains only variables without joins
    \item J: the bgp contains one or more joins
    \item P: the bgp contains a join on an object and must satisfy the single-path-to-root condition
    \item R: the bgp contains a join on an object that is fx:root
    \item C: the bgp contains a join on an object that is a container
\end{itemize}

\begin{table}

\centering
\caption{Results for algorithm Top down (Search): only satisfiability}\label{tab:ex:sea:sat}

\begin{tabular}{l c c c c r r r r}
\hline
\textbf{name} & \textbf{satisfiable?} & \textbf{found} & \textbf{type} & \textbf{size} & \textbf{\thead{ms\\ (avg)}} & \textbf{\thead{ms\\ (std)}} & \textbf{\thead{tested\\ (avg)}} & \textbf{\thead{tested\\ (std)}} \\
\hline\hline
N\_1T & FALSE & 0 & T & 1 & 1.7 & 0.67 & 12 & 0 \\
\hline
N\_2J & FALSE & 0 & J & 2 & 0 & 0 & 0 & 0 \\
\hline
N\_2P\_R & FALSE & 0 & P & 2 & 71 & 20.11 & 1957 & 0 \\
\hline
N\_2T & FALSE & 0 & T & 2 & 478.8 & 84.63 & 16364 & 0 \\
\hline
N\_3J & FALSE & 0 & J & 3 & 0.1 & 0.32 & 0 & 0 \\
\hline
N\_3P\_C & FALSE & 0 & P & 3 & 2446.7 & 211.66 & 109601 & 0 \\
\hline
N\_3P\_R & FALSE & 0 & P & 3 & 266.7 & 16.77 & 12330 & 0 \\
\hline
\rowcolor{lightgray}N\_3T & FALSE & -1 & T & 3 & -1 & 0 & -1 & 0 \\
\hline
N\_4J & FALSE & 0 & J & 4 & 0 & 0 & 0.9 & 0.74 \\
\hline
\rowcolor{lightgray}N\_4P\_C & FALSE & -1 & P & 4 & -1 & 0 & -1 & 0 \\
\hline
\rowcolor{lightgray}N\_4T & FALSE & -1 & T & 4 & -1 & 0 & -1 & 0 \\
\hline
N\_5J & FALSE & 0 & J & 5 & 0 & 0 & 5.4 & 1.65 \\
\hline
\rowcolor{lightgray}N\_5P\_C & FALSE & -1 & P & 5 & -1 & 0 & -1 & 0 \\
\hline
\rowcolor{lightgray}N\_5T & FALSE & -1 & T & 5 & -1 & 0 & -1 & 0 \\
\hline
S\_1T & TRUE & 1 & T & 1 & 0.6 & 0.52 & 31 & 3.2 \\
\hline
S\_2J & TRUE & 1 & J & 2 & 0.5 & 0.53 & 16.3 & 1.49 \\
\hline
S\_2P\_R & TRUE & 1 & P & 2 & 17.9 & 1.73 & 636.3 & 15.81 \\
\hline
S\_2T & TRUE & 1 & T & 2 & 0.3 & 0.48 & 14.8 & 2.49 \\
\hline
S\_3J & TRUE & 1 & J & 3 & 1146.4 & 68.34 & 193106 & 1468.36 \\
\hline
S\_3P\_C & TRUE & 1 & P & 3 & 73.4 & 7.35 & 13808.7 & 250.37 \\
\hline
S\_3T & TRUE & 1 & T & 3 & 0.2 & 0.42 & 57.9 & 5.74 \\
\hline
S\_4J & TRUE & 1 & J & 4 & 2671.5 & 196.85 & 439371.1 & 3103.24 \\
\hline
\rowcolor{lightgray}S\_4P\_C & TRUE & -1 & P & 4 & -1 & 0 & -1 & 0 \\
\hline
S\_4T & TRUE & 1 & T & 4 & 0.3 & 0.48 & 90.4 & 9.58 \\
\hline
\rowcolor{lightgray}S\_5P\_C & TRUE & -1 & P & 5 & -1 & 0 & -1 & 0 \\
\hline
S\_5T & TRUE & 1 & T & 5 & 1 & 0 & 173.3 & 16.75 \\
\hline

\end{tabular}

\end{table}

\begin{comment}
%%%%% First round of experiments
\begin{table}

\begin{tabular}{lc  c  c  rrrrr}
\hline
\textbf{name} & \textbf{satisfiable} & \textbf{found} & \textbf{type} & \textbf{size} & \textbf{\thead{ms\\(avg)}} & \textbf{\thead{ms\\(std)}} & \textbf{\thead{gen.\\(avg)}} & \textbf{\thead{gen.\\(std)}} \\\hline\hline
S\_1T & TRUE & 6 & T & 1 & 4.5 & 0.53 & 159.7 & 12.75 \\

S\_2J & TRUE & 36 & J & 2 & 527.1 & 117.26 & 66061.7 & 3164.61 \\

S\_2P\_R & TRUE & 1 & P & 2 & 26.2 & 1.55 & 5778.1 & 334.17 \\

S\_2T & TRUE & 36 & T & 2 & 2930 & 151.26 & 585926.3 & 8903.79 \\

\rowcolor{lightgray}S\_3J & TRUE & -1 & J & 3 & -1 & >5s & -1 & - \\

\rowcolor{lightgray}S\_3P\_C & TRUE & -1 & P & 3 & -1 & >5s & -1 & - \\

\rowcolor{lightgray}S\_3T & TRUE & -1 & T & 3 & -1 & >5s & -1 & - \\

\rowcolor{lightgray}S\_4J & TRUE & -1 & J & 4 & -1 & >5s & -1 & - \\

\rowcolor{lightgray}S\_4P\_C & TRUE & -1 & P & 4 & -1 & >5s & -1 & - \\

\rowcolor{lightgray}S\_4T & TRUE & -1 & T & 4 & -1 & >5s & -1 & - \\

\rowcolor{lightgray}S\_5P\_C & TRUE & -1 & P & 5 & -1 & >5s & -1 & - \\

\rowcolor{lightgray}S\_5T & TRUE & -1 & T & 5 & -1 & >5s & -1 & - \\
\hline
\end{tabular}

\centering
\caption{Results for algorithm Top down (Search): generate all satisfiable annotations}\label{tab:ex:sea:all}

\end{table}
\end{comment}

%%%% Second round of experiments

\begin{table}

\centering
\caption{Results for algorithm Top down (Search): generate all satisfiable annotations}\label{tab:ex:sea:all}

\begin{tabular}{l c c c c r r r r}
\hline
\textbf{name} & \textbf{satisfiable?} & \textbf{found} & \textbf{type} & \textbf{size} & \textbf{\thead{ms\\ (avg)}} & \textbf{\thead{ms\\ (std)}} & \textbf{\thead{tested\\ (avg)}} & \textbf{\thead{tested\\ (std)}} \\
\hline\hline
S\_1T & TRUE & 6 & T & 1 & 6 & 0.67 & 187.4 & 17.88 \\
\hline
S\_2J & TRUE & 36 & J & 2 & 541.9 & 62.94 & 72895.7 & 795.27 \\
\hline
S\_2P\_R & TRUE & 1 & P & 2 & 33.8 & 9.38 & 6445.1 & 291.25 \\
\hline
S\_2T & TRUE & 32.3 & T & 2 & 2917.4 & 1055.36 & 603611 & 212133.29 \\
\hline
\rowcolor{lightgray}S\_3J & TRUE & -1 & J & 3 & -1 & 0 & -1 & 0 \\
\hline
\rowcolor{lightgray}S\_3P\_C & TRUE & -1 & P & 3 & -1 & 0 & -1 & 0 \\
\hline
\rowcolor{lightgray}S\_3T & TRUE & -1 & T & 3 & -1 & 0 & -1 & 0 \\
\hline
\rowcolor{lightgray}S\_4J & TRUE & -1 & J & 4 & -1 & 0 & -1 & 0 \\
\hline
\rowcolor{lightgray}S\_4P\_C & TRUE & -1 & P & 4 & -1 & 0 & -1 & 0 \\
\hline
\rowcolor{lightgray}S\_4T & TRUE & -1 & T & 4 & -1 & 0 & -1 & 0 \\
\hline
\rowcolor{lightgray}S\_5P\_C & TRUE & -1 & P & 5 & -1 & 0 & -1 & 0 \\
\hline
\rowcolor{lightgray}S\_5T & TRUE & -1 & T & 5 & -1 & 0 & -1 & 0 \\
\hline

\end{tabular}

\end{table}

\begin{table}
\centering
\caption{Results for algorithm Bottom up (CSP): only satisfiability}\label{tab:ex:csp:sat}

\begin{tabular}{ lcccrrrr}
\hline
\textbf{name} & \textbf{satisfiable?} & \textbf{found} & \textbf{type} & \textbf{size} & \textbf{\thead{ms\\ (avg)}} & \textbf{\thead{ms\\ (std)}} & \textbf{tested} \\
\hline\hline
N\_1T & FALSE & 0 & T & 1 & 17.5 & 1.84 & 2 \\
\hline
N\_2J & FALSE & 0 & J & 2 & 0.3 & 0.48 & 2 \\
\hline
N\_2P\_R & FALSE & 0 & P & 2 & 6.2 & 0.79 & 36 \\
\hline
N\_2T & FALSE & 0 & T & 2 & 3.4 & 0.7 & 24 \\
\hline
N\_3J & FALSE & 0 & J & 3 & 0 & 0 & 24 \\
\hline
N\_3P\_C & FALSE & 0 & P & 3 & 2.9 & 0.88 & 18 \\
\hline
N\_3P\_R & FALSE & 0 & P & 3 & 8 & 3.16 & 36 \\
\hline
N\_3T & FALSE & 0 & T & 3 & 16.6 & 2.46 & 288 \\
\hline
N\_4J & FALSE & 0 & J & 4 & 0 & 0 & 288 \\
\hline
N\_4P\_C & FALSE & 0 & P & 4 & 2.8 & 0.92 & 54 \\
\hline
N\_4T & FALSE & 0 & T & 4 & 93.8 & 9.35 & 3456 \\
\hline
N\_5J & FALSE & 0 & J & 5 & 0.1 & 0.32 & 3456 \\
\hline
N\_5P\_C & FALSE & 0 & P & 5 & 7.4 & 1.58 & 162 \\
\hline
N\_5T & FALSE & 0 & T & 5 & 536.1 & 78.69 & 41472 \\
\hline
S\_1T & TRUE & 1 & T & 1 & 0.5 & 0.53 & 12 \\
\hline
S\_2J & TRUE & 1 & J & 2 & 0.1 & 0.32 & 144 \\
\hline
S\_2P\_R & TRUE & 1 & P & 2 & 0.6 & 0.52 & 36 \\
\hline
S\_2T & TRUE & 1 & T & 2 & 0.4 & 0.52 & 144 \\
\hline
S\_3J & TRUE & 1 & J & 3 & 1.4 & 0.7 & 432 \\
\hline
S\_3P\_C & TRUE & 1 & P & 3 & 0.8 & 0.42 & 18 \\
\hline
S\_3T & TRUE & 1 & T & 3 & 0.9 & 0.32 & 1728 \\
\hline
S\_4J & TRUE & 1 & J & 4 & 12.1 & 2.42 & 5184 \\
\hline
S\_4P\_C & TRUE & 1 & P & 4 & 0.5 & 0.53 & 54 \\
\hline
S\_4T & TRUE & 1 & T & 4 & 5.1 & 0.99 & 20736 \\
\hline
S\_5P\_C & TRUE & 1 & P & 5 & 0.3 & 0.48 & 162 \\
\hline
S\_5T & TRUE & 1 & T & 5 & 361.4 & 46.9 & 248832 \\
\hline

\end{tabular}

\end{table}

%%%%%% First round of experiments
\begin{comment}
\begin{table}
\begin{tabular}{lc  c  c  rrrr}
\hline
\textbf{name} & \textbf{satisfiable} & \textbf{found} & \textbf{type} & \textbf{size} & \textbf{\thead{ms\\(avg)}} & \textbf{\thead{ms\\(std)}} & \textbf{\thead{gen.}}\\\hline\hline
S\_1T & TRUE & 6 & T & 1 & 0 & 0 & 12 \\

S\_2J & TRUE & 36 & J & 2 & 0.9 & 0.57 & 144 \\

S\_2P\_R & TRUE & 1 & P & 2 & 0.4 & 0.52 & 36 \\

S\_2T & TRUE & 36 & T & 2 & 0.9 & 0.32 & 144 \\

S\_3J & TRUE & 60 & J & 3 & 6.1 & 1.45 & 1728 \\

S\_3P\_C & TRUE & 4 & P & 3 & 1.5 & 0.53 & 432 \\

S\_3T & TRUE & 216 & T & 3 & 9.7 & 2 & 1728 \\

S\_4J & TRUE & 300 & J & 4 & 76.5 & 5.72 & 20736 \\

S\_4P\_C & TRUE & 8 & P & 4 & 22.4 & 5.27 & 5184 \\

S\_4T & TRUE & 1296 & T & 4 & 133.7 & 24.73 & 20736 \\

S\_5P\_C & TRUE & 16 & P & 5 & 290.5 & 20.88 & 62208 \\

S\_5T & TRUE & 7776 & T & 5 & 1564.9 & 172.77 & 248832 \\
\hline
\end{tabular}
\centering
\caption{Results for algorithm Bottom up (CSP): generate all satisfiable annotations}\label{tab:ex:csp:all}
\end{table}
\end{comment}

%%%%%%%%% Second round of experiments Bottom-Up all solutions

\begin{table}
\centering
\caption{Results for algorithm Bottom up (CSP): generate all satisfiable annotations}\label{tab:ex:csp:all}

\begin{tabular}{lcccrrrr}
\hline
\textbf{name} & \textbf{satisfiable} & \textbf{found} & \textbf{type} & \textbf{size} & \textbf{\thead{ms\\(avg)}} & \textbf{\thead{ms\\(std)}} & \textbf{\thead{gen.}}\\\hline\hline
S\_1T & TRUE & 6 & T & 1 & 1 & 0.47 & 12 \\
\hline
S\_2J & TRUE & 36 & J & 2 & 1.1 & 0.32 & 144 \\
\hline
S\_2P\_R & TRUE & 1 & P & 2 & 0.4 & 0.52 & 36 \\
\hline
S\_2T & TRUE & 36 & T & 2 & 1.1 & 0.32 & 144 \\
\hline
S\_3J & TRUE & 60 & J & 3 & 3 & 0.47 & 432 \\
\hline
S\_3P\_C & TRUE & 4 & P & 3 & 0.5 & 0.53 & 18 \\
\hline
S\_3T & TRUE & 216 & T & 3 & 15.1 & 1.85 & 1728 \\
\hline
S\_4J & TRUE & 300 & J & 4 & 36.7 & 5.48 & 5184 \\
\hline
S\_4P\_C & TRUE & 8 & P & 4 & 0.9 & 0.32 & 54 \\
\hline
S\_4T & TRUE & 1296 & T & 4 & 171.9 & 31.8 & 20736 \\
\hline
S\_5P\_C & TRUE & 16 & P & 5 & 2.2 & 0.63 & 162 \\
\hline
S\_5T & TRUE & 7776 & T & 5 & 1754.5 & 270.5 & 248832 \\
\hline

\end{tabular}
\end{table}

%%%%%%%%%

\subsubsection{Results}
%\todo[inline]{Results of new experiments added. Check text coherency with new tables.}
Results are shown in \chreplaced{tables~\ref{tab:ex:sea:sat}-\ref{tab:ex:csp:all}.}{}
Table~\ref{tab:ex:sea:sat} show results for the Top down (Search) algorithm when doing satisfiability check, ie. the algorithm stops when the first satisfiable annotation is found.
We report the speed in average milliseconds and the standard deviation.
Furthermore, we show the number of solution patterns that are evaluated. When reporting intermediate solutions evaluated, we include both average and standard deviation, as the algorithm is not fully deterministic, as the temporary search states are not ordered and the process stops when the first solution is found (without completing the search).
%The number in parenthesis after the duration in ms shows the number of annotations in the search space.
It can be seen that the algorithm failed to return a result in five seconds in several cases.
Table~\ref{tab:ex:sea:all} shows results for the Top-down (Search) algorithm when generating all satisfiable annotations, ie. the algorithm proceeds to find all solution patterns.
BGPs with more than two triples required more than five seconds.
Results show how with just 2 triples (S\_2T), the algorithm generated 36 viable solutions after evaluating an average of 585926.3 $\pm$ 8903.79 possible annotations. We can conclude that this algorithm is inefficient.

Table~\ref{tab:ex:csp:sat} shows results for the Bottom up (CSP) algorithm when doing satisfiability check, ie. the algorithm stops when the first satisfiable annotation is found. 
Average and standard deviation for the number of solutions generated is fixed as the algorithm is fully deterministic (although the process stops when the first viable solution pattern is found).
Table~\ref{tab:ex:csp:all} shows results for the Bottom-up (CSP) algorithm when generating all satisfiable annotations, ie. the algorithm proceeds to find all solution patterns.
With this algorithm, we managed to evaluate all BGPs for satisfiability and, in addition, to generate all possible solution patterns, in the worst case, in less than two seconds.
However, the worst results (N\_5T) goes still above the second, which opens the question whether it would be possible to perform further optimisations.
We can conclude that the Bottom-up (CSP) algorithm is more efficient and practically sustainable. 

\chdeleted{The proof-of-concept implementation and the experimental code is published~\cite{fxbgp} and experiments can be reproduced:~\url{https://github.com/SPARQL-Anything/fxbgp}.}

\begin{figure}[t]
    \centering
    \includegraphics[trim=20 0 0 42, clip, width=0.8\linewidth]{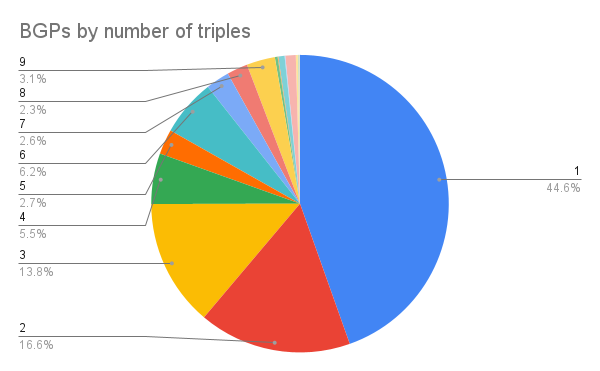}
    \caption{Dimension of BGPs from real-world queries: number of triples.}
    \label{fig:rwq:triples}
\end{figure}

\subsection{Evaluation with real world queries}\label{sec:ex:rwq}
\chadded{We evaluate the practical feasibility with basic graph patterns derived from real world queries. 
We only perform experiments with the most efficient, bottom up algorithm (Listing~\ref{algorithm2}).
Here, we summarise the data and results. 
We refer the reader to Appendix~\ref{app:rwq} for extensive details on data, results, and reproducibility.}

\subsubsection{Data}
\chadded{We extract 1,309 BGPs from 449 queries collected from public GitHub repositories. 228 of those queries have only a single BGP (50.7\%). 
As illustrated in figures~\ref{fig:rwq:triples} and~\ref{fig:rwq:variables}, 83.7\% of the queries have less than 5 BGPs, 44.6\% of the collected BGPs have a single triple, while 75.9\% of them have up to 5 variables. We can conclude that the majority of real world queries and related BGPs have few triples and variables, and have similar sizes to the curated BGPs designed to evaluate the two algorithms. See further details in Appendix~\ref{app:rwq}.}

\begin{figure}[t]
    \centering
    \includegraphics[trim=20 0 0 42, clip, width=0.8\linewidth]{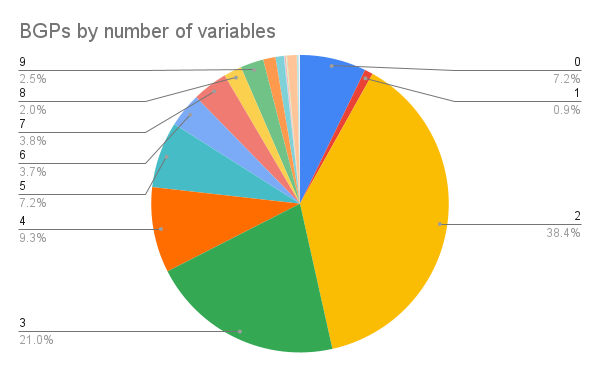}
    \caption{Dimension of BGPs from real-world queries: number of variables.}
    \label{fig:rwq:variables}
\end{figure}
\subsubsection{Results}
\chadded{Satisfiability (finding the first valid solution pattern) can be verified in less than 100ms for all BGPs with up to 10 triples or 14 variables.
The only problematic BGP is one with 19 triples and 19 variables
(‘query-77.rq‘ in the supplemental material), which cannot be verified in less than 5 seconds. 
Similarly, all solution patterns can be generated efficiently in all but two cases (below half second). 
The only two problematic BGPs are one with 19 triples and 19 variables (`query-77.rq` in the supplemental material), and another with 20 triples and 23 variables (`query-244.rq` in the supplemental material), for which it is not possible to generate all solution patterns in less than 5 seconds. See further details in Appendix~\ref{app:rwq}.}

\subsection{Evaluation of benefits with benchmark queries}
\chadded{We evaluate the advantages of assessing the satisfiability of a query before its evaluation. We  perform experiments with the most efficient, i.e. the bottom up algorithm.}

\subsubsection{Data}
\chadded{The evaluation relied on the  of the GTFS-Madrid-Bench  benchmark~\cite{chaves2020gtfs}, a virtual knowledge graph access benchmark aimed at assessing the performance of Ontology-based Data Access (OBDA) engines. More specifically, we use a specialisation of the benchmark~\cite{asprino2025materialisation} for  façade-based data access engines.}

\chadded{The benchmark comprises a set of 18 SPARQL queries that interrogate a collection of data sources.}
\chadded{Table~\ref{tab:gtfs_queries} (see Appendix~\ref{app:bench-queries}) reports the characteristics of the queries of the benchmark in terms of the  number of BGPs, the number of variables per BGP, and the number of triple patterns per BGP.}
    
\chadded{Furthermore, the benchmark offers a data generator that can increase the size of the original data and distribute the datasets across various formats.}

\chadded{In this experiment, we use JSON to represent hierarchical data formats and CSV for tabular data sources. 
We generate datasets of sizes 1 and 10 and execute the 18 queries on them using the SPARQL Anything engine (version v1.1.0)}\footnote{\url{https://github.com/SPARQL-Anything/sparql.anything/releases/tag/v1.1.0}}.
\chadded{We execute each query three times and record the time taken to construct and load the \FX{} RDF graph in memory for each execution. Finally, we calculate the average of these three runs.
Moreover, for each query of the benchmark, we run the bottom-up algorithm for assessing the query satisfiability.
Table~\ref{tab:gtfs_queries} (see Appendix~\ref{app:bench-queries}) reports construction and loading time and time for assessing the satisfiability of the queries.}

\subsubsection{Results}
\chadded{The satisfiability of the GTFS-Madrid-Bench queries can be assessed in less than 15 ms. 
Excluding the first query reduces the maximum time to 2.4.
In contrast, construction and loading times are at least two orders of magnitude longer (the shortest recorded time was 161.5 ms for query 6, size 1, with CSV input). This  demonstrates the benefit of assessing satisfiability prior to query execution, since construction and loading time can be avoided for unsatisfiable queries, thus saving computational resources.
Naturally, the larger the input, the greater the benefit: while construction and loading times increase with the size of the input, assessment of satisfiability is independent from it.}

\section{Conclusions}\label{sec:conclusions}%\todo[inline]{Add discussion on limit cases and future work on optimizations...}
In this article, we studied the satisfiability of SPARQL basic graph patterns in \FX{}.
\chadded{We introduced a consolidated version of \FX{} including formalised mappings to RDF. 
We studied the conditions under which a BGPs are satisfiable as the basis for an algorithm to verify that efficiently. 
Extensive experiments demonstrate the feasibility of satisfiability check. Two extreme cases of large BGPs motivate future work on further improving the performance, focusing on reducing the space of potential \FX{} annotations before verifying the constraints.}
In the future, \chadded{we }will extend this study to consider specific \FX{} profiles, studying how those can bring additional constraints to verify (e.g. \FX{} graphs from CSVs will never have types).
Crucially, identifying solution patterns is a preliminary step for investigating classes of queries (e.g. star queries) whose solutions can be streamed directly from the file sources, without needing to load intermediate results in-memory.
Finally, our work paves the way for studying query execution strategies for Façade-X data access with SPARQL and supporting developers in building more efficient data integration systems for knowledge graphs.

%%
%% Bibliography
%%

%% Please use bibtex, 
\bibliography{main}

\appendix
\section{Experiments with real-world queries}\label{app:rwq}
We report on experiments with real world SPARQL queries tailored to facade based data access.
The queries were collected from the GitHub API on 18 December 2025 and include all files published in public repositories with extensions `.rq` and `.sparql`, which include the string `x-sparql-anything` in the text. 
The code and data are published\footnote{\url{https://github.com/SPARQL-Anything/facade-x-queries-on-github/}} and archived~\cite{fxqueries}.
The analysis was performed on Google drive and can be viewed at~\url{https://docs.google.com/spreadsheets/d/11n7c101WLQBL8aApZs-fXNtHAYTIBIP9nZUDxD6TUa8/edit}.

We perform experiments for both satisfiability and to generate all possible solution patterns on 449 collected queries.
%In our experiments, we test our algorithm with BGPs extracted from real world queries. 
We extract 1430 BGPs from the 449 queries.
Table~\ref{tab:rwq:queriesbybgp} shows a summary of the queries grouped by number of BGPs.
Of these queries, 58 cannot be parsed because they include syntax errors.
228 of the remaining queries have a single BGP.
The majority of queries have less than five basic graph patterns.
Table~\ref{tab:rwq:bgpbytriples} summarizes the BGPs extracted from the queries, to which we apply our satisfiability problem.
103 BGPs with 0 triples refer to BGPs that included SPARQL Anything configuration options, that we exclude.
Figure~\ref{fig:rwq:bgpbyvar} shows the number of BGPs by the number of variables included.
Clearly, the more the variables, the more complex it is to verify satisfiability, since variables are annotated with more potential annotations.
The majority of BGPs have between 1 and 5 variables.
In what follows, we show the average values of 10 executions.

\begin{table}[t]
\centering
\caption{Number of queries by BGPs.}\label{tab:rwq:queriesbybgp}
\begin{tabular}{l | r}
\hline
\textbf{Queries} & \textbf{BGPs} \\\hline\hline
1 & 40 \\
1 & 31 \\
1 & 29 \\
2 & 27 \\
2 & 26 \\
1 & 25 \\
1 & 24 \\
2 & 22 \\
1 & 20 \\
1 & 17 \\
2 & 13 \\
1 & 12 \\
11 & 11 \\
9 & 10 \\
4 & 9 \\
1 & 8 \\
6 & 7 \\
4 & 6 \\
22 & 5 \\
32 & 4 \\
37 & 3 \\
79 & 2 \\
228 & 1 \\
\hline
\end{tabular}
\end{table}

\begin{table}[t]
\centering
\caption{Basic graph patterns grouped by number of triples (that are not SPARQL Anything configuration options).}\label{tab:rwq:bgpbytriples}
\begin{tabular}{l l}
\hline
\textbf{Number of triples} & \textbf{Number of BGPs} \\\hline\hline
0 & 103 \\
1 & 591 \\
2 & 220 \\
3 & 183 \\
4 & 73 \\
5 & 36 \\
6 & 82 \\
7 & 34 \\
8 & 30 \\
9 & 41 \\
10 & 4 \\
11 & 1 \\
12 & 9 \\
13 & 1 \\
14 & 15 \\
16 & 3 \\
18 & 1 \\
19 & 1 \\
20 & 1 \\
\hline

\end{tabular}
\end{table}

\subsection{Satisfiability check}
We perform the same experiment setting as with the designed queries (Section~\ref{sec:experiments}), focusing on the most performing bottom-up algorithm only, with the 449 real world queries.

Table~\ref{tab:rwq:satonlyt} shows the average duration of the more efficient bottom-up satisfiability check algorithm presented in Section~\ref{sec:algorithms}. 
Data is grouped by BGPs with the same number of triples. 
BGPs with 0 triples are the ones that only included triples with SPARQL Anything configuration, and can be ignored.
Table~\ref{tab:rwq:satonlyv} shows the average duration grouped by BGPs with the same number of variables.
We can see how for BGPs with up to 16 triples and 18 variables (except one with 15 variables), satisfiability can be verified in less than 100 ms.
The only problematic BGP is one with 19 triples and 19 variables (`query-77.rq` in the supplemental material), which cannot be verified in less than 5 seconds.
We ignore errors from two other BGPs, one within a DBPedia service clause (`query-66.rq`), and another coming from a negative test within a SPARQL Anything code base (`query-199.rq`).

\subsection{Generating all solutions}
We perform the same experiment setting as with the designed queries (Section~\ref{sec:experiments}), focusing on the most performing bottom-up alogorithm only, with the 449 real world queries.

Table~\ref{tab:rwq:solutionst} shows the average duration of the more efficient bottom-up algorithm presented in Section~\ref{sec:algorithms} to generate all possible solution patterns. 
Data is grouped by BGPs with the same number of triples. 
In addition, Table~\ref{tab:rwq:solutionsv} shows the average duration grouped by BGPs with the same number of variables.
Solutions can be generated efficiently in all but two cases (below half second). 
The only two problematic BGPs are one with 19 triples and 19 variables (`query-77.rq` in the supplemental material), and another with 20 triples and 23 variables (`query-244.rq` in the supplemental material), for which it is not possible to generate all solution patterns in less than 5 seconds.

%%%%%%%%

% \begin{table}
% \centering

% \begin{tabular}{l | r}
% \textbf{Variables} & \textbf{BGPs} \\
% 0 & 103 \\
% 1 & 13 \\
% 2 & 548 \\
% 3 & 300 \\
% 4 & 133 \\
% 5 & 103 \\
% 6 & 53 \\
% 7 & 55 \\
% 8 & 28 \\
% 9 & 36 \\
% 10 & 19 \\
% 11 & 13 \\
% 12 & 1 \\
% 13 & 2 \\
% 14 & 1 \\
% 15 & 1 \\
% 16 & 15 \\
% 18 & 3 \\
% 19 & 1 \\
% 23 & 1 \\

% \end{tabular}
% \caption{BGPs by number of variables.}\label{tab:rwq:bgpbyvar}
% \end{table}

\begin{figure}
    \centering\includegraphics[trim=40 40 0 42, clip, width=1\linewidth]{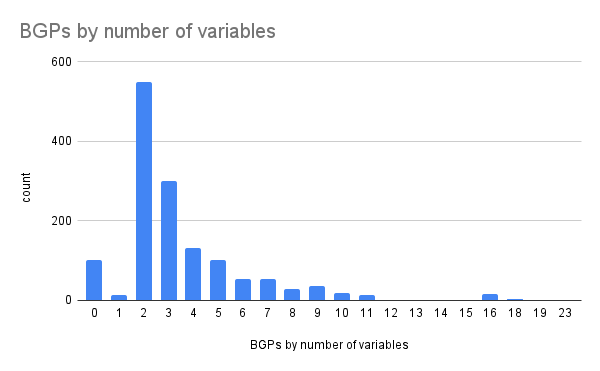}
    \caption{BGPs by number of variables.}
    \label{fig:rwq:bgpbyvar}
\end{figure}

\begin{table}[t]
\centering
\caption{Performance of the bottom-up satisfiability check algorithm on real world queries. 
We group the data by size of BGPs (number of triples). 
There is only 1 BGP with 11,13,18,19, and 20 triples.
One query with 19 triples cannot be evaluated in less the 5 seconds (our limit).}\label{tab:rwq:satonlyt}
\begin{tabular}{r r r}
\hline
\textbf{Triples} & \textbf{AVG} & \textbf{STD} \\\hline\hline
1 & 0.10 & 0.3 \\
2 & 0.11 & 0.37 \\
3 & 0.14 & 0.27 \\
4 & 0.14 & 0.36 \\
5 & 0.41 & 0.29 \\
6 & 0.28 & 0.17 \\
7 & 0.48 & 0.15 \\
8 & 1.36 & 0.67 \\
9 & 0.48 & 0.25 \\
10 & 1.15 & 0.58 \\
11 & 0.20 & - \\
12 & 6.73 & 1.81 \\
13 & 8.30 & - \\
14 & 0.33 & 0.07 \\
16 & 0.43 & 0.02 \\
18 & 218.30 & - \\
\rowcolor{lightgray}19 & >5s & - \\
20 & 8.20 & - \\\hline
\end{tabular}
\end{table}

\begin{table}[t]
\centering
\caption{Performance of the bottom-up satisfiability check algorithm on real world queries. 
We group the data by the size of BGPs - number of variables. 
There are only single BGPs with 12,14,15,19, and 23 variables.
One query with 19 variables cannot be evaluated in less the 5 seconds (our limit).}\label{tab:rwq:satonlyv}
\begin{tabular}{r r r}
\hline
\textbf{Variables} & \textbf{AVG} & \textbf{STD} \\\hline\hline
1 & 0.05 & 0.18 \\
2 & 0.11 & 0.31 \\
3 & 0.11 & 0.26 \\
4 & 0.15 & 0.42 \\
5 & 0.25 & 0.22 \\
6 & 0.27 & 0.41 \\
7 & 0.37 & 0.13 \\
8 & 0.22 & 0.18 \\
9 & 1.65 & 0.69 \\
10 & 0.42 & 0.26 \\
11 & 4.09 & 1.61 \\
12 & 0.20 & - \\
13 & 4.30 & 0.77 \\
14 & 0.20 & - \\
15 & 218.30 & - \\
16 & 0.33 & 0.07 \\
18 & 0.43 & 0.02 \\
\rowcolor{lightgray}19 & >5s & - \\
23 & 8.20 & - \\
\hline
\end{tabular}

% \todo[inline]{@Enrico, double check table headers}

\end{table}
\begin{table}
\centering
\caption{Performance of the bottom-up algorithm on real world queries, when generating all solution patterns. 
We group the data by size of BGPs (number of triples). 
There is only 1 BGP with 11,13,18,19, and 20 triples.
For two queries with 19 and 20 triples, the complete set of solution patterns cannot be evaluated in less the 5 seconds (our limit).}\label{tab:rwq:solutionst}
\begin{tabular}{r r r}
\hline
BGP Size (number of triples) & AVG & STD \\\hline\hline
1 & 0.05 & 0.37 \\
2 & 0.08 & 0.19 \\
3 & 0.13 & 0.25 \\
4 & 0.23 & 0.16 \\
5 & 8.05 & 6.27 \\
6 & 1.36 & 0.82 \\
7 & 2.23 & 1.01 \\
8 & 7.79 & 1.75 \\
9 & 5.51 & 1.19 \\
10 & 38.58 & 3.21 \\
11 & 21.50 & - \\
12 & 58.10 & 20.53 \\
13 & 163.70 & - \\
14 & 191.97 & 12.58 \\
16 & 1,250.00 & 40.42 \\
18 & 1,127.40 & - \\\hline
\rowcolor{lightgray}19 & >5s & - \\
\rowcolor{lightgray}20 & >5s & - \\
\end{tabular}
\end{table}

\begin{table}
\centering
\centering
\caption{Performance of the bottom-up algorithm on real world queries. 
We group the data by the size of BGPs - number of variables. 
There are only single BGPs with 12,14,15,19, and 23 variables.
For two queries with 19 and 23 variables the set of solution patterns cannot be generated in less the 5 seconds (our limit).}\label{tab:rwq:solutionsv}
\begin{tabular}{r r r}
\hline
BGP Size (number of variables) & AVG & STD \\\hline\hline
1 & 0.02 & 0.14 \\
2 & 0.04 & 0.36 \\
3 & 0.08 & 0.29 \\
4 & 0.14 & 0.32 \\
5 & 0.24 & 0.14 \\
6 & 0.46 & 0.16 \\
7 & 1.21 & 0.58 \\
8 & 1.19 & 0.23 \\
9 & 8.47 & 2.01 \\
10 & 25.58 & 8.36 \\
11 & 21.15 & 3.1 \\
12 & 21.50 & - \\
13 & 256.55 & 35.91 \\
14 & 67.40 & - \\
15 & 1,127.40 & - \\
16 & 191.97 & 12.58 \\
18 & 1,250.00 & 40.42 \\
\rowcolor{lightgray}19 & -1.00 & - \\
\rowcolor{lightgray}23 & -1.00 & - \\\hline
\end{tabular}
\end{table}

\section{Experiments with benchmark queries}\label{app:bench-queries}
Table~\ref{tab:gtfs_queries} describes the characteristics of the queries of the GTFS-Madrid-Bench benchmark in terms of number of BGPs for each query, number of variables and number of triple patterns per BGP.

\begin{table}[t]
\centering

\caption{Characteristics of the queries contained in the GTFS-Madrid-Bench benchmark.}
\label{tab:gtfs_queries}
\begin{tabular}{p{0.2\textwidth}p{0.2\textwidth}p{0.2\textwidth}p{0.2\textwidth}}
\hline
\textbf{Query ID} & \textbf{Number of BGPs} & \textbf{Number of variables per BGP} & \textbf{Number of Triple Patterns per BGP} \\
\hline\hline
1  & 1  & 5 & 4 \\
2  & 4  & 2,2,2,3 & 1,1,1,2 \\
3  & 5  & 2,2,2,3,2 & 1,1,1,2,1 \\
4  & 7  & 2,2,2,2,2,4,2 & 1,1,1,1,1,3,1 \\
5  & 1  & 4 & 3 \\
6  & 1  & 3 & 2 \\
7  & 10 & 2,2,2,3,2,2,3,2,2,3 & 1,1,1,2,1,1,2,2,1,2 \\
8  & 11 & 2,2,2,3,2,2,4,2,2,2,3 & 1,1,1,2,1,1,3,1,1,1,2 \\
9  & 3  & 5,2,4 & 4,1,3 \\
10 & 2  & 2,3 & 1,2 \\
11 & 4  & 3,4,1,2 & 2,3,2,1 \\
12 & 4  & 3,3,3,3 & 2,2,2,3 \\
13 & 3  & 3,2,4 & 2,1,3 \\
14 & 4  & 5,3,2,2 & 4,2,1,1 \\
15 & 1  & 4 & 2 \\
16 & 2  & 4,3 & 3,3 \\
17 & 4  & 2,2,4,4 & 1,1,3,3 \\
18 & 5  & 2,4,2,2,2 & 2,3,1,1,1 \\
\hline
\end{tabular}
\end{table}

Table ~\ref{table:loading_time_sat_time_gtfs_queries} shows how long it takes to construct and load the RDF graph required to answer each query into memory.
Additionally, the time required to compute the satisfiability of all BGPs in each query is reported.

\begin{table}[t]

\caption{Construction Loading time and time for assessing the satisfiability of the queries in the GTFS-Madrid-Bench benchmark. }
\label{table:loading_time_sat_time_gtfs_queries}
\begin{tabular}{c|cccc|p{5cm}}

\hline
\textbf{Query}     & 
\multicolumn{4}{c|}{\textbf{Construction + Loading time (ms)}}  & 
\textbf{Time for computing the satisfiability of all BGPs (ms)} \\

     & 
\multicolumn{2}{c}{\textbf{CSV}} & 
\multicolumn{2}{c|}{\textbf{JSON}}  &   \\

 & \textbf{Size 1} & \textbf{Size 10} & \textbf{Size 1} & \textbf{Size 10} & \\\hline\hline
 
1                         & 1049.5                    & 7476.5                     & 1221.5                    & 6252                       & 14.40                                                                                      \\
2                         & 310.5                     & 607.5                      & 284                       & 603.5                      & 0.80                                                                                       \\
3                         & 402.5                     & 595                        & 301                       & 673.5                      & 0.70                                                                                       \\
4                         & 283                       & 251                        & 218                       & 241.5                      & 1.70                                                                                       \\
5                         & 166.5                     & 230                        & 215.5                     & 222                        & 0.20                                                                                       \\
6                         & 161.5                     & 177                        & 166                       & 218                        & 0.10                                                                                       \\
7                         & 390                       & 1031.5                     & 390                       & 1151.5                     & 2.40                                                                                       \\
8                         & 449                       & 1598.5                     & 410                       & 1627                       & 1.10                                                                                       \\
9                         & 1390                      & 6078.5                     & 1260                      & 5609.5                     & 0.70                                                                                       \\
10                        & 292                       & 656.5                      & 292                       & 609.5                      & 0.90                                                                                       \\
11                        & 240.5                     & 331.5                      & 218.5                     & 306                        & 1.80                                                                                       \\
12                        & 379.5                     & 851                        & 305                       & 841                        & 1.70                                                                                       \\
13                        & 275.5                     & 566.5                      & 246.5                     & 515                        & 1.60                                                                                       \\
14                        & 327                       & 1107                       & 348.5                     & 1173                       & 1.20                                                                                       \\
15                        & 270                       & 1340                       & 289                       & 1291                       & 0.40                                                                                       \\
16                        & 202.5                     & 299.5                      & 213                       & 294.5                      & 0.40                                                                                       \\
17                        & 233                       & 438                        & 264.5                     & 422.5                      & 0.80                                                                                       \\
18                        & 214                       & 309                        & 237                       & 298                        & 0.80                                                                                      \\\hline
\end{tabular}
\end{table}

% \break\listofchanges

\end{document}